\definecolor{myblue}{rgb}{.8, .8, 1}
\definecolor{mathblue}{rgb}{0.2472, 0.24, 0.6} 
\definecolor{mathred}{rgb}{0.6, 0.24, 0.442893}
\definecolor{mathyellow}{rgb}{0.6, 0.547014, 0.24}
\definecolor{lightblue}{rgb}{.90,.95,1}
\tikzstyle{int}=[draw, fill=blue!20, minimum size=2em]
\tikzstyle{dot}=[circle, draw, fill=blue!20, minimum size=2em]
\tikzstyle{init} = [pin edge={to-,thin,black}]
\pgfplotsset{
    standard/.style={
        axis x line=middle,
        axis y line=middle,
        every axis x label/.style={at={(current axis.right of origin)},anchor=north west},
        every axis y label/.style={at={(current axis.above origin)},anchor=north west}
    }
}
\newcommand{\betau}{\overline{\beta}^*}
\newcommand{\betal}{\underline{\beta}^*}
\newcommand{\betas}{\beta^{\sharp}}
\newcommand{\HC}{\mathrm{HC}}
\newcommand{\Ray}{\mathrm{Ray}}
\newcommand{\eg}{e.g.\xspace}
\newcommand{\ie}{i.e.\xspace}
\newcommand{\iid}{i.i.d.\xspace}
\newcommand{\reals}{{\mathbb{R}}}
\newcommand{\naturals}{{\mathbb{N}}}
\newcommand{\eexp}{{\rm e}}
\newcommand{\diff}{{\rm d}}
\newcommand{\expect}[1]{\mathbb{E}\left[ #1 \right]}
\newcommand{\Prob}{\mathbb{P}}
\newcommand{\prob}[1]{{ \mathbb{P}\left\{ #1 \right\} }}
\newcommand{\toprob}{\xrightarrow{\Prob}}
\newcommand{\eqdistr}{{\stackrel{\rm (d)}{=}}}
\newcommand{\iiddistr}{{\stackrel{\text{\iid}}{\sim}}}
\newcommand{\var}{\mathsf{var}}
\newcommand{\Bern}{\text{Bern}}
\newcommand{\diverge}{\to \infty}
\theoremstyle{remark}
\newtheorem{remark}{Remark}}{
\theoremstyle{plain}
\newtheorem{lemma}{Lemma}
\newtheorem{theorem}{Theorem}
\theoremstyle{definition}
\newtheorem{definition}{Definition}
\newtheorem{remark}{Remark}
\theoremstyle{plain}
\newtheorem{coro}{Corollary}
\newcommand{\lunder}[1]{{\underset{\raise0.3em\hbox{$\smash{\scriptscriptstyle-}$}}{#1}}}
\newcommand{\floor}[1]{{\left\lfloor {#1} \right \rfloor}}
\newcommand{\ceil}[1]{{\left\lceil {#1} \right \rceil}}
\newcommand{\linf}[1]{\left\|{#1} \right\|_{\infty}}
\newcommand{\esssup}{\mathop{\mathrm{ess \, sup}}}
\newcommand{\essinf}{\mathop{\mathrm{ess \, inf}}}
\newcommand{\indc}[1]{{\mathbf{1}_{\left\{{#1}\right\}}}}
\newcommand{\Indc}{\mathbf{1}}
\def\innergetnumber#1[#2]#3{#2}
\def\getnumber{\expandafter\innergetnumber\jobname}
\newcommand{\bbF}{{\mathbb{F}}}
\newcommand{\sfE}{{\mathsf{E}}}
\newcommand{\calE}{{\mathcal{E}}}
\newcommand{\calF}{{\mathcal{F}}}
\newcommand{\calN}{{\mathcal{N}}}
\newcommand{\tG}{{\tilde{G}}}
\newcommand{\comp}[1]{{#1^{\rm c}}}
\newcommand{\ntok}[2]{{#1,\ldots,#2}}
\newcommand{\renyi}{R\'enyi\xspace}
\newcommand{\pth}[1]{\left( #1 \right)}
\newcommand{\sth}[1]{\left\{ #1 \right\}}
\newcommand{\fracd}[2]{\frac{\diff #1}{\diff #2}}
\newcommand{\TV}{{\sf TV}}
\newcommand{\xxx}{\textbf{xxx}\xspace}
\author{T. Tony Cai\thanks{\url{tcai@wharton.upenn.edu}} } 
\author{Yihong Wu\thanks{\url{yihongwu@wharton.upenn.edu}}}
\affil{Department of Statistics \\
The Wharton School\\
University of Pennsylvania\\
Philadelphia, PA 19104, USA}
\title{Optimal Detection For Sparse Mixtures\footnote{The research  was supported in part
by NSF FRG   Grant DMS-0854973.}}
\date{\today}
\begin{document}
\DeclareGraphicsExtensions{.pgf}
\graphicspath{{figures/}}

\maketitle

\let\oldthefootnote\thefootnote
\renewcommand{\thefootnote}{\fnsymbol{footnote}}
\let\thefootnote\oldthefootnote

\begin{abstract}
Detection of sparse signals arises in a wide range of modern scientific studies. The focus so far has been mainly on Gaussian mixture models. In this paper, we consider the detection problem under a general sparse mixture model and obtain an explicit expression for the detection boundary. It is shown that the fundamental limits of detection is governed by the behavior of the log-likelihood ratio evaluated at an appropriate quantile of the null distribution. We also establish the adaptive optimality of the higher criticism procedure across all sparse mixtures satisfying certain mild regularity conditions. In particular, the general results obtained in this paper recover and extend  in a unified manner the previously known results on sparse detection far beyond the conventional Gaussian model and other exponential families. 

\end{abstract}

\vspace{1em}
\textbf{Keywords:} Hypothesis testing, high-dimensional statistics, sparse mixture, higher criticism, adaptive tests, total variation, Hellinger distance.

%

\newpage
\section{Introduction}
	\label{sec:into}
			
Detection of sparse  mixtures is an important problem that arises in many scientific applications such as signal processing \cite{Dobrushin58}, biostatistics \cite{JCL10}, and astrophysics \cite{CJT05, JSDAF05}, where the goal is to determine the existence of a signal which only appears in a small fraction of the noisy data. For example,  topological defects and Doppler effects manifest themselves as non-Gaussian convolution component in the Cosmic Microwave Background (CMB) temperature fluctuations. Detection of non-Gaussian signatures are important to identify cosmological origins of many phenomena \cite{JSDAF05}.  Another example is disease surveillance where it is critical to discover an outbreak when the infected population is small \cite{Kulldorff05}. 
The detection problem is of significant interest also because it is closely connected to a number of other important problems including estimation, screening, large-scale multiple testing, and classification. See, for example,  \cite{CJJ11}, \cite{CJL07}, \cite{DJ.HC}, \cite{Hall08}, and \cite{JCL10}.

\subsection{Detection of sparse binary vectors}
	\label{sec:intro.detection}	
	One of the earliest work on sparse mixture detection dates back to Dobrushin \cite{Dobrushin58}, who considered the following problem originating from multi-channel detection in radiolocation. Let $\Ray(\alpha)$ denote the Rayleigh distribution with the density $\frac{2y}{\alpha}\exp(-\frac{y^2}{\alpha}), y \geq 0$. Let $\{Y_i\}_{i=1}^n$ be independently distributed according to $\Ray(\alpha_i)$, representing the random voltages observed on the $n$ channels. In the absence of noise, $\alpha_i$'s are all equal to one, the nominal value; while in the presence of signal, exactly one of the $\alpha_i$'s
	becomes a known value $\alpha > 1$. Denoting the uniform distribution on $[n]$ by $U_n$, the goal is to test the following competing hypotheses
			\begin{equation}
H_{0}^{(n)}:  \alpha_i = 1,  i \in [n], \quad \mbox{versus}\quad
H_{1}^{(n)}:  \alpha_i = 1 + (\alpha -1) \indc{i=J}, \quad
J \sim U_n \, .
	\label{eq:HT.dobrushin}
\end{equation}
Since the signal only appears once out of the $n$ samples, in order for the signal to be distinguishable from noise, it is necessary for the amplitude $\alpha$ to grow with the sample size $n$ (in fact, at least logarithmically). By proving that the log-likelihood ratio converges to a stable distribution in the large-$n$ limit, Dobrushin \cite{Dobrushin58} obtained sharp asymptotics of the smallest $\alpha$ in order to achieve the desired false alarm and miss detection probabilities. Similar results are obtained in the continuous-time Gaussian setting by Burnashev and Begmatov \cite{BB90}.
	
 Subsequent important work include Ingster \cite{Ingster97} and Donoho and Jin  \cite{DJ.HC}, which focused on detecting a sparse binary vector in the presence of Gaussian observation noise. The problem can be formulated as follows.  Given a random sample $\{Y_1, ...,  Y_n\}$, one wishes to test the hypotheses
	\begin{equation}
H_{0}^{(n)}:  Y_i \, \iiddistr \, \calN(0,1),  i \in [n]  \quad \mbox{versus}\quad
H_{1}^{(n)}:  Y_i \, \iiddistr \, (1-\epsilon_n) \calN(0,1) + \epsilon_n \calN(\mu_n, 1),  i \in [n]	
	\label{eq:HT.IDJ}
\end{equation}
where the non-null proportion $\epsilon_n$ is calibrated according to
\begin{equation}
\epsilon_n = n^{-\beta}, \quad \frac{1}{2} < \beta< 1,	
	\label{eq:epsn}
\end{equation}
and the non-null effect $\mu_n$ grows with the sample size according to
\begin{equation}
	\mu_n = \sqrt{2 r \log n}, \quad r > 0.
	\label{eq:r}
\end{equation}
Equivalently, one can  write
\begin{equation}
	Y_i = X_i + Z_i
	\label{eq:conv}
\end{equation}
where $Z_i \iiddistr \calN(0,1)$ is the observation noise. Under the null hypothesis, the mean vector $X^n = (\ntok{X_1}{X_n})$ is equal to zero; under the alternative, $X^n$ is a non-zero sparse binary vector with $X_i \iiddistr (1-\epsilon_n) \delta_0 + \epsilon_n \delta_{\mu_n}$, where $\delta_a$ denotes the point mass at $a$.

\emph{The detection boundary}, which gives the smallest possible signal strength, $r$, such that reliable detection is possible, is given by the following function in terms of the sparsity parameter $\beta$:
	 \begin{equation}
	r^*(\beta)
	= \begin{cases}
	\beta - \frac{1}{2} & \frac{1}{2} < \beta \leq \frac{3}{4} \\
	(1-\sqrt{1-\beta})^2 & \frac{3}{4} < \beta < 1
\end{cases}.
	\label{eq:rstar}
\end{equation}
See  Ingster \cite{Ingster97} and Donoho and Jin  \cite{DJ.HC}.
Therefore, the hypotheses in \prettyref{eq:HT.IDJ} can be tested with vanishing probability of error if and only if the pair $(\beta,r)$ lies in the strict epigraph 
\begin{equation}
\{(\beta, r): r > r^*(\beta)\},	
	\label{eq:detregion}
\end{equation}
 which is called the \emph{detectable region}. Furthermore, because the fraction of the non-zero mean is very small, most tests based on the empirical moments have no power in detection. Donoho and Jin  \cite{DJ.HC} proposed an adaptive testing procedure based on Tukey's higher criticism statistic  and showed that it attains the optimal detection boundary \prettyref{eq:rstar} without requiring the knowledge of the unknown parameters $(\beta,r)$. 
	
	The above results have been generalized  along various directions within the framework of two-component Gaussian mixtures. Jager and Wellner \cite{JW07} proposed a family of goodness-of-fit tests based on the \renyi divergences \cite[p. 554]{renyi61}, including the higher criticism test as a special case, which achieve the optimal detection boundary adaptively. 
The detection boundary with correlated noise was established in \cite{HJ10} which also proposed a modified version of the higher criticism that achieves the corresponding optimal boundary. 
	In a related setup, \cite{CDH05,CCHZ08,CCP11} considered detecting a signal with a known geometric shape in Gaussian noise. Minimax estimation of the non-null proportion $\epsilon_n$ was studied in Cai, Jin and Low \cite{CJL07}.

	The setup of \cite{Ingster97} and \cite{DJ.HC} specifically focuses on the two-point Gaussian mixtures. Although \cite{Ingster97} and \cite{DJ.HC} provide insightful results for sparse signal detection, the setting is highly restrictive and idealized. In particular, it has the limitation that the signal strength must be a constant under the alternative, \ie, the mean vector $X^n$ takes constant value $\mu_n$ on its support. 
	In many applications, the signal itself varies among the non-null portion of the samples. A natural question is the following: What is the detection boundary if $\mu_n$ varies under the alternative, say with a distribution $P_n$? Motivated by these considerations, the following heteroscedastic Gaussian mixture model was considered in Cai, Jeng and Jin \cite{CJJ11}:
		\begin{equation}
H_{0}^{(n)}:  Y_i \, \iiddistr \, \calN(0,1)   \quad \mbox{versus}\quad
H_{1}^{(n)}:  Y_i \, \iiddistr \, (1-\epsilon_n) \calN(0,1) + \epsilon_n \calN(\mu_n,\sigma^2).	
	\label{eq:HT.CJJ}
\end{equation}
In this case,  \cite[Theorems 2.1 and 2.2]{CJJ11} showed that reliable detection is possible if and only if $r > r^*(\beta,\sigma^2)$ where $r^*(\beta,\sigma^2)$ is given by
\begin{equation}
	r^*(\beta,\sigma^2)
	= \begin{cases}
	(2-\sigma^2)(\beta - \frac{1}{2}) & \frac{1}{2} < \beta \leq 1-\frac{\sigma^2}{4}, \sigma^2  < 2 \\
	(1-\sigma \sqrt{1-\beta})_+^2 & \text{otherwise}
\end{cases}.
	\label{eq:cjjr}
\end{equation}
where $x_+ \triangleq \max(x, 0)$. 
It was also shown that the optimal detection boundary can be achieved by a double-sided version of the higher criticism test.

\subsection{Detection of general sparse mixture}
	\label{sec:intro.general}	

Although the setup in Cai, Jeng and Jin \cite{CJJ11} is more general than that considered in \cite{Ingster97} and \cite{DJ.HC}, it is still restricted to the two-component Gaussian mixtures.
In many applications such as the aforementioned multi-channel detection \cite{Dobrushin58} and astrophysical problems \cite{JSDAF05}, the sparse signal may not be binary and the distribution may not be Gaussian. 
In the present paper, we consider the problem of sparse mixture detection in a general framework where the distributions are not necessarily Gaussian and the non-null effects are not necessarily a  binary vector. More specifically, given a random sample $Y^n=\{Y_1, ...,  Y_n\}$,  we wish to test the following hypotheses 
\begin{equation}
H_{0}^{(n)}:  Y_i \, \iiddistr \, Q_n  \quad \mbox{versus}\quad
H_{1}^{(n)}:  Y_i \, \iiddistr \, (1-\epsilon_n) Q_n + \epsilon_n G_n	
	\label{eq:HT.nG}
\end{equation}
where $Q_n$ is the null distribution and $G_n$ is a distribution modeling the statistical variations of the non-null effects. The non-null proportion $\epsilon_n \in (0,1)$ is calibrated according to \prettyref{eq:epsn}. 

In this paper we obtain an explicit formula for the fundamental limit of the general testing problem \prettyref{eq:HT.nG} under mild technical conditions on the mixture. We also establish the adaptive optimality of the higher criticism procedure across all sparse mixtures satisfying certain mild regularity conditions. In particular, the general results obtained in this paper recover and extend all the previously known results mentioned earlier in a unified manner. The results also generalize the optimality and adaptivity of the higher criticism procedure far beyond the original equal-signal-strength Gaussian setup in \cite{Ingster97,DJ.HC} and the heteroscedastic extension in \cite{CJJ11}. In the most general case, it turns out that the detectability of the sparse mixture is governed by the behavior of the log-likelihood ratio evaluated at an appropriate quantile of the null distribution.

Although our general approach does not rely on the Gaussianity of the model, it is however instructive to begin by considering  the special case of \emph{sparse normal mixture} with $Q_n = \calN(0,1)$, \ie,
\begin{equation}
\begin{cases}
H_{0}^{(n)}: & Y_i \, \iiddistr \, \calN(0,1) \\
H_{1}^{(n)}: & Y_i \, \iiddistr \, (1-\epsilon_n) \calN(0,1) + \epsilon_n G_n
\end{cases}.	
	\label{eq:HT}
\end{equation}
It is of special interest to consider the \emph{convolution model}, where
\begin{equation}
G_n = P_n * \calN(0,1)
	\label{eq:convmodel}
\end{equation}
is a standard normal mixture and $*$ denotes the convolution of two distributions. In this case the hypotheses \prettyref{eq:HT} can be equivalently expressed via the additive-noise model \prettyref{eq:conv},
 where
  $X_i=0$ under the null and $X_i \iiddistr (1-\epsilon_n) \delta_0 + \epsilon_n P_n$ under the alternative. Based on the noisy observation $Y^n$,  the goal is to determine whether $X^n$ is the zero vector or a \emph{sparse} vector, whose support size is approximately $n \epsilon_n$ and non-zero entries are distributed according to $P_n$. Therefore, the distribution $P_n$ represents the prior knowledge of the signal. The case of $P_n$ being a point mass is treated in \cite{Ingster97,DJ.HC}. The case of Rademacher $P_n$ in covered in \cite[Chapter 8]{IS03}. The heteroscedastic case where $P_n$ is Gaussian is considered in \cite{CJJ11}.
These results can be recovered by particularizing the general conclusion in the present paper.

Moreover, our results also shed light on what governs detectability in Gaussian noise when the signal does not necessarily have equal strength. For example, consider the classical setup \prettyref{eq:HT.IDJ} where the signal strength $\mu_n$ is now a random variable. If we have $\mu_n = \sqrt{2r \log n} \, X$ for some random variable $X$, then the resulting detectable region is given by the Ingster-Donoho-Jin expression \prettyref{eq:betaIDJ} scaled by the $L_\infty$-norm of $X$. On the other hand, it is also possible that certain distributions of $\mu_n$ induces different shapes of detectable region than \prettyref{fig:idj}. See Sections \ref{sec:char} and \ref{sec:ex.dilate} for further discussions.

\subsection{Organization}

The rest of the paper is organized as follows.
\prettyref{sec:prob} states the setup, defines the fundamental limit of sparse mixture detection and reviews some previously known results.
The main results of the paper are presented in Sections \ref{sec:main} and \ref{sec:HC}, where we provide an explicit characterization of the optimal detection boundary under mild technical conditions. Moreover, it is shown in \prettyref{sec:HC} that the higher criticism test achieves the optimal performance adaptively. 
\prettyref{sec:ex} particularizes the general result to various special cases to give explicit formulae of the fundamental limits. 
Discussions of generalizations and open problems are presented in \prettyref{sec:discuss}.
The main theorems are proven in \prettyref{sec:pf}, while the proofs of the technical lemmas are relegated to the appendices.

\subsection{Notations}
	\label{sec:notation}

Throughout the paper, $\Phi$ and $\varphi$ denote the cumulative distribution function (CDF) and the density of the standard normal distribution respectively. Let $\bar{\Phi}=1 -\Phi$. 
Let $P^n$ denote the $n$-fold product measure of $P$.
We say $P$ is absolutely continuous with respect to $Q$, denoted by $P \ll Q$, if $P(A)=0$ for any measurable set $A$ such that $Q(A) = 0$.  
We say $P$ is singular with respect to $Q$, denoted by $P \perp Q$, if there exists a measurable $A$ such that $P(A)=1$ and $Q(A) = 0$.
We denote $a_n = o(b_n)$ if $\limsup_{n\diverge} \frac{|a_n|}{|b_n|} = 0$, $a_n = \omega(b_n)$ if $b_n = o(a_n)$, $a_n = O(b_n)$ if $\limsup_{n\diverge} \frac{|a_n|}{|b_n|} < \infty$ and $a_n = \Omega(b_n)$ if $b_n = O(a_n)$.
These asymptotic notations extend naturally to probabilistic setups, denoted by $o_{\Prob}, \omega_{\Prob}$, etc., where limits are in the sense of convergence in probability. 

\section{Fundamental limits and characterization}
	\label{sec:prob}
	
In this section we define the fundamental limits for testing the hypotheses \prettyref{eq:HT.nG} in terms of the sparsity parameter $\beta$. An equivalent characterization in terms of the Hellinger distance is also given.
	
%
%

\subsection{Fundamental limits of detection}
\label{sec:FL}
It is easy to see that as the non-null proportion $\epsilon_n$ decreases, the signal is more sparse and the testing problem in \prettyref{eq:HT.nG} becomes more difficult. Recall that $\epsilon_n$ is given by \prettyref{eq:epsn}
where $\beta \ge 0$ parametrizes the sparsity level.
Thus, the question of detectability boils down to characterizing the smallest (resp. largest) $\beta$ such that the hypotheses in \prettyref{eq:HT.nG} can be distinguished with probability tending to one (resp. zero), when the sample size $n$ is large. 

For testing between two probability measures $P$ and $Q$, denote the optimal sum of Type-I and Type-II error probabilities by
\begin{equation}
	\calE(P,Q) \triangleq \inf_{A} \{P(A) + Q(\comp{A})\},
	\label{eq:perr}
\end{equation}
where the infimum is over all measurable sets $A$. By the Neyman-Pearson Lemma \cite{NP33}, $\calE(P,Q)$ is achieved by the likelihood ratio test: 
declare $P$ if and only if $\fracd{P}{Q} \geq 1$.
Moreover, $\calE(P,Q)$ can be expressed in terms of the \emph{total variation distance} 
\begin{equation}
\TV(P,Q) \triangleq 	\sup_{A} |P(A) - Q(A)| = \frac{1}{2} \int |\diff P - \diff Q|
	\label{eq:TV}
\end{equation}
as
\begin{equation}
\calE(P,Q) = 1 - \TV(P,Q).	
	\label{eq:ETV}
\end{equation}

For a fixed sequence $\{(Q_n,G_n)\}$, denote the total variation between the null and alternative by
\begin{equation}
	V_n(\beta) \triangleq \TV(Q_n^n,\;  ((1-n^{-\beta})Q_n + n^{-\beta} G_n)^n),
	\label{eq:Vn}
\end{equation}
which takes values in the unit interval. 
In view of \prettyref{eq:ETV}, the fundamental limits of testing the hypothesis \prettyref{eq:HT.nG} are defined as follows.
\begin{definition}
\begin{align}
\betal \triangleq & ~\sup\sth{ \beta \geq 0: V_n(\beta) \to 1 }, 	\label{eq:betal}\\
\betau \triangleq & ~\inf\sth{ \beta \geq 0: V_n(\beta) \to 0} .	\label{eq:betau}
\end{align}
If $\overline{\beta}^* = \underline{\beta}^*$, the common value is denoted by $\beta^*$.
	\label{def:betas}
\end{definition}

As illustrated by \prettyref{fig:beta}, the operational meaning of $\betal$ and $\betau$ are as follows: 
for any $\beta > \betau$, all sequences of tests have vanishing probability of success; for any $\beta < \betal$, there exists a sequence of tests with vanishing probability of error. In information-theoretic parlance, if $\overline{\beta}^* = \underline{\beta}^* = \beta^*$, we say \emph{strong converse} holds, in the sense that if $\beta > \beta^*$, all tests fail with probability tending to one; if $\beta < \beta^*$, there exists a sequence of tests with vanishing error probability.

\begin{figure}[ht]
	\centering
%
	\includegraphics[width=.6\columnwidth]{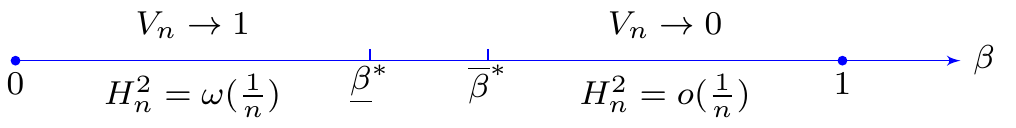}
	\caption{Critical values of $\beta$ and regimes of (in)distinguishability of the hypotheses \prettyref{eq:HT} in the large-$n$ limit.}
	\label{fig:beta}
\end{figure}

Clearly, $\betau$ and $\betal$ only depend on the sequence $\{(Q_n, G_n)\}$.
The following lemma, proved in \prettyref{app:H2mix}, shows that it is always sufficient to restrict the range of $\beta$ to the unit interval.
\begin{lemma}
\begin{equation}
	0 \leq \underline{\beta}^* \leq \overline{\beta}^* \leq 1.
	\label{eq:beta01}
\end{equation}	
	\label{lmm:beta01}
\end{lemma}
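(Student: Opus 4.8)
The plan is to establish the three inequalities in \prettyref{eq:beta01} one at a time. That $\betal \geq 0$ and $\betau \geq 0$ are immediate from \prettyref{def:betas}, since the supremum in \prettyref{eq:betal} and the infimum in \prettyref{eq:betau} are taken over subsets of $[0,\infty)$ (with the convention $\sup\emptyset := 0$). So the content is the sandwich $\betal \le \betau \le 1$.

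For $\betau \leq 1$, I would show directly that $V_n(\beta)\to 0$ for every fixed $\beta>1$, which via \prettyref{eq:betau} forces $\betau\le 1$. Write $P_n \triangleq (1-n^{-\beta})Q_n + n^{-\beta}G_n$ for the alternative marginal. Two elementary total-variation facts suffice: subadditivity on product measures, $\TV(P_n^n,Q_n^n)\le n\,\TV(P_n,Q_n)$, which follows from a telescoping (hybrid) argument using the triangle inequality for $\TV$ and the identity $\TV(A\otimes C, B\otimes C)=\TV(A,B)$; and the mixture identity $\TV(P_n,Q_n)=\TV\big((1-n^{-\beta})Q_n+n^{-\beta}G_n,\,Q_n\big)=n^{-\beta}\,\TV(G_n,Q_n)$. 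Combining them, $V_n(\beta)\le n\cdot n^{-\beta}\,\TV(G_n,Q_n)\le n^{1-\beta}\to 0$; note this step uses nothing about the sequence $\{(Q_n,G_n)\}$ beyond $\TV(G_n,Q_n)\le 1$.

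For $\betal\le\betau$, the plan is to prove that for each fixed $n$ the map $\beta\mapsto V_n(\beta)$ is non-increasing, and then read the inequality off the definitions. Since $\beta\mapsto n^{-\beta}$ is non-increasing, it is enough to show the \emph{monotonicity sub-claim}: $\epsilon\mapsto\TV\big(Q_n^n,\,((1-\epsilon)Q_n+\epsilon G_n)^n\big)$ is non-decreasing on $[0,1]$. Fix $0\le\epsilon_1\le\epsilon_2\le 1$ (the case $\epsilon_2=0$ being trivial) and set $\lambda=\epsilon_1/\epsilon_2\in[0,1]$, so that $P_{\epsilon_1}:=(1-\epsilon_1)Q_n+\epsilon_1 G_n=(1-\lambda)Q_n+\lambda P_{\epsilon_2}$. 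Expanding the $n$-fold product of this mixture, $P_{\epsilon_1}^n=\sum_{S\subseteq[n]}\lambda^{|S|}(1-\lambda)^{n-|S|}\,\mu_S$, where $\mu_S$ is the product measure equal to $P_{\epsilon_2}$ on the coordinates in $S$ and to $Q_n$ elsewhere. Combining convexity of $\TV(\cdot\,,Q_n^n)$ in its first argument with the identity $\TV(\mu_S,Q_n^n)=\TV(P_{\epsilon_2}^{|S|},Q_n^{|S|})$ and the monotonicity of $k\mapsto\TV(P_{\epsilon_2}^k,Q_n^k)$ (dropping an \iid coordinate is a Markov kernel, so the data-processing inequality applies), we obtain $\TV(P_{\epsilon_1}^n,Q_n^n)\le\sum_{S}\lambda^{|S|}(1-\lambda)^{n-|S|}\,\TV(P_{\epsilon_2}^n,Q_n^n)=\TV(P_{\epsilon_2}^n,Q_n^n)$, which is the sub-claim. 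With $V_n$ non-increasing: any $\beta_1$ in the set defining $\betal$ in \prettyref{eq:betal} and any $\beta_2$ in the set defining $\betau$ in \prettyref{eq:betau} must satisfy $\beta_1\le\beta_2$, since $\beta_1>\beta_2$ would force $V_n(\beta_1)\le V_n(\beta_2)$ for all $n$ while $V_n(\beta_1)\to 1$ and $V_n(\beta_2)\to 0$, impossible for large $n$. Taking the supremum over $\beta_1$ and the infimum over $\beta_2$ gives $\betal\le\betau$.

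The main obstacle is the monotonicity sub-claim --- that passing to $n$-fold products preserves the intuitively obvious monotonicity of total variation in the contamination level $\epsilon$ --- and the ``mixture over subsets'' decomposition above is exactly what makes it go through without computation. An alternative route to $\betal\le\betau$, bypassing the monotonicity of $V_n$ itself, would use the multiplicativity of the Hellinger affinity $\rho(P,Q)=\int\sqrt{\diff P\,\diff Q}$, the concavity estimate $\rho\big((1-\lambda)Q+\lambda R,\,Q\big)\ge 1-\lambda\big(1-\rho(R,Q)\big)$, and the two-sided comparison $1-\rho\le\TV\le\sqrt{1-\rho^2}$ to show that $V_n(\beta_1)\to 1$ propagates to every smaller $\beta$; I expect the monotonicity argument to be the cleaner of the two to write out in full, and it is presumably the one carried out in \prettyref{app:H2mix}.
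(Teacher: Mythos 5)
Your proof is correct, but it does not take the route the paper does. In \prettyref{app:H2mix} the paper proves $\betal\le\betau$ by establishing monotonicity at the \emph{single-sample} level in the \emph{Hellinger} metric (\prettyref{lmm:Hdec}: convexity of $H^2(P,\cdot)$ gives $H^2\big(P,(1-\epsilon)P+\epsilon Q\big)\le\tfrac{\epsilon}{\epsilon'}H^2\big(P,(1-\epsilon')P+\epsilon' Q\big)$ for $\epsilon\le\epsilon'$) and then reads off the conclusion from the equivalent Hellinger characterizations \prettyref{eq:betalH}--\prettyref{eq:betauH}, which already absorb the passage to $n$-fold products via the tensorization identity \prettyref{eq:hprod}; for $\betau\le 1$ it applies the data-processing inequality to reduce to the pair $(\delta_0,\,(1-n^{-\beta})\delta_0+n^{-\beta}\delta_1)$ and computes $1-(1-n^{-\beta})^n\le n^{1-\beta}$. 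Your argument for $\betau\le 1$ via subadditivity of $\TV$ over products plus $\TV\big((1-\epsilon)Q+\epsilon G,Q\big)=\epsilon\,\TV(G,Q)$ is an equally valid, slightly more hands-on route to the same $n^{1-\beta}$ bound. What is genuinely different, and worth flagging, is your treatment of $\betal\le\betau$: you prove monotonicity of $\epsilon\mapsto\TV\big(Q_n^n,((1-\epsilon)Q_n+\epsilon G_n)^n\big)$ directly at the $n$-fold product level via the subset decomposition $P_{\epsilon_1}^n=\sum_{S\subseteq[n]}\lambda^{|S|}(1-\lambda)^{n-|S|}\mu_S$, joint convexity of $\TV$, and the data-processing bound $\TV\big(P_{\epsilon_2}^{|S|},Q_n^{|S|}\big)\le\TV\big(P_{\epsilon_2}^n,Q_n^n\big)$. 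Note that \prettyref{rmk:Hdec} explicitly flags this product-level $\TV$ monotonicity as \emph{unclear}, precisely because the argument of \prettyref{lmm:Hdec} cannot be transplanted (the product measures are not a convex family); your subset decomposition sidesteps that obstruction and so actually settles the point raised there. Your closing guess that the paper carries out the $\TV$-monotonicity argument is wrong --- it takes the Hellinger route you describe as the alternative --- but your preferred route buys a strictly stronger conclusion than what the paper proves.
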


In the Gaussian mixture model with $Q_n = \calN(0,1)$, if the sequence $\{G_n\}$ is parametrized by some parameter $r$, the fundamental limit $\beta^*$ in \prettyref{def:betas} is a function of $r$, denoted by $\beta^*(r)$. 
For example, in the Ingster-Donoho-Jin setup \prettyref{eq:HT.IDJ} where $G_n = \calN(\mu_n, 1)$, 
$\beta^*$, denoted by $\beta^*_{\rm IDJ}$, can be obtained by inverting \prettyref{eq:rstar}:
\begin{equation}
	\beta^*_{\rm IDJ}(r)
	= \begin{cases}
	\frac{1}{2}+r & 0 < r \leq \frac{1}{4} \\
	1-(1-\sqrt{r})_+^2 & r > \frac{1}{4} 
\end{cases}.
	\label{eq:betaIDJ}
\end{equation}
In terms of \prettyref{eq:betaIDJ}, the detectable region \prettyref{eq:detregion} is given by the strict hypograph $\{(r,\beta): \beta < \beta^*(r)\}$.
The function $\beta^*_{\rm IDJ}$, plotted in \prettyref{fig:idj}, plays an important role in our later derivations. 
Similarly, for the heteroscedastic mixture \prettyref{eq:HT.CJJ}, inverting \prettyref{eq:cjjr} gives
\begin{align}
\beta^*(r,\sigma^2)
= & ~ \begin{cases}
	\frac{1}{2}+\frac{r}{2-\sigma^2} & 2 \sqrt{r} + \sigma^2 \leq 2 \\
	1-\frac{(1-\sqrt{r})_+^2}{\sigma^2} & 2 \sqrt{r} + \sigma^2 > 2
\end{cases}.	
\end{align}
As shown in \prettyref{sec:ex}, all the above results can be obtained in a unified manner as a consequence of the general results in \prettyref{sec:main}.


\begin{figure}[ht]
	\centering
	\includegraphics[width=.5\columnwidth]{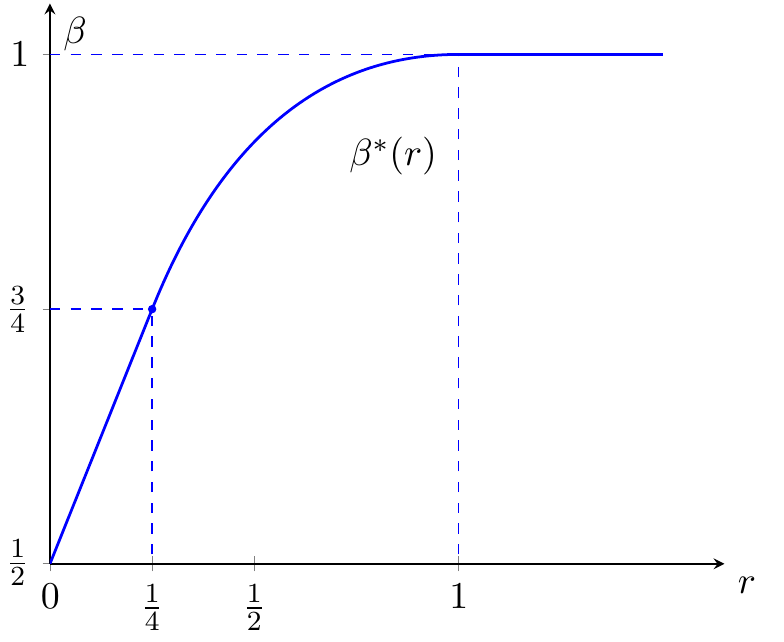}
	\caption{Ingster-Donoho-Jin detection boundary \prettyref{eq:betaIDJ} and the detectable region (below the curve).}
	\label{fig:idj}
\end{figure}

\subsection{Equivalent characterization via the Hellinger distance}
	\label{sec:hel}

Closely related to the total variation distance is the Hellinger distance \cite[Chapter 2]{Lecam86}
 \[
 H^2(P,Q) \triangleq \int (\sqrt{\diff P} - \sqrt{\diff Q})^2,
 \]
which takes values in the interval $[0,2]$ and satisfies the following relationship:
\begin{equation}
	\frac{1}{2} H^2(P,Q) \leq \TV(P,Q) \leq H(P,Q)\sqrt{1 - \frac{H^2(P,Q)}{4}} \leq 1.
	\label{eq:tvh}
\end{equation}
Therefore, the total variation distance converges to zero (resp. one) is equivalent to the squared Hellinger distance converges to zero (resp. two). 
We will be focusing on the Hellinger distance partly due to the fact that it tensorizes nicely under the product measures:
\begin{equation}
H^2(P^n,Q^n) = 2 - 2\left(1 - \frac{H^2(P,Q)}{2}\right)^n.
	\label{eq:hprod}
\end{equation}

Denote the Hellinger distance between the null and the alternative by
\begin{equation}
H_n^2(\beta) \triangleq	H^2(Q_n, (1-n^{-\beta}) Q_n + n^{-\beta} G_n).
	\label{eq:Hnbeta}
\end{equation}
In view of \prettyref{eq:betal} -- \prettyref{eq:betau} and \prettyref{eq:hprod}, the fundamental limits $\betau$ and $\betal$ can be equivalently defined as follows in terms of the asymptotic squared Hellinger distance:
\begin{align}
\betal = & ~\sup\sth{ \beta \geq 0: H_n^2(\beta) = \omega(n^{-1}) }, 	\label{eq:betalH}\\
\betau = & ~\inf\sth{ \beta \geq 0: H_n^2(\beta) = o(n^{-1}) } .	\label{eq:betauH}
\end{align}

\section{Main results}
\label{sec:main}

In this section we characterize the detectable region explicitly by analyzing the exact asymptotics of the Hellinger distance induced by the sequence of distributions $\{(Q_n,G_n)\}$.

\subsection{Characterization of $\beta^*$ for Gaussian mixtures}
	\label{sec:char}
	This subsection we focus on the case of sparse normal mixture with $Q_n = \calN(0, 1)$ and $G_n$  absolutely continuous.
	 We will argue in \prettyref{sec:decom} that by performing the Lebesgue decomposition on $G_n$ if necessary, we can reduce the general problem to the absolutely continuous case.
	 	
	We first note that the \emph{essential supremum} of a measurable function $f$ with respect to a measure $\mu$ is defined as
	\[
	\esssup_{x} f(x) \triangleq \inf\{a\in \reals: \mu(\{x: f(x)>a\})=0\}.
	\]
	We omit mentioning $\mu$ if $\mu$ is the Lebesgue measure.
	Now we are ready to state the main result of this section.


\begin{theorem}
Let $Q_n = \calN(0,1)$. 
Assume that $G_n$ has a density $g_n$ with respect to the Lebesgue measure. Denote the log-likelihood ratio by
\begin{equation}
\ell_n	 \triangleq \log \frac{g_n}{\varphi}.
	\label{eq:ln}
\end{equation}
Let $\alpha: \reals \to \reals$ be a measurable function and define
\begin{equation}
	\beta^{\sharp}
	= \frac{1}{2} + 0 \vee \esssup_{u \in \reals} \sth{\alpha(u)- u^2 + \frac{u^2 \wedge 1}{2}}.	
	\label{eq:main}
\end{equation}
\begin{enumerate}
	\item If
\begin{equation}
	\begin{aligned}
\liminf_{n \diverge} \frac{\ell_{n}(u \sqrt{2 \log n})}{\log n} \geq & ~ \alpha(u)
	\label{eq:upn}
\end{aligned}
\end{equation}
\emph{uniformly} in $u \in \reals$, where $\alpha > 0$ on a set of positive Lebesgue measure, then $\betal \geq \beta^{\sharp}$.

\item If
\begin{align}
\limsup_{n \diverge} \frac{\ell_{n}(u \sqrt{2 \log n})}{\log n} \leq & ~ \alpha(u)
\label{eq:lown}
\end{align}
\emph{uniformly} in $u \in \reals$, then $\betau \leq \beta^{\sharp}$. 
\end{enumerate}
Consequently, if the limits in \prettyref{eq:upn} and \prettyref{eq:lown} agree and $\alpha > 0$ on a set of positive measure, then $\beta^* = \beta^{\sharp}$.
	\label{thm:main}
\end{theorem}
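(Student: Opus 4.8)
The plan is to carry out everything at the level of the squared Hellinger distance, using the characterization \prettyref{eq:betalH}--\prettyref{eq:betauH}, so that both parts come down to identifying the exponential order of $H_n^2(\beta)$ and comparing it with $n^{-1}$. Writing $W=g_n(X)/\varphi(X)=\eexp^{\ell_n(X)}$ for $X\sim\calN(0,1)$ and $\epsilon=n^{-\beta}$, and using $\mathbb{E}_\varphi[W]=\int g_n=1$ to annihilate the first-order term, two applications of $1-\sqrt z=(1-z)/(1+\sqrt z)$ to $H_n^2(\beta)=2-2\,\mathbb{E}_\varphi[\sqrt{1-\epsilon+\epsilon W}]$ give the exact formula
\[
H_n^2(\beta)=\epsilon^2\,\mathbb{E}_\varphi\!\left[\frac{(1-W)^2}{\big(1+\sqrt{1-\epsilon+\epsilon W}\,\big)^2}\right].
\]
Since $1-\epsilon+\epsilon W\asymp\max(1,\epsilon W)$ once $n$ is large, the denominator is $\asymp\max(1,\epsilon W)$, so up to absolute constants (harmless for the $\omega(n^{-1})$-versus-$o(n^{-1})$ dichotomy) and an additive term of size $O(n^{-2\beta})$ coming from the region $\{W\le2\}$, we have $H_n^2(\beta)\asymp\epsilon^2\,\mathbb{E}_\varphi[\min(W^2,W/\epsilon)\mathbf 1_{\{W>2\}}]$, using $(1-W)^2\asymp W^2$ when $W>2$. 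Because $\beta^\sharp\ge\tfrac12$ and the $O(n^{-2\beta})$ term competes with $n^{-1}$ only at $\beta=\tfrac12$ (absorbed by the ``$0\,\vee$'' in \prettyref{eq:main}), it suffices to locate the exponential order of $\Psi_n(\beta):=\epsilon^2\,\mathbb{E}_\varphi[\min(W^2,W/\epsilon)\mathbf 1_{\{W>2\}}]$.

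The main step is a Laplace-type evaluation of $\Psi_n(\beta)$. The change of variables $x=u\sqrt{2\log n}$ sends $\mathbb{E}_\varphi[h(X)]$ to $\tfrac{\sqrt{2\log n}}{\sqrt{2\pi}}\int n^{-u^2}h(u\sqrt{2\log n})\,\diff u$, while the hypotheses control $\ell_n(u\sqrt{2\log n})/\log n$, hence $W(u\sqrt{2\log n})=n^{\ell_n(u\sqrt{2\log n})/\log n}$, uniformly. Under \prettyref{eq:upn} one bounds the integrand below via $\ell_n(u\sqrt{2\log n})\ge(\alpha(u)-\delta)\log n$ and restricts the integral to a bounded positive-measure set on which $\alpha>0$ (so $W>2$ eventually) and on which $\min(2\alpha(u),\alpha(u)+\beta)-u^2$ is within $\delta$ of its $\esssup$; letting $\delta\downarrow0$ yields $H_n^2(\beta)\ge n^{-2\beta+S(\beta)-o(1)}$, where $S(\beta):=\esssup_u\{\min(2\alpha(u),\alpha(u)+\beta)-u^2\}$ (note that $g_n$ being a density already forces $\alpha(u)\le u^2$ a.e., keeping $S(\beta)$ finite). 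Under \prettyref{eq:lown} one bounds the integrand above via $\ell_n(u\sqrt{2\log n})\le(\alpha(u)+\delta)\log n$; on $\{|u|\le R\}$ this gives $H_n^2(\beta)\le n^{-2\beta+S(\beta)+o(1)}$, and the far-tail part is controlled using $\min(W^2,W/\epsilon)\le W/\epsilon$ together with $\mathbb{E}_\varphi[W]=1$ and the fact that in the only non-vacuous case $\beta^\sharp<1$ (otherwise \prettyref{lmm:beta01} makes the claim trivial) \prettyref{eq:main} forces $\alpha(u)-u^2<\beta^\sharp-1$ for $|u|\ge1$, supplying the decay needed to make the Laplace upper bound rigorous. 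In both directions $H_n^2(\beta)$ has exponential order $n^{-2\beta+S(\beta)}$, so detection is reliable iff $S(\beta)>2\beta-1$ and impossible iff $S(\beta)<2\beta-1$.

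The last step is the elementary identity $\{S(\beta)>2\beta-1\}\Leftrightarrow\{\beta<\beta^\sharp\}$. Set $\psi(u)=\alpha(u)-u^2+\tfrac{u^2\wedge1}{2}$ and split the $\esssup$ in $S(\beta)$ over $\{\alpha(u)\le\beta\}$ (there $\min(2\alpha,\alpha+\beta)=2\alpha$) and $\{\alpha(u)>\beta\}$ (there it equals $\alpha+\beta$). On $\{\alpha(u)\le\beta\}$ the inequality $2\alpha(u)-u^2>2\beta-1$ can hold only when $|u|\le1$ (else $\alpha(u)>\beta$), and there $2\alpha(u)-u^2=2\psi(u)$; on $\{\alpha(u)>\beta\}$ a short check according to whether $|u|\le1$ or $|u|>1$ gives $\alpha(u)+\beta-u^2>\beta-1\Leftrightarrow\psi(u)>\beta-\tfrac12$. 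Hence $S(\beta)>2\beta-1\Leftrightarrow\esssup_u\psi(u)>\beta-\tfrac12$, which (folding in the trivially detectable range $\beta<\tfrac12$ via the ``$0\,\vee$'') is exactly $\beta<\tfrac12+0\vee\esssup_u\psi(u)=\beta^\sharp$. This proves Part 1 ($\betal\ge\beta^\sharp$, with ``$\alpha>0$ on a positive-measure set'' ensuring that the favorable restriction can be made where $W>2$ and that the mixture is genuinely distinguishable) and Part 2 ($\betau\le\beta^\sharp$); the ``Consequently'' claim is then immediate, since if the limits in \prettyref{eq:upn}--\prettyref{eq:lown} coincide with $\alpha$ and $\alpha>0$ on a positive-measure set, both parts apply and \prettyref{lmm:beta01} gives $\beta^\sharp\le\betal\le\betau\le\beta^\sharp$, i.e.\ $\beta^*=\beta^\sharp$.

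I expect the main obstacle to lie in the rigorous Laplace analysis of $\Psi_n(\beta)$, and within it the far-tail estimate of Part 2: one must show that the mass $G_n$ places far out --- where $\ell_n$ can be large simply because $\varphi$ is exponentially small --- cannot spoil $H_n^2=o(n^{-1})$, using only a one-sided bound on $\ell_n$, the normalization $\int g_n=1$, and the reduction to $\beta^\sharp<1$; cleaner moment (or R\'enyi-divergence) bounds may be needed here in place of crude pointwise estimates. The lower bound of Part 1 is comparatively routine once the truncation hidden in $\min(W^2,W/\epsilon)$ is handled by restricting to a favorable region in $u$, and the algebraic identity, though elementary, still requires care at the $u^2\wedge1$ and ``$0\,\vee$'' corners, including the boundary cases $\beta=\tfrac12$ and $\beta^\sharp\in\{\tfrac12,1\}$.
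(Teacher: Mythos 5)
Your approach is essentially the same as the paper's: both reduce the problem to locating the exponential order of $H_n^2(\beta)$ via \prettyref{eq:betalH}--\prettyref{eq:betauH}, turn the Hellinger integral into a Laplace-type integral after the substitution $x=u\sqrt{2\log n}$, use the elementary two-sided bound on $(\sqrt{1+t}-1)^2$ (your exact identity $H_n^2=\epsilon^2\,\mathbb{E}_\varphi\bigl[(1-W)^2/(1+\sqrt{1-\epsilon+\epsilon W})^2\bigr]$ and the paper's \prettyref{lmm:sqrt} are interchangeable here), and finish with the same algebraic identification of $\beta^\sharp$. The paper's Part 1 also restricts to a favorable positive-measure set in $u$, split into the two cases $|u|\leq 1$ and $|u|\geq 1$ corresponding to your $\tfrac{u^2\wedge 1}{2}$ corner, and Part 2 splits off the $\{L_n\le 1\}$ piece (your $\{W\le2\}$ piece) as an $O(n^{-2\beta})=o(n^{-1})$ remainder.

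The one place your sketch is materially thin is the far-tail integrability in Part 2, which you correctly flag as the main obstacle. Your suggestion to use $\mathbb{E}_\varphi[W]=1$ is the right ingredient, but by itself the one-sided bound $\min(W^2,W/\epsilon)\le W/\epsilon$ only reduces the far-tail contribution to $\epsilon\, G_n\bigl(|X|>R\sqrt{2\log n}\bigr)$, and the pointwise constraint $\alpha(u)-u^2\leq\beta^\sharp-1$ a.e.\ on $\{|u|\geq 1\}$ does not by itself make $\int_{|u|>R} n^{\alpha(u)-u^2}\,\diff u$ finite (the constraint is only tight, not strict, and the domain is unbounded). What is actually needed is the stronger integrability statement $\lim_{t\diverge}\frac{1}{t}\log\int e^{t(\alpha(u)-u^2)}\diff u=0$, which the paper isolates as \prettyref{eq:alpha.legit} in \prettyref{lmm:alpha}; its proof turns $\int g_n=1$ into the claimed bound on the integral at $t=\log n$ and then interpolates to general $t$ via H\"older. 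Also note that this derivation uses the two-sided convergence, not just \prettyref{eq:lown}, so in the framework of \prettyref{thm:main} the hypothesis behind \prettyref{lmm:alpha} is tacitly in force in Part 2. Your algebraic equivalence $\{S(\beta)>2\beta-1\}\Leftrightarrow\{\beta<\beta^\sharp\}$ is correct (there is a small typo: you wrote $\alpha(u)+\beta-u^2>\beta-1$ where you clearly mean $>2\beta-1$, i.e.\ $\alpha(u)-u^2>\beta-1$) and agrees with the paper's equations \prettyref{eq:acc5}, \prettyref{eq:acc6}, \prettyref{eq:cv4}.
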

\begin{proof}
	\prettyref{sec:pfmain}.
\end{proof}

	Assuming the setup of \prettyref{thm:main}, we ask the following question in the reverse direction: What kind of function $\alpha$ can arise in equations \prettyref{eq:upn} and \prettyref{eq:lown}? The following lemma (proved in \prettyref{sec:pfmain}) gives a necessary and sufficient condition for $\alpha$. 	However, in the special case of convolutional models, the function $\alpha$ needs to satisfy more stringent conditions, which we also discuss below.
	\begin{lemma}
	Suppose
	\begin{align}
\lim_{n \diverge} \frac{\ell_n(u \sqrt{2 \log n })}{\log n} = & ~ \alpha(u),	\label{eq:alphau}
\end{align}
holds \emph{uniformly} in $u \in \reals$ for some measurable function $\alpha: \reals \to \reals$. Then
\begin{equation}
\lim_{t \diverge} \frac{1}{t} \log \int_\reals \exp(t (\alpha(u) - u^2)) \diff u = 0.
	\label{eq:alpha.legit}
\end{equation}
In particular, $\alpha(u) \leq u^2$ Lebesgue-a.e. Conversely, for all measurable $\alpha$ that satisfies \prettyref{eq:alpha.legit}, there exists a sequence of $\{G_n\}$, such that \prettyref{eq:alphau} holds.

Additionally, if the model is convolutional, \ie, $G_n = P_n * \calN(0,1)$, then $\alpha$ is convex.

	\label{lmm:alpha}
\end{lemma}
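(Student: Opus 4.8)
The plan is to establish the four assertions in the order they appear — the integral identity \prettyref{eq:alpha.legit}, the pointwise bound $\alpha(u)\le u^2$, the converse construction of $\{G_n\}$, and finally convexity of $\alpha$ in the convolutional case. The forward direction is the heart of the matter, and I would extract it from the one global constraint available, namely the normalization $\int g_n=1$. Under the substitution $y=u\sqrt{2\log n}$ one has $\varphi(y)\,\diff y=\frac{\sqrt{2\log n}}{\sqrt{2\pi}}\,e^{-u^2\log n}\,\diff u$, so $\int g_n=1$ becomes
\[
\int_\reals e^{\ell_n(u\sqrt{2\log n})}\,e^{-u^2\log n}\,\diff u=\frac{\sqrt{2\pi}}{\sqrt{2\log n}}.
\]
Fix $\delta>0$; the uniform convergence in \prettyref{eq:alphau} gives, for all large $n$ and all $u$, the sandwich $(\alpha(u)-\delta)\log n\le \ell_n(u\sqrt{2\log n})\le(\alpha(u)+\delta)\log n$. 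Writing $t=\log n$ and $\psi(t)\triangleq\log\int_\reals e^{t(\alpha(u)-u^2)}\,\diff u$, substituting the lower bound into the identity shows $\psi(t)<\infty$ for large $t$ and $\psi(t)\le \delta t+\log\frac{\sqrt{2\pi}}{\sqrt{2t}}$, while the upper bound gives $\psi(t)\ge -\delta t+\log\frac{\sqrt{2\pi}}{\sqrt{2t}}$. Letting $\delta\downarrow 0$ yields $\psi(\log n)/\log n\to 0$.

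The remaining work in the forward direction is to upgrade this from the subsequence $t=\log n$ to the full limit $\psi(t)/t\to 0$ required by \prettyref{eq:alpha.legit}. For this I would note that $\psi$ is convex — apply H\"older's inequality to $e^{(\lambda t_1+(1-\lambda)t_2)(\alpha-u^2)}=\bigl(e^{t_1(\alpha-u^2)}\bigr)^{\lambda}\bigl(e^{t_2(\alpha-u^2)}\bigr)^{1-\lambda}$ — and, by the previous paragraph, finite on a half-line $[t_0,\infty)$. For such a function the chord slopes $(\psi(t)-\psi(t_0))/(t-t_0)$ are non-decreasing in $t$, so $\psi(t)/t$ has a limit in $(-\infty,+\infty]$ as $t\to\infty$, which must equal the subsequential value $0$. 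The pointwise bound $\alpha(u)\le u^2$ a.e.\ then follows at once: if $\alpha>u^2$ on a set of positive Lebesgue measure, there is a bounded set $S$ of positive measure and $\eta>0$ with $\alpha(u)-u^2\ge\eta$ on $S$, whence $\int e^{t(\alpha(u)-u^2)}\,\diff u\ge|S|e^{\eta t}$ and $\frac1t\log\int\cdots\to L\ge\eta>0$, contradicting \prettyref{eq:alpha.legit}.

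For the converse, given measurable $\alpha$ satisfying \prettyref{eq:alpha.legit} — which in particular forces $\psi(t)<\infty$ for all large $t$ — I would set $g_n(y)\triangleq c_n\exp\!\bigl(\alpha(y/\sqrt{2\log n})\log n\bigr)\varphi(y)$ for such $n$, with $c_n$ the normalizing constant (and, say, $G_n=\calN(0,1)$ for the finitely many small $n$, which does not affect the limit). The same change of variables gives $c_n^{-1}=\frac{\sqrt{2\log n}}{\sqrt{2\pi}}\,e^{\psi(\log n)}$, so $\log c_n=\log\frac{\sqrt{2\pi}}{\sqrt{2\log n}}-\psi(\log n)=o(\log n)$ by \prettyref{eq:alpha.legit}. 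Since $\ell_n(u\sqrt{2\log n})/\log n=\alpha(u)+\log c_n/\log n$ and the error term is independent of $u$, \prettyref{eq:alphau} holds uniformly. Finally, in the convolutional case $G_n=P_n*\calN(0,1)$ one has $g_n(y)=\int\varphi(y-x)\,\diff P_n(x)$, hence
\[
\ell_n(y)=\log\int_\reals\frac{\varphi(y-x)}{\varphi(y)}\,\diff P_n(x)=\log\int_\reals e^{xy-x^2/2}\,\diff P_n(x),
\]
which is the cumulant generating function of the finite measure $e^{-x^2/2}\,\diff P_n(x)$ and therefore convex in $y$; composing with the linear map $u\mapsto u\sqrt{2\log n}$, dividing by $\log n$, and passing to the pointwise limit preserves convexity, so $\alpha$ is convex.

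I expect the only genuinely delicate point to be the passage from the subsequence $t=\log n$ to a true limit as $t\to\infty$ in \prettyref{eq:alpha.legit}; everything else is bookkeeping with the normalization identity and a change of variables. The convexity of $\psi$ furnishes exactly what is needed, but one must first harvest the finiteness of $\psi$ on a half-line from the lower bound on $\ell_n$ before invoking the monotonicity of the chord slopes.
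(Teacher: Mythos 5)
Your proposal is correct and follows essentially the same route as the paper's own proof: the normalization $\int g_n = 1$ is transformed under $y=u\sqrt{2\log n}$ to control $c(\log n)\triangleq\int n^{\alpha(u)-u^2}\,\diff u$, H\"older's inequality is used to interpolate between the sampled points $t=\log n$, the converse uses the explicit tilted-density construction $g_n(y)\propto\exp\{\alpha(y/\sqrt{2\log n})\log n\}\varphi(y)$ with normalizer $c(\log n)$, and the convolutional convexity is read off from $\ell_n(y)=\log\int e^{xy-x^2/2}\,\diff P_n(x)$ being a cumulant generating function. One small remark worth making: your proof of the forward direction is actually slightly more complete than the paper's — you establish both the upper bound $\psi(\log n)\le\delta\log n+\smallo{\log n}$ and the matching lower bound from the two one-sided uniform estimates on $\ell_n$, and then upgrade the subsequential limit to a genuine limit via the monotone-chord-slope property of the convex function $\psi$; the paper by contrast only carries out the $\limsup\le 0$ half and dismisses the $\liminf$ with the cryptic ``since $c(t)\ge 0$,'' so your version is cleaner on this point.
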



 In many applications, we want to know how fast the optimal error probability decays if $\beta$ lies in the detectable region. The following result gives the precise asymptotics for the Hellinger distance, which also gives upper bounds on the total variation, in view of \prettyref{eq:tvh}.
 \begin{theorem}
Assume that \prettyref{eq:alphau} holds. For any $\beta \geq \frac{1}{2}$, the exponent of the Hellinger distance \prettyref{eq:Hnbeta} is given by
\begin{equation}
	\lim_{n \diverge}\frac{\log H^2_n(\beta)}{\log n} = \sfE(\beta),
	\label{eq:Hnexp}
\end{equation}
where
\begin{align}
\sfE(\beta)
=	& ~ \esssup_{u \in \reals} \{(2(\alpha(u) -\beta))\wedge (\alpha(u) -\beta) - u^2 \} 	\label{eq:Ebeta}\\
=	& ~ \esssup_{u: \alpha(u) \leq \beta} \{2\alpha(u) -2 \beta - u^2 \} \vee \esssup_{u: \alpha(u) > \beta} \{\alpha(u) -\beta - u^2 \} 	\label{eq:Ebeta2}
\end{align}
which satisfies $\sfE(\beta) > -1$ (resp. $\sfE(\beta) < -1$)  if and only if $\beta <\beta^\sharp$ (resp. $\beta >\beta^\sharp$).
	\label{thm:Ebeta}
\end{theorem}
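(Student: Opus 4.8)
The plan is to compute the exponent of $H_n^2(\beta)$ directly from its integral representation. Writing $g_n = \varphi \exp(\ell_n)$ and using $H^2(Q_n,(1-\epsilon_n)Q_n+\epsilon_n G_n) = \int (\sqrt{\varphi}-\sqrt{(1-\epsilon_n)\varphi+\epsilon_n g_n})^2$, a standard manipulation of the Hellinger integrand (using $(\sqrt{a}-\sqrt{b})^2 = (a-b)^2/(\sqrt a+\sqrt b)^2$ together with the pointwise sandwich $ \frac{(a-b)^2}{4\max(a,b)} \le (\sqrt a-\sqrt b)^2 \le \frac{(a-b)^2}{\max(a,b)}$) reduces the problem, up to constants, to estimating
\[
\int_\reals \varphi(y)\, \min\!\Bigl(\epsilon_n^2 \bigl(e^{\ell_n(y)}-1\bigr)^2 ,\; \epsilon_n\,\bigl|e^{\ell_n(y)}-1\bigr|\Bigr)\, \diff y,
\]
since the denominator $(\sqrt a+\sqrt b)^2$ is comparable to $\max(a,b)$, and $\max((1-\epsilon_n)\varphi+\epsilon_n g_n, \varphi)$ is within a constant factor of $\varphi \max(1, \epsilon_n e^{\ell_n})$. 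The two branches of the minimum correspond exactly to the two essential suprema appearing in \prettyref{eq:Ebeta2}: on the region where $\epsilon_n e^{\ell_n}\lesssim 1$ (roughly $\alpha(u)\le\beta$ after the change of variables) the quadratic branch $\epsilon_n^2 e^{2\ell_n}$ dominates, and where $\epsilon_n e^{\ell_n}\gtrsim 1$ (roughly $\alpha(u)>\beta$) the linear branch $\epsilon_n e^{\ell_n}$ dominates. (One must also handle the contribution of the $-1$ in $e^{\ell_n}-1$, but where $\ell_n$ is very negative this term is $O(\epsilon_n^2)\cdot\varphi$, contributing exponent $\le -2\beta$, which is never the dominant term once $\beta\ge 1/2$; and where $\ell_n$ is bounded it is absorbed into the quadratic branch.)

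Next I would evaluate the exponent of each branch by the Laplace/Varadhan method after the substitution $y = u\sqrt{2\log n}$, so that $\diff y = \sqrt{2\log n}\,\diff u$, $\varphi(y) = \frac{1}{\sqrt{2\pi}} n^{-u^2}$, $\epsilon_n = n^{-\beta}$, and, by hypothesis \prettyref{eq:alphau}, $e^{\ell_n(u\sqrt{2\log n})} = n^{\alpha(u)+o(1)}$ uniformly in $u$. The quadratic-branch integral becomes $\asymp \sqrt{\log n}\int_{\alpha(u)\le\beta} n^{2\alpha(u)-2\beta-u^2+o(1)}\diff u$ and the linear-branch integral becomes $\asymp \sqrt{\log n}\int_{\alpha(u)>\beta} n^{\alpha(u)-\beta-u^2+o(1)}\diff u$. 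A Laplace-principle lemma — the same kind of argument that underlies \prettyref{lmm:alpha}, namely that $\lim_n \frac{1}{\log n}\log\int n^{\psi_n(u)}\diff u = \esssup_u \psi(u)$ whenever $\psi_n\to\psi$ uniformly and $\esssup_u(\psi(u)+\delta u^2)<\infty$ for small $\delta$ (so that the tails are controlled by the Gaussian weight) — then yields that the exponent of each branch equals the corresponding essential supremum in \prettyref{eq:Ebeta2}. Taking the max of the two gives \prettyref{eq:Ebeta2}, and the equivalence of \prettyref{eq:Ebeta2} with \prettyref{eq:Ebeta} is the elementary identity $\min(2x,x) = 2x\wedge x$ applied pointwise with $x = \alpha(u)-\beta$ (the partition of $\reals$ according to the sign of $\alpha(u)-\beta$ matches the two branches of $2x\wedge x$). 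The $\sqrt{\log n}$ prefactor and the $2\pi$ constants are subexponential and do not affect the $\log n$-normalized limit.

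Finally, the claim that $\sfE(\beta) > -1$ iff $\beta < \beta^\sharp$ (and $<-1$ iff $\beta>\beta^\sharp$) follows by comparing \prettyref{eq:Ebeta} with the formula \prettyref{eq:main} for $\beta^\sharp$. Indeed, $\sfE(\beta)+1 = \esssup_u\{(2(\alpha(u)-\beta))\wedge(\alpha(u)-\beta) - u^2 + 1\}$, and one checks pointwise that $(2x)\wedge x - u^2 + 1 > 0 \iff x > \tfrac{1}{2}(u^2 - 1)\wedge(u^2-1)/1$... more precisely $(2x)\wedge x > u^2-1 \iff x > \tfrac{u^2-1}{2}$ when $u^2 \ge 1$ (so $2x$ is the smaller branch at the threshold, giving $x>\tfrac{u^2-1}{2}$) and $\iff x > u^2-1$ when $u^2<1$; in both cases this rearranges to $\alpha(u) - \beta > u^2 - \tfrac{u^2\wedge 1}{2} - \tfrac12$, i.e. $\beta < \tfrac12 + \alpha(u) - u^2 + \tfrac{u^2\wedge 1}{2}$. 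Taking esssup over $u$ (and the $0\vee$ coming from the trivial bound $\sfE(\beta)\ge -2\beta+\esssup\{2\alpha-u^2\}$... rather, from $\beta\ge 1/2$ always giving $\sfE\le -1$ unless some $u$ beats it, matching the $0\vee$ in \prettyref{eq:main}) shows $\sfE(\beta)>-1 \iff \beta<\beta^\sharp$, and strict monotonicity of $\beta\mapsto\sfE(\beta)$ (each branch is nonincreasing, strictly so on the relevant region) gives the complementary strict inequality. The main obstacle I anticipate is making the reduction to the two-branch integral fully rigorous with uniform control of the error terms — in particular justifying that the $o(1)$ in $\ell_n = (\alpha(u)+o(1))\log n$, which is uniform in $u$ by assumption, can be pulled through the Laplace estimate without the tails (large $|u|$) spoiling the bound; this is exactly where the integrability condition \prettyref{eq:alpha.legit} from \prettyref{lmm:alpha}, guaranteeing $\alpha(u)\le u^2$ a.e. and hence $\alpha(u)-u^2$ having enough decay, does the work.
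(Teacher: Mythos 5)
The paper's own proof of this theorem is a one-liner: it points back to the bounds $\prettyref{eq:acc5}$, $\prettyref{eq:acc6}$ (lower bounds from the direct part), $\prettyref{eq:H33}$ (the $L_n\le 1$ tail is $O(n^{-2\beta})$), and $\prettyref{eq:cv3}$ (upper bound from the converse part) already established inside the proof of \prettyref{thm:main}. Your argument is a self-contained re-derivation of essentially those same bounds, so it follows the paper's approach in substance rather than taking a genuinely different route. Your reduction $(\sqrt a-\sqrt b)^2\asymp (a-b)^2/\max(a,b)$ is the same device the paper packages as \prettyref{lmm:sqrt}, namely $(\sqrt{1+t}-1)^2\asymp t\wedge t^2$; plugging $t=n^{-\beta}(L_n-1)$ gives exactly your $\min(\epsilon_n|L_n-1|,\;\epsilon_n^2(L_n-1)^2)$ split. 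The substitution $y=u\sqrt{2\log n}$ followed by the Laplace lemma (\prettyref{lmm:laplace}, whose integrability hypothesis is discharged by \prettyref{eq:alpha.legit}) then yields the two essential suprema of \prettyref{eq:Ebeta2}, and the identity $(2x)\wedge x$ with $x=\alpha(u)-\beta$ closes the gap to \prettyref{eq:Ebeta}. So the skeleton is sound and matches the paper.

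Two points worth tightening. First, in your verification that $\sfE(\beta)>-1\iff\beta<\beta^\sharp$, the parenthetical ``more precisely'' clause has the cases reversed: if $u^2\ge 1$ then the threshold $c=u^2-1\ge 0$, and near $x=c$ the active branch of $(2x)\wedge x$ is $x$ (not $2x$), giving the condition $x>u^2-1$; if $u^2<1$ the threshold is negative, the active branch is $2x$, and the condition is $x>(u^2-1)/2$. Your final displayed rearrangement $\beta<\tfrac12+\alpha(u)-u^2+\tfrac{u^2\wedge 1}{2}$ is correct, but the stated reasoning leading to it contradicts it, so that sentence should be fixed. Second, the claim that the $L_n\le 1$ (or $\ell_n$ very negative) contribution ``is never the dominant term once $\beta\ge 1/2$'' needs the additional observation (used implicitly in the paper via $\prettyref{eq:H33}$ combined with $\prettyref{eq:acc5}$--$\prettyref{eq:acc6}$) that $\sfE(\beta)\ge -2\beta$, which follows from $\esssup_u\{\alpha(u)-u^2\}=0$ (a consequence of \prettyref{eq:alpha.legit}): on a positive-measure set $\alpha(u)\approx u^2$, and there $(2(\alpha(u)-\beta))\wedge(\alpha(u)-\beta)-u^2$ is at least roughly $\min(u^2-2\beta,\,-\beta)\ge -2\beta$. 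Spelling this out would close the one real gap in an otherwise correct sketch.
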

 
 As an application of \prettyref{thm:main}, the following result relates the fundamental limit $\beta^*$ of the convolutional models to the classical Ingster-Donoho-Jin detection boundary:
\begin{coro}
Let $G_n = P_n * \calN(0,1)$. Assume that $P_n$	has a density $p_n$ which satisfies that
\begin{equation}
\lim_{n \diverge} \frac{\log p_n(t \sqrt{2\log n})}{\log n} = -f(t)	
	\label{eq:ft}
\end{equation}
uniformly in $t \in \reals$ for some measurable $f: \reals \to \reals$. Then
\begin{align}
\beta^* = \esssup_{t \in \reals} \{\beta^*_{\rm IDJ}(t^2) - f(t)\} \label{eq:betaconv}
\end{align}
where $\beta^*_{\rm IDJ}$ is the Ingster-Donoho-Jin detection boundary defined in \prettyref{eq:betaIDJ}.
	\label{cor:conv}
\end{coro}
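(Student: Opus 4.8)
The plan is to obtain \prettyref{eq:betaconv} by specializing \prettyref{thm:main}. There are two steps: identify the limiting exponent $\alpha(u)=\lim_{n}\ell_n(u\sqrt{2\log n})/\log n$ for the convolutional model $G_n=P_n*\calN(0,1)$, and then simplify the associated $\beta^{\sharp}$ of \prettyref{eq:main}.

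\emph{Step 1: computing $\alpha$.} Since $g_n=p_n*\varphi$, one has $g_n(y)=\int_\reals p_n(x)\varphi(y-x)\diff x$. Evaluating at $y=u\sqrt{2\log n}$, substituting $x=t\sqrt{2\log n}$, and using $\varphi((u-t)\sqrt{2\log n})=\frac{1}{\sqrt{2\pi}}n^{-(u-t)^2}$ together with \prettyref{eq:ft} (so that $p_n(t\sqrt{2\log n})=n^{-f(t)+o(1)}$ uniformly in $t$), the integrand becomes $n^{-f(t)-(u-t)^2+o(1)}$ uniformly in $t$. A Laplace/Varadhan-type evaluation — splitting the $t$-integral into a large bounded window plus a Gaussian tail and exploiting the uniformity of the $o(1)$ — then gives, uniformly in $u$,
\[
\lim_{n\diverge}\frac{\log g_n(u\sqrt{2\log n})}{\log n}=\esssup_{t}\sth{-f(t)-(u-t)^2}.
\]
Subtracting $\log\varphi(u\sqrt{2\log n})=-u^2\log n+O(1)$ and using $u^2-(u-t)^2=2ut-t^2$ yields
\[
\alpha(u)=\lim_{n\diverge}\frac{\ell_n(u\sqrt{2\log n})}{\log n}=\esssup_{t}\sth{2ut-t^2-f(t)}
\]
uniformly in $u$; as a supremum of affine functions of $u$ it is convex, matching \prettyref{lmm:alpha}. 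I would also note $\essinf_t f=0$ (forced by $\int p_n=1$), whence $\alpha(u)\le u^2$; and that $\alpha(u)\ge 2ut_0-t_0^2-f(t_0)$ for any fixed $t_0\ne 0$ at which $f$ is finite, so $\alpha\to\infty$ in one of the directions $u\to\pm\infty$ and, being continuous, $\alpha>0$ on a set of positive measure.

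\emph{Step 2: simplifying $\beta^{\sharp}$.} Insert $\alpha(u)=\esssup_{t}\{2ut-t^2-f(t)\}$ into \prettyref{eq:main}. Since the bracket is continuous in $u$, $\esssup_u=\sup_u$; pulling the $t$-free terms inside $\esssup_t$ and then interchanging $\sup_u$ and $\esssup_t$ — legitimate because the bracket is jointly continuous in $(u,t)$, so one may restrict $u$ to $\rationals$ and use that a countable union of null sets is null — we get
\[
\esssup_{u}\sth{\alpha(u)-u^2+\tfrac{u^2\wedge 1}{2}}=\esssup_{t}\sth{K(t)-f(t)},\qquad K(t):=\sup_u\sth{2ut-t^2-u^2+\tfrac{u^2\wedge1}{2}}.
\]
Evaluating $K$ is precisely the one-variable optimization behind the Ingster--Donoho--Jin boundary \prettyref{eq:betaIDJ}: treating $|u|\le 1$ and $|u|>1$ separately and comparing the candidate maximizers $u\in\{2|t|,\,1,\,|t|\}$ gives $K(t)=\beta^*_{\rm IDJ}(t^2)-\tfrac12$, and $K(t)\ge\tfrac{t^2\wedge1}{2}\ge 0$ (take $u=t$). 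Hence the left side above equals $\esssup_t\{\beta^*_{\rm IDJ}(t^2)-\tfrac12-f(t)\}$, which is $\ge\tfrac12-\essinf_t f=\tfrac12\ge0$ because $\beta^*_{\rm IDJ}\ge\tfrac12$; so the outer ``$0\vee$'' in \prettyref{eq:main} is inactive and
\[
\beta^{\sharp}=\tfrac12+\esssup_{t}\sth{\beta^*_{\rm IDJ}(t^2)-\tfrac12-f(t)}=\esssup_{t}\sth{\beta^*_{\rm IDJ}(t^2)-f(t)}.
\]
Since the limit \prettyref{eq:alphau} holds (so \prettyref{eq:upn} and \prettyref{eq:lown} agree) and $\alpha>0$ on a positive-measure set, \prettyref{thm:main} gives $\beta^*=\beta^{\sharp}$, which is \prettyref{eq:betaconv}.

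\emph{Main obstacle.} The genuinely non-routine part is the uniform-in-$u$ Laplace asymptotics of $g_n(u\sqrt{2\log n})$ in Step 1: the lower bound relies on the ``positive measure of near-maximizers'' argument for the essential supremum, while the upper bound needs the uniformity of the $o(1)$ in \prettyref{eq:ft} plus enough tail decay of $p_n$ (equivalently, growth of $f$ at infinity, which is automatic from $\int p_n=1$) so that the essential-supremum exponent is attained and the Gaussian tail does not dominate; making this uniform as $|u|\to\infty$ requires a little extra bookkeeping. The remaining ingredients — the $\sup/\esssup$ interchange and the computation of $K$ — are elementary.
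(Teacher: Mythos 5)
Your proposal follows the same two-step route as the paper's own proof: compute $\alpha(u)$ for the convolutional model via Laplace's method (the paper cites \prettyref{lmm:laplace} for exactly the display you derive, writing $\alpha(u)=u^2-\essinf_t\{(u-t)^2+f(t)\}$, which is your $\esssup_t\{2ut-t^2-f(t)\}$), then interchange the two essential suprema and recognize the inner one-variable optimization as $\beta^*_{\rm IDJ}(t^2)-\tfrac12$ via \prettyref{eq:betaIDJ1}. So the substance matches; your write-up is if anything slightly more scrupulous than the paper about uniformity in $u$ and about justifying the $\sup/\esssup$ interchange.

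Two small remarks. First, the quantity you need bounded below is $\esssup_t\{\beta^*_{\rm IDJ}(t^2)-\tfrac12-f(t)\}$, and the correct lower bound is $\ge -\essinf_t f=0$ (since $\beta^*_{\rm IDJ}(t^2)-\tfrac12\ge 0$); your intermediate claim ``$\ge \tfrac12-\essinf_t f$'' overshoots by $\tfrac12$, but the weaker correct bound already makes the outer ``$0\vee$'' in \prettyref{eq:main} inactive, so the conclusion is unaffected. Second, you are right that the uniform-in-$u$ Laplace step is the nontrivial ingredient: \prettyref{lmm:laplace} as stated gives a pointwise (in the extra parameter $u$) limit, and the paper invokes it without dwelling on uniformity. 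Your Step 1 spells out what is needed — uniformity of the $o(1)$ in \prettyref{eq:ft} plus the integrability (i.e., growth of $f$) forced by $\int p_n=1$ — which is the honest accounting of what the paper leaves implicit.
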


It should be noted that the convolutional case of the normal mixture detection problem is briefly discussed in \cite[Section 6.1]{CJJ11}, where inner and outer bounds on the detection boundary are given but do not meet. Here \prettyref{cor:conv} completely settles this question. 
See \prettyref{sec:ex} for more examples.

We conclude this subsection with a few remarks on \prettyref{thm:main}.
\begin{remark}[Extremal cases]
Under the assumption that the function $\alpha > 0$ on a set of positive Lebesgue measure, 
the formula \prettyref{eq:main} shows that the fundamental limit $\beta^*$ lies in the very sparse regime ($\frac{1}{2} \leq \beta^* \leq 1$). 
We discuss the two extremal cases as follows: 
\begin{enumerate}
	\item \emph{Weak signal}: Note that $\beta^* = \frac{1}{2}$ if and only if $\alpha(u) \leq u^2 - \frac{u^2 \wedge 1}{2}$ almost everywhere. In this case the non-null effect is too weak to be detected for any $\beta > \frac{1}{2}$. One example is the zero-mean heteroscedastic case $G_n = \calN(0,\sigma^2)$ with $\sigma^2 \leq 2$. Then we have $\alpha(u) \leq \frac{u^2}{2}$. 
	
	\item \emph{Strong signal}: Note that $\beta^* = 1$ if and only if there exists $u$, such that $|u| \geq 1$ and
	\begin{equation}
	\alpha(u) = u^2.
	\label{eq:beta1}
\end{equation}
	 At this particular $u$, the density of the signal satisfies $g_n(u \sqrt{2 \log n})=n^{-o(1)}$, which implies that there exists significant mass beyond $\sqrt{2 \log n}$, the extremal value under the null hypothesis \cite{dHF06}. This suggests the possibility of constructing test procedures based on the \emph{sample maximum}. 
Indeed, to understand the implication of \prettyref{eq:beta1} more quantitatively, let us look at an even weaker condition: there exists $u$ such that $|u| \geq 1$ and
\begin{equation}
\limsup_{n\diverge} \frac{1}{\log n} \log \frac{1}{\prob{u^{-1} Y_n \geq  \sqrt{2 \log n}}} = 0,
	\label{eq:tail}
\end{equation}
which, as shown in \prettyref{app:tail}, implies that $\beta^* = 1$. 

\end{enumerate}

	\label{rmk:extreme}
\end{remark}


\begin{remark}
	In general $\beta^*$ need not exist. Based on \prettyref{thm:main}, it is easy to construct a Gaussian mixture where $\betau$ and $\betal$ do not coincide.
	 For example, let $\alpha_0$ and $\alpha_1$ be two measurable functions which satisfy \prettyref{lmm:alpha} and give rise to different values of $\betas$ in \prettyref{eq:main}, which we denote by $\betas_0 < \betas_1$. Then there exist sequences of distributions $\{G_n^{(0)}\}$ and $\{G_n^{(1)}\}$ which satisfy \prettyref{eq:alphau} for $\alpha_0$ and $\alpha_1$ respectively. Now define $\{G_n\}$ by $G_{2k}=G_k^{(0)}$ and $G_{2k+1}=G_k^{(1)}$. Then by \prettyref{thm:main}, we have $\betal = \betas_0 < \betau = \betas_1$.
\end{remark}




\subsection{Non-Gaussian mixtures}
	\label{sec:nong}
	The detection boundary in \cite{Ingster97,DJ.HC} is obtained by deriving the limiting distribution of the log-likelihood ratio which relies on the normality of the null hypothesis. In contrast, our approach is based on analyzing the sharp asymptotics of the Hellinger distance. This method enables us to generalize the result of \prettyref{thm:main} to sparse non-Gaussian mixtures \prettyref{eq:HT.nG}, where we even allow the null distribution $Q_n$ to vary with the sample size $n$. 

\begin{theorem}
Consider the hypothesis testing problem \prettyref{eq:HT.nG}.
Let $G_n \ll Q_n$. Denote by $F_n$ and $z_n$ the CDF and the quantile function of $G_n$, respectively, \ie,
\begin{equation}
	z_n(p) = \inf\{y \in \reals: F_n(y) \geq p\}.
	\label{eq:zn}
\end{equation}
If the log-likelihood ratio
\begin{equation}
\ell_n	= \log \fracd{G_n}{Q_n}
	\label{eq:ln.nG}
\end{equation}
satisfies
\begin{align}
\lim_{n \diverge} \sup_{s \geq (\log_2 n)^{-1}} \left|\frac{\ell_n(z_n(n^{-s})) \vee \ell_n(z_n(1-n^{-s}))}{\log n} - \gamma(s) \right| = & ~ 0
\label{eq:gammas}
\end{align}
as $n \diverge$
\emph{uniformly} in $s \in \reals_+$ for some measurable function $\gamma: \reals_+ \to \reals$. If $\gamma > 0$ on a set of positive Lebesgue measure, then
\begin{equation}
	\beta^* = \frac{1}{2} + 0 \vee \esssup_{s \geq 0} \sth{\gamma(s) - s + \frac{s \wedge 1}{2}} .
	\label{eq:nG}
\end{equation}	
\label{thm:nG}
\end{theorem}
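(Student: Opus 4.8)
The plan is to reduce the general problem to the Gaussian case already handled in \prettyref{thm:main} via a change of variables through the quantile transform. The key observation is that the analysis of the Hellinger distance $H_n^2(\beta)$ in \prettyref{eq:Hnbeta} only ever sees the pushforward of $Q_n$ and $G_n$ under any measurable map, since the Hellinger distance is invariant under a common measurable transformation of the underlying space (indeed it is an $f$-divergence). First I would apply the probability integral transform: let $U_n = F_n(Y_i)$ where $F_n$ is the CDF of $G_n$, so that under $H_1$ the variable $U_n$ is (roughly) uniform on $[0,1]$, and then compose with $\Phi^{-1}$ to land in a standard-Gaussian-like coordinate. Under this map $G_n$ is pushed to something close to $\calN(0,1)$ and $Q_n$ is pushed to a new distribution whose log-density relative to $\varphi$ is, up to the change of variables, $-\ell_n$ evaluated at the corresponding quantile. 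The Jacobian factors from the transform $\Phi^{-1}\circ F_n$ are exactly what convert the quantile-based hypothesis \prettyref{eq:gammas} into the pointwise hypothesis \prettyref{eq:alphau} of \prettyref{thm:main}, with $\alpha$ determined by $\gamma$.

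The more robust route, which avoids fighting the measure-theoretic subtleties of quantile transforms when $G_n$ has atoms or flat stretches of its CDF, is to mimic the \emph{proof} of \prettyref{thm:main} directly rather than invoke it as a black box. That proof (in \prettyref{sec:pfmain}) estimates $H_n^2(\beta) = \int (\sqrt{(1-\epsilon_n)q_n + \epsilon_n g_n} - \sqrt{q_n})^2$ by splitting the integration region according to whether the local likelihood ratio $g_n/q_n = \rexp{\ell_n}$ is large or small compared to $\epsilon_n^{-1} = n^\beta$, giving the two branches $2(\alpha(u)-\beta)$ and $\alpha(u)-\beta$ inside the $\esssup$ of \prettyref{eq:Ebeta}. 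Here I would do the same split but organize the integral by the $G_n$-probability mass: parametrize points $y$ by $s$ such that $y = z_n(n^{-s})$ (lower tail) or $y = z_n(1-n^{-s})$ (upper tail), so that $\diff G_n$ contributes a factor $n^{-s}$-worth of mass at "level $s$" while $\ell_n$ at that point is $\approx \gamma(s)\log n$ by \prettyref{eq:gammas}. Carrying out the same large/small-likelihood-ratio dichotomy on this reparametrized integral produces the exponent $\esssup_{s\ge 0}\{(2(\gamma(s)-\beta))\wedge(\gamma(s)-\beta) - s\}$; setting this equal to $-1$ and solving for the threshold $\beta$ yields precisely the right-hand side of \prettyref{eq:nG}, by the same elementary optimization that turns \prettyref{eq:Ebeta} into \prettyref{eq:main} (note $u^2$ there plays the role of $s$ here, since $\varphi(u\sqrt{2\log n}) = n^{-u^2+o(1)}$ is the null mass near $u$, while $n^{-s}$ is the alternative mass near level $s$). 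The restriction $s \geq (\log_2 n)^{-1}$ in \prettyref{eq:gammas} handles the bulk of the distribution (where $\ell_n$ is $o(\log n)$ and contributes nothing to the exponent), exactly as the analysis near $u=0$ is handled separately in \prettyref{thm:main}.

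For the achievability direction ($\betal \geq$ the claimed value) one then needs, just as in \prettyref{thm:main}, to exhibit that when $\beta$ is below the threshold the Hellinger distance is $\omega(n^{-1})$, which follows from the lower bound on the exponent; for the converse ($\betau \leq$ the claimed value) one uses the matching upper bound together with \prettyref{eq:betauH}. I expect the main obstacle to be the bookkeeping around the \emph{two-sided} nature of \prettyref{eq:gammas}: the hypothesis controls $\ell_n(z_n(n^{-s})) \vee \ell_n(z_n(1-n^{-s}))$, i.e., the \emph{larger} of the lower- and upper-tail log-likelihood ratios at mass level $n^{-s}$, so one must verify that the contributions of the two tails combine correctly (the relevant quantity is a max, hence a single $\esssup$ over $s$ suffices) and that intervals of $y$ with negligible $G_n$-mass but possibly large $\ell_n$ do not contribute — this is where the absolute continuity $G_n \ll Q_n$ and the structure of the quantile function $z_n$ must be used carefully. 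A secondary technical point is ensuring the $\esssup$ over $u\in\reals$ in the Gaussian case corresponds correctly to $\esssup$ over $s \geq 0$ here after the change of variables, which requires checking that the Lebesgue-null sets are preserved appropriately under $z_n$.
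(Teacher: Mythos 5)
Your overall strategy is the same one the paper uses for \prettyref{thm:nG}: replace the change of variables $W = U\sqrt{2\log n}$ of the Gaussian proof by a quantile transform and an exponential reparametrization, then run the identical Laplace-method estimate on the Hellinger integral; and you correctly identify the resulting exponent $\esssup_{s\ge 0}\{(2(\gamma(s)-\beta))\wedge(\gamma(s)-\beta)-s\}$, which matches \prettyref{eq:Ebeta} with $u^2$ traded for $s$. That is exactly how the paper proceeds (it sets $S=\log(1/U)$, $S_n=S/\log n$, and represents $W_n\sim Q_n$ as $W_n\eqdistr z_n(U)$, after which \prettyref{eq:cv2}--\prettyref{eq:cv4} go through verbatim).

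There is, however, one genuine mix-up in your bookkeeping that propagates through both of your routes, and it is worth flagging because it changes the exponent if followed literally: the quantile function that makes this work is the quantile function of the \emph{null} $Q_n$, not of $G_n$. The Hellinger integral $H_n^2(\beta)=\int(\sqrt{1+n^{-\beta}(e^{\ell_n}-1)}-1)^2\,\diff Q_n$ is computed against $Q_n$, so the exponential reparametrization $S_n=S/\log n$ with density $\propto n^{-s}$ must arise from writing a $Q_n$-distributed variable as $z_n(U)$; the factor $n^{-s}$ plays precisely the role that $\varphi(u\sqrt{2\log n})=n^{-u^2+o(1)}$ (the null mass) plays in \prettyref{thm:main}. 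In your sketch you say ``$\diff G_n$ contributes a factor $n^{-s}$-worth of mass'' and again that ``$n^{-s}$ is the alternative mass''; if you carry that attribution through, an extra $-\gamma(s)$ appears from $\diff Q_n = e^{-\ell_n}\diff G_n$ and the exponent becomes $\esssup_s\{(\gamma(s)-2\beta)\wedge(-\beta)-s\}$, which does not reproduce \prettyref{eq:nG}. (The theorem statement as printed does say $F_n,z_n$ are the CDF and quantile of $G_n$, so you were led astray by what is evidently a typo; the proof and the Gaussian specialization $\gamma(s)=\alpha(\sqrt s)\vee\alpha(-\sqrt s)$ both require $z_n$ to be the $Q_n$-quantile, and one can check that with $z_n$ the $G_n$-quantile the IDJ case would give $\beta^*=\tfrac12+3r$ rather than $\tfrac12+r$.) This same confusion sinks your ``route~1'': pushing forward by $\Phi^{-1}\circ F_n$ with $F_n$ the CDF of $G_n$ sends $G_n$ to $\calN(0,1)$ and $Q_n$ to something else, i.e.\ it makes the \emph{alternative component} Gaussian, which is not the hypothesis structure of \prettyref{thm:main}. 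Even the corrected transform $\Phi^{-1}\circ(\text{CDF of }Q_n)$ cannot invoke \prettyref{thm:main} as a black box, because \prettyref{eq:gammas} only controls the maximum $\ell_n(z_n(n^{-s}))\vee\ell_n(z_n(1-n^{-s}))$ over the two tails, which is strictly weaker than the per-$u$ uniform convergence \prettyref{eq:alphau}; you must go through the proof and observe, as you do near the end, that the upper bound uses each tail $\le$ the max and the lower bound needs only that at least one tail matches the max — so the max-only hypothesis suffices, but the reduction is to the proof, not the theorem.

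Aside from the $Q_n$/$G_n$ attribution, the remaining pieces you flag (two-sided tails combining through the max, $s\ge(\log_2 n)^{-1}$ merely enforcing $U\le 1/2$ after splitting into upper/lower tails, the converse via \prettyref{eq:betauH} and the direct part via the Hellinger lower bound, and the null-set bookkeeping for the $\esssup$ under the change of variables) are all correct and are handled the same way in the paper.
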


The function $\gamma$ appearing in \prettyref{thm:nG} satisfies the same condition as in \prettyref{lmm:alpha}. 
Comparing \prettyref{thm:nG} with \prettyref{thm:main}, we see that the uniform convergence condition \prettyref{eq:alphau} is naturally replaced by the uniform convergence of the log-likelihood ratio evaluated at the null quantile. Using the fact that $\frac{z}{1+z^2} \leq \frac{\bar{\Phi}(z)}{\varphi(z)} \leq \frac{1}{z}$ for all $z>0$ \cite[7.1.13]{AS}, which implies that
\begin{equation}
\bar{\Phi}(z) = \frac{\varphi(z)}{z}(1+o(1))
\label{eq:phic}
\end{equation}	
uniformly as $z \diverge$, 
we can recover \prettyref{thm:main} from \prettyref{thm:nG} by setting $\gamma(s) = \alpha(-\sqrt{s}) \vee \alpha(\sqrt{s})$.

\subsection{Decomposition of the alternative}
	\label{sec:decom}
	The results in \prettyref{thm:main} and \prettyref{thm:nG} are obtained under the assumption that the non-null effect $G_n$ is absolutely continuous with respect to the null distribution $Q_n$. Next we show that it does not lose generality to focus our attention on this case. Using the Hahn-Lebesgue decomposition \cite[Theorem 1.6.3]{evans.gariepy}, we can write 
\begin{equation}
G_n = (1-\kappa_n) G_n' + \kappa_n \nu_n	
	\label{eq:Gndecomp}
\end{equation}
 for some $\kappa_n \in [0,1]$, where $G_n' \ll Q_n$ and $\nu_n \perp Q_n$.
 Put 
 \begin{equation}
\epsilon_n' = \frac{\epsilon_n(1-\kappa_n)}{1-\epsilon_n \kappa_n}	\quad\mbox{and}\quad
Q_n'	=  (1-\epsilon_n') Q_n + \epsilon_n' G_n', \label{eq:epsilonp-Gnp}
\end{equation}
which satisfies $Q_n' \ll Q_n$. 
Then 
$
(1-\epsilon_n) Q_n + \epsilon_n G_n
=  (1-\epsilon_n \kappa_n) Q_n' + \epsilon_n \kappa_n \nu_n.
$
By \prettyref{lmm:H2.mix}, 
\begin{equation}
	H^2(Q_n,(1-\epsilon_n) Q_n + \epsilon_n G_n) = \Theta(\epsilon_n \kappa_n \vee H^2((1-\epsilon') Q_n + \epsilon_n' G_n'))
	\label{eq:sand}
\end{equation}
Therefore the asymptotic Hellinger distance of the original problem is completely determined by $\epsilon_n \kappa_n$ and the square-Hellinger distance $H^2((1-\epsilon') Q_n + \epsilon_n' G_n')$, which is also of a \emph{sparse mixture} form, with $(\epsilon_n,G_n)$ replaced by $(\epsilon_n',G_n')$ given in \prettyref{eq:epsilonp-Gnp}.
In particular, we note the following special cases:
\begin{enumerate}
	\item If $\epsilon_n \kappa_n = O(n^{-1})$, then $H^2(Q_n,(1-\epsilon_n) Q_n + \epsilon_n G_n) = o(n^{-1})$ (resp. $\omega(n^{-1})$) if and only if $H^2(Q_n,(1-\epsilon_n') Q_n + \epsilon_n' G_n') = o(n^{-1})$ (resp. $\omega(n^{-1})$),
	which means that detectability of the original sparse mixture coincide with the new mixture.
	\item If $\epsilon_n \kappa_n = \omega(n^{-1})$, then $H^2(Q_n,(1-\epsilon_n) Q_n + \epsilon_n G_n)=\omega(n^{-1})$, which means that the original sparse mixture can be detected reliably. In fact, a trivial optimal test is to reject the null hypothesis if there exists one sample lying in the support of the singular component $\nu_n$.
\end{enumerate}

\section{Adaptive optimality of Higher Criticism tests}
\label{sec:HC}
As discussed in \prettyref{sec:FL}, the fundamental limit $\beta^*$ of testing sparse normal mixtures \prettyref{eq:HT} can be achieved by the likelihood ratio test. However, in general the likelihood ratio test requires the knowledge of the alternative distribution, 
which is typically not accessible in practice. To overcome this limitation, it is desirable to construct \emph{adaptive} testing procedures to achieve the optimal performance simultaneously for a collection of alternatives. This problem is also known as \emph{universal hypothesis testing}. See, \eg, \cite{Hoeffding65,ZZM92,UHMV11} and the references therein, for results on discrete alphabets. The basic idea of adaptive procedures usually involves comparing the empirical distribution of the data to the null distribution, which is assumed to be known.

For the problem of detecting sparse normal mixtures, it is especially relevant to construct adaptive procedures, since in practice the underlying sparsity level and the non-zero priors are usually unknown. Toward this end, Donoho and Jin \cite{DJ.HC} introduced an adaptive test based on Tukey's \emph{higher criticism} statistic. For the special case of \prettyref{eq:HT.IDJ}, \ie, $P_n = \delta_{\sqrt{2 r \log n}}$, it is shown that the higher criticism test achieves the optimal detection boundary \prettyref{eq:betaIDJ} while being adaptive to the unknown non-null parameters  $(\beta,r)$. Following the generalization by Jager and Wellner \cite{JW07} via \renyi divergence, next we explain briefly the gist of the higher criticism test.

Given the data $\ntok{Y_1}{Y_n}$, denote 
the empirical CDF by
\[
\bbF_n(t) = \frac{1}{n}\sum_{i=1}^n \indc{Y_i \leq t},
\]
respectively. Similar to the Kolmogorov-Smirnov statistic \cite[p. 91]{Shorack.Wellner} which computes the $L_\infty$-distance (maximal absolute difference) between the empirical CDF and the null CDF, the higher criticism statistic is the maximal pointwise $\chi^2$-divergence between the null and the empirical CDF. We first introduce a few auxiliary notations. Recall that the $\chi^2$-divergence between two probability measures is defined as
\[
\chi^2(P \, || \, Q) \triangleq \int \pth{\fracd{P}{Q} - 1}^2 \diff Q.
\]
In particular, the binary $\chi^2$-divergence function (\ie, the $\chi^2$-divergence between Bernoulli distributions) is given by
\[
\chi^2(\Bern(p) \, || \, \Bern(q)) = \frac{(p-q)^2}{q(1-q)},
\]
where $\Bern(p)$ denotes the Bernoulli distribution with bias $p$. The higher criticism statistic is defined by
\begin{align}
\HC_n 
\triangleq & ~ 	\sup_{t \in \reals} \sqrt{n \chi^2(\Bern(\bbF_n(t)) \, || \, \Bern(\Phi(t)))} \label{eq:HC}\\
= & ~ \sqrt{n} 	\sup_{t \in \reals} \frac{|\bbF_n(t) - \Phi(t)|}{\sqrt{\Phi(t) \bar{\Phi}(t)}} \label{eq:HCsupt}
\end{align}
Based on the statistics \prettyref{eq:HC}, the higher criticism test declares $H_1$ if and only if
\begin{equation}
\HC_n > \sqrt{2 (1+\delta) \log \log n}	
	\label{eq:HCtest}
\end{equation}
where $\delta > 0$ is an arbitrary fixed constant.

The next result shows that the higher criticism test achieves the fundamental limit $\beta^*$ characterized by \prettyref{thm:main} while being adaptive to all sequences of distributions $\{G_n\}$ which satisfy the regularity condition \prettyref{eq:alphau}. This result generalizes the adaptivity of the higher criticism procedure far beyond the original equal-signal-strength setup in \cite{DJ.HC} and the heteroscedastic extension in \cite{CJJ11}.

\begin{theorem}
	Under the same assumption of \prettyref{thm:main}, for any $\beta > \beta^*$, the sum of Type-I and Type-II error of the higher criticism test \prettyref{eq:HCtest} vanishes as $n\diverge$.
	\label{thm:HC}
\end{theorem}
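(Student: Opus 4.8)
The higher criticism test \prettyref{eq:HCtest} uses only the (known) null law $\calN(0,1)$, with no knowledge of $\beta$ or of the sequence $\{G_n\}$; the claim is that it is nonetheless consistent for every $\beta<\beta^*=\beta^\sharp$, \ie, throughout the detectable region. I would take $\beta\ge\tfrac12$ (for $\beta<\tfrac12$ one has $\epsilon_n\gg n^{-1/2}$ and the argument below applies a fortiori, with a cutoff $t_n$ of small order, or one may invoke a crude empirical-moment test). The two error probabilities are controlled separately. The Type-I error is a statement purely about the i.i.d.\ standard normal model: the slowly growing critical value $\sqrt{2(1+\delta)\log\log n}$ is calibrated precisely so that $\prob{\HC_n>\sqrt{2(1+\delta)\log\log n}\mid H_0^{(n)}}\to 0$ for every fixed $\delta>0$. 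This rests on the Darling--Erd\H{o}s / Jaeschke--Eicker limit theory for the supremum of the normalized uniform empirical process, together with a truncation argument of Donoho--Jin \cite{DJ.HC} (see also Jager--Wellner \cite{JW07}) controlling the extreme-tail range of $t$ where $\Phi(t)\bar{\Phi}(t)$ is minuscule; I would invoke it as a black box, since it does not involve $\{G_n\}$.

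The substance is the Type-II error, \ie\ showing that under $H_1^{(n)}$ one has $\HC_n\to\infty$ --- in fact polynomially fast, so well past the critical value --- with probability tending to one. Since $\HC_n$ is a supremum over all $t$, it suffices to exhibit a single \emph{deterministic} cutoff $t_n$ at which the normalized discrepancy already explodes; this cutoff is a device in the analysis, not part of the test. Take $t_n=u\sqrt{2\log n}$ with a constant $u>0$ to be chosen, and let $K_n=\#\{i\le n:Y_i>t_n\}$, so that $\HC_n\ge |K_n-n\bar{\Phi}(t_n)|/\sqrt{n(1-\bar{\Phi}(t_n))\bar{\Phi}(t_n)}$. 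Using $\varphi(u\sqrt{2\log n})=n^{-u^2}/\sqrt{2\pi}$, the estimate $\bar{\Phi}(z)=(1+o(1))\varphi(z)/z$ of \prettyref{eq:phic}, and the hypothesis \prettyref{eq:alphau} written as $g_n=\varphi\,e^{\ell_n}$ with $\ell_n(v\sqrt{2\log n})=(\alpha(v)+o(1))\log n$ uniformly, Laplace's method gives
\begin{equation}
\bar{\Phi}(t_n)=n^{-u^2+o(1)},\qquad \bar{G}_n(t_n)=n^{-\psi(u)+o(1)},\qquad \psi(u)\triangleq\inf_{v\ge u}(v^2-\alpha(v))\ge 0 ,
	\label{eq:pf-HC-exps}
\end{equation}
the last inequality because $\alpha\le(\cdot)^2$ a.e.\ (\prettyref{lmm:alpha}). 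Choosing $u$ so that $\alpha$ nearly attains the essential supremum in \prettyref{eq:main} makes $\psi(u)<u^2$, so the shift of $\Expect[K_n]$ away from its null value $n\bar{\Phi}(t_n)$ equals $n\epsilon_n\bar{G}_n(t_n)(1+o(1))=n^{\,1-\beta-\psi(u)+o(1)}$.

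It remains to compare this shift with the fluctuations of $K_n$, and here the two sub-regimes of the boundary \prettyref{eq:main} surface. If $u^2\ge 1$, then $n\bar{\Phi}(t_n)\to 0$: under $H_1^{(n)}$, $K_n$ is approximately Poisson with mean $n^{1-\beta-\psi(u)+o(1)}$, so when $\beta<1-\psi(u)$ this mean diverges, $K_n\to\infty$, and $\HC_n\ge K_n/\sqrt{n\bar{\Phi}(t_n)}=n^{\,\frac12-\beta-\psi(u)+u^2/2+o(1)}\to\infty$ with probability $\to 1$; optimizing over $u\ge 1$ covers every $\beta$ below the ``$v^2\ge 1$'' branch of \prettyref{eq:main}. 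If $u^2<1$, then $n\bar{\Phi}(t_n)\to\infty$ and $K_n$ has null standard deviation of order $\sqrt{n\bar{\Phi}(t_n)}=n^{(1-u^2)/2+o(1)}$; a Chernoff/Bernstein bound for the binomial $K_n$ then gives $\HC_n\ge n^{\,\frac12-\beta-\psi(u)+u^2/2+o(1)}\to\infty$ with probability $\to 1$ whenever $\beta<\tfrac12+\tfrac{u^2}{2}-\psi(u)$; optimizing over $u\in(0,1)$ covers every $\beta$ below the ``$v^2<1$'' branch. A short bookkeeping step --- substituting $\psi(u)=\inf_{v\ge u}(v^2-\alpha(v))$ and exchanging the two suprema --- shows that the union over $u>0$ of these ranges of $\beta$ is exactly $\{\beta<\beta^\sharp\}$. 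Hence for every $\beta<\beta^*$ we have $\HC_n\to\infty$ with probability $\to 1$ under $H_1^{(n)}$, so the Type-II error vanishes, and together with the Type-I bound the theorem follows.

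I expect the main difficulty to be twofold. First, the Type-I bound: the naive supremum of the normalized empirical process is not tight, so one must import the correct $\log\log n$ normalization and, separately, bound the contribution of $|t|$ large via a union bound over a grid of cutoffs --- this is the technically heaviest classical ingredient. Second, on the alternative side, the delicate points are not the single-$t$ deviation computation itself but (i) locating the optimal $t_n$, \ie\ the maximizing $u$, which is where the truncation $u^2\wedge 1$ in \prettyref{eq:main} and the two regimes come from, and (ii) upgrading the mean estimates in \prettyref{eq:pf-HC-exps} to a with-high-probability statement \emph{uniformly} in $n$, which requires exponential-tail control of $K_n$ and, crucially, the fact that since $\beta$ is \emph{strictly} below $\beta^\sharp$ the signal beats the relevant standard deviation by a fixed polynomial factor $n^{c}$, $c=c(u,\beta)>0$, along the sequence; the strict inequality in the statement is exactly what secures this margin.
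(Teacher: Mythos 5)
Your proposal is correct and follows essentially the same route as the paper's proof (which, note, establishes the claim for $\beta<\beta^*$ — the ``$\beta>\beta^*$'' in the statement is a typo you have implicitly corrected): handle Type~I by the classical $\sqrt{2\log\log n}$ normalization of the normalized uniform empirical process, then for Type~II lower-bound $\HC_n$ at a single deterministic cutoff $u\sqrt{2\log n}$, extract the tail exponent of $G_n$ by Laplace's method (the paper's \prettyref{lmm:Gntail}), show the mean shift of the exceedance count beats its fluctuations by a polynomial factor whenever $\beta$ is strictly below the resulting bound (the paper uses one Chebyshev bound whose two error terms correspond exactly to your Poisson and Chernoff regimes, keeping the cutoff inside $(0,\sqrt{2\log n})$ and letting the $\esssup_{q\geq s}$ inside the tail exponent absorb the strong-signal branch that you instead reach by letting $u\geq 1$), and finally optimize over the cutoff. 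The only step you compress that the paper spells out is the closing ``bookkeeping'': passing from the pointwise supremum over cutoffs $s$ of $\frac{1+s}{2}+v(s)$ to the essential supremum defining $\beta^\sharp$, which the paper justifies by proving $v$ is lower semicontinuous (\prettyref{lmm:gs}) so that supremum and essential supremum coincide.
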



\section{Examples}
	\label{sec:ex}
In this section we particularize the general result in \prettyref{thm:main} to several interesting special cases to obtain explicit detection boundaries. 
\subsection{Ingster-Donoho-Jin detection boundary}
\label{sec:ex.IDJ}
We derive the classical detection boundary \prettyref{eq:betaIDJ} from \prettyref{thm:main} for the equal-signal-strength setup \prettyref{eq:HT.IDJ}, which is a convolutional model with signal distribution
\begin{equation}
P_n = \delta_{\mu_n}	
	\label{eq:IDJPn}
\end{equation}
 and $\mu_n$ in \prettyref{eq:r}.
The log-likelihood ratio is given by  
\[
\ell_n(y) = \log \frac{\varphi(y-\mu_n)}{\varphi(y)} =  - \frac{\mu_n^2}{2} + \mu_n y = -r \log n + \sqrt{2 r \log n} \, y.
\]
 Plugging in $y = u \sqrt{2 \log n}$, we have $\ell_n(u  \sqrt{2 \log n}) = - r \log n  + 2u \sqrt{r} \log n$. Consequently, the condition \prettyref{eq:alphau} is fulfilled uniformly in $u \in \reals$ with
\begin{equation}
\alpha(u) = 2 u \sqrt{r}-r.
	\label{eq:alphaidj}
\end{equation}
Straightforward calculation yields that
\begin{equation}
\esssup_{u \in \reals} \sth{2u\sqrt{r}  - r - u^2 + \frac{u^2 \wedge 1}{2}} = 
\begin{cases}
	r & 0 < r \leq \frac{1}{4} \\
	\frac{1}{2}-(1-\sqrt{r})_+^2 & r > \frac{1}{4}.
\end{cases}	
	\label{eq:betaIDJ1}
\end{equation}
Applying \prettyref{thm:main}, we obtain the desired expression \prettyref{eq:betaIDJ} for $\beta^*(r)$.

As a variation of \prettyref{eq:IDJPn}, the symmetrized version of \prettyref{eq:IDJPn}
\begin{equation}
P_n = \frac{1}{2}(\delta_{\mu_n} + \delta_{-\mu_n})	
	\label{eq:rademacher}
\end{equation}
was considered in \cite[Section 8.1.6]{IS03}, whose detection boundary is shown to be identical to \prettyref{eq:betaIDJ}.
Indeed, for binary-valued signal distributed according to \prettyref{eq:rademacher},
we have
\begin{align}
\ell_n(u \sqrt{2 \log n})
= & ~ -\frac{\mu_n^2}{2} + \log \cosh (\mu_n u \sqrt{2 \log n}) \nonumber \\
= & ~ -r \log n + \log (n^{2u\sqrt{r}} + n^{-2u\sqrt{r}}) -\log 2 \nonumber
\end{align}
which gives rise to
\begin{equation}
\alpha(u) = 2 |u| \sqrt{r}-r	
	\label{eq:alphaIS}
\end{equation}
Comparing \prettyref{eq:alphaIS} with \prettyref{eq:alphaidj} and \prettyref{eq:betaIDJ1}, we conclude that 
the detection boundary \prettyref{eq:betaIDJ} still applies. 

\subsection{Dilated signal distributions}
	\label{sec:ex.dilate}
Generalizing both the unary and binary signal distributions in \prettyref{sec:ex.IDJ}, we consider $P_n$ that is the distribution of the random variable 
\begin{equation}
	X_n = \mu_n X
	\label{eq:dilate}
\end{equation}
where $\mu_n > 0$ is a sequence of positive numbers and $X$ is distributed according to a fixed distribution $P$, parameterizing the shape of the signal. In other words, $P_n$ is the dilation of $P$ by $\mu_n$. We ask the following question: By choosing the sequence $\mu_n$ and the random variable $X$, is it possible to have detection boundaries which are shaped differently than the classical Ingster-Donoho-Jin detection boundary?

It turns out that for $\mu_n = \sqrt{2 \log n}$,
the answer to the above question is negative. As the next theorem shows, the detection boundary is given by that of the classical setup rescaled by the $L_\infty$-norm of $X$. Note that \prettyref{eq:IDJPn} and \prettyref{eq:rademacher} corresponds to $P = \delta_{\sqrt{r}}$ and $P = \frac{1}{2} (\delta_{\sqrt{r}} + \delta_{-\sqrt{r}})$, respectively.
\begin{coro}
Consider the convolutional model $G_n = P_n * \calN(0,1)$, where $P_n$ is the distribution of $\sqrt{2 \log n} X$. Then
\begin{equation}
	\beta^* = \beta^*_{\rm IDJ}(\linf{X}^2) =
	\begin{cases}
	\linf{X}^2 + \frac{1}{2} & 0 < \linf{X} \leq \frac{1}{2} \\
	1-(1-\linf{X})_+^2 & \linf{X} > \frac{1}{2}.
\end{cases}
	\label{eq:betadilate}
\end{equation}
	\label{cor:dilate}
\end{coro}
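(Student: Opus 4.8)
The plan is to apply \prettyref{thm:main} directly, exactly as is done for the unary and Rademacher cases in \prettyref{sec:ex.IDJ}. Write $P$ for the law of $X$, set $M \triangleq \linf{X} = \sup_{x\in\supp P}|x|$ and $d(u) \triangleq \dist(u,\supp P)$; we may assume $M>0$, since otherwise $G_n = \calN(0,1) = Q_n$ and there is nothing to test. Since $G_n = P_n*\calN(0,1)$ with $P_n$ the law of $\sqrt{2\log n}\,X$, its density satisfies
\[
\frac{g_n(u\sqrt{2\log n})}{\varphi(u\sqrt{2\log n})} = \int_\reals \exp\!\left((\log n)(2ux-x^2)\right) P(\diff x) = n^{u^2}\int_\reals n^{-(x-u)^2}\,P(\diff x),
\]
so $\ell_n(u\sqrt{2\log n})/\log n = u^2 + (\log n)^{-1}\log\int_\reals n^{-(x-u)^2}P(\diff x)$. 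First I would identify, by a Laplace-type estimate, the limit of the second term as $-d(u)^2$, so that the natural candidate for the exponent function in \prettyref{thm:main} is $\alpha(u) = u^2 - d(u)^2$ (which for $P=\delta_{\sqrt r}$ recovers \prettyref{eq:alphaidj}).

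Next I would check the hypotheses of \prettyref{thm:main}. Because $(x-u)^2\ge d(u)^2$ for $P$-a.e.\ $x$, the bound $\ell_n(u\sqrt{2\log n})/\log n \le u^2-d(u)^2$ holds \emph{exactly}, for every $n$ and every $u$; thus \prettyref{eq:lown} holds uniformly with this $\alpha$ and \prettyref{thm:main} already gives $\betau\le\beta^\sharp$. For the matching lower bound I would establish the Laplace asymptotics uniformly on each window $\{|u|\le K\}$ — this follows from a compactness argument giving a uniform positive lower bound on the local mass $P(\{x:|x-u|\le d(u)+\delta\})$ for $|u|\le K$. To meet the \emph{global} uniformity required by \prettyref{eq:upn}, I would apply \prettyref{thm:main} not to $\alpha$ but to the truncation $\alpha_K$ that equals $\alpha$ on $[-K,K]$ and equals the crude minorant $u^2-(|u|+R)^2-1$ outside, with $R$ chosen so $P([-R,R])>0$; the elementary inequality $\int_\reals n^{-(x-u)^2}P(\diff x)\ge n^{-(|u|+R)^2}P([-R,R])$ shows $\liminf_n \ell_n(u\sqrt{2\log n})/\log n \ge \alpha_K(u)$ uniformly on $\{|u|>K\}$ too, and $\alpha_K$ inherits the positivity requirement since $\alpha(x_0)=x_0^2>0$ for any nonzero $x_0\in\supp P$. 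Because the penalty makes $\alpha_K(u)-u^2+(u^2\wedge1)/2$ strictly negative for $|u|>K$, the $\esssup$ in \prettyref{eq:main} for $\alpha_K$ equals that for $\alpha$ once $K$ is large, yielding $\betal\ge\beta^\sharp$ and hence $\beta^*=\beta^\sharp$.

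The remaining step is a one-variable optimization: evaluating $\beta^\sharp = \frac12 + 0\vee\esssup_{u}\{(u^2\wedge1)/2 - d(u)^2\}$. Using $|x|\le M$ on $\supp P$ one gets $d(u)\ge(|u|-M)_+$, so this $\esssup$ is $\le\esssup_u\{(u^2\wedge1)/2-((|u|-M)_+)^2\}$; and since $M$ or $-M$ lies in the closed set $\supp P$, $d(u)\le\min(|u-M|,|u+M|)$ gives the reverse inequality against $(u^2\wedge1)/2-(u\mp M)^2$. A short check shows all of these equal $\esssup_u\{(u^2\wedge1)/2-(u-M)^2\}$, which by the computation \prettyref{eq:betaIDJ1} (with $r=M^2$) equals $\beta^*_{\rm IDJ}(M^2)-\tfrac12$ (and equals $\tfrac12$ when $M=\infty$, consistent with the ``strong signal'' case of \prettyref{rmk:extreme}). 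Plugging $M^2=\linf{X}^2$ into \prettyref{eq:betaIDJ} then gives \prettyref{eq:betadilate}.

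The only genuinely non-routine point is the global uniformity needed to invoke \prettyref{thm:main}: the convergence $\ell_n(u\sqrt{2\log n})/\log n\to\alpha(u)$ is uniform only on bounded $u$-sets and in fact fails globally when $\supp P$ is unbounded (a tiny atom of $P$ far out contributes a term of order $|u|$ that is not $o(\log n)$ uniformly), so the achievability half of \prettyref{thm:main} must be run with the truncated exponent $\alpha_K$ rather than with $\alpha$ itself. The converse half causes no trouble because the relevant inequality there holds identically; and the Laplace estimate on compacts and the final optimization are standard.
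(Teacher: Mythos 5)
Your proof is correct, and it both reproduces the paper's answer and repairs a loose point in the paper's own argument. The paper computes the same exponent as you do — indeed $\esssup_X\{-X^2+2uX\}=u^2-\dist(u,\supp P)^2$, so your $\alpha(u)=u^2-d(u)^2$ is literally the paper's $\alpha$ — and then finishes with a slicker one-line interchange: $\esssup_u\esssup_X\{-X^2+2uX-u^2+\tfrac{u^2\wedge1}{2}\}=\esssup_X\esssup_u\{\cdots\}=\esssup_X\beta^*_{\rm IDJ}(X^2)-\tfrac12$, which avoids your case analysis on $d(u)$ versus $(|u|-M)_+$ and $|u\mp M|$. So the optimization step is the same computation carried out in two different coordinate systems, with the paper's being a bit cleaner. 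The substantive difference is your treatment of the uniformity hypothesis in \prettyref{thm:main}: you are right that the convergence $\ell_n(u\sqrt{2\log n})/\log n\to\alpha(u)$ furnished by \prettyref{lmm:laplace} is pointwise (or locally uniform on compacts via the lower-semicontinuity-of-local-mass argument you sketch), but is generally \emph{not} uniform over $u\in\reals$ when $\supp P$ is unbounded — e.g.\ for $X\sim\calN(0,1)$ one computes $\ell_n(u\sqrt{2\log n})/\log n-u^2=-u^2/(2\log n+1)-O(\log\log n/\log n)$, which is not $o(1)$ uniformly. The paper invokes \prettyref{lmm:laplace} and passes directly to \prettyref{thm:main} without addressing this, so read literally its proof of \prettyref{cor:dilate} has the gap you identify. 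Your fix — noting that the upper bound $\ell_n/\log n\le\alpha(u)$ holds \emph{exactly} for every $n$ (no uniformity needed on the converse side), while on the achievability side one replaces $\alpha$ off a compact window $[-K,K]$ by the crude minorant $u^2-(|u|+R)^2-1$, applies \prettyref{thm:main} to the truncated $\alpha_K$, and then lets $K\diverge$ using $\betal\ge\beta^\sharp_K\uparrow\beta^\sharp$ — is a clean and correct way to close it. One small cosmetic point: when you invoke the reverse inequality $d(u)\le\min(|u-M|,|u+M|)$ you should say that since $\supp P$ is closed and $M=\linf{X}<\infty$, at least one of $\pm M$ lies in $\supp P$ (and by symmetry of $\beta^*_{\rm IDJ}$ it does not matter which); you gesture at this with ``$M$ or $-M$ lies in the closed set $\supp P$'' but it is worth stating the closedness explicitly.
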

\begin{proof}
Recall that $\beta^*_{\rm IDJ}(\cdot)$ denotes the Ingster-Donoho-Jin detection boundary defined in \prettyref{eq:betaIDJ}.
Since the log-likelihood ratio is given by $\ell_n(y) = \expect{\exp(-\frac{X_n^2}{2} + X_n y)}$, we have
\begin{align}
\ell_n(u \sqrt{2\log n})
= & ~ \log \expect{n^{-X^2 + 2 u X}}	\nonumber \\
= & ~ \esssup_X \sth{-X^2 + 2 u X} \log n (1+o(1)) \label{eq:alphadilate}	,
\end{align}
where we have applied \prettyref{lmm:laplace} and the essential supremum in \prettyref{eq:alphadilate} is with respect to $P$, the distribution of $X$. Therefore $\alpha(u) = \esssup_X \sth{-X^2 + 2 u X}$. Applying \prettyref{thm:main} yields the existence of $\beta^*$, given by
\begin{align}
\beta^*
= & ~ \frac{1}{2} + \esssup_{u \in \reals} \sth{\esssup_X \sth{-X^2 + 2 u X} - u^2 + \frac{u^2 \wedge 1}{2}} \nonumber \\
= & ~ \frac{1}{2} + \esssup_X \esssup_{u \in \reals} \sth{-X^2 + 2 u X - u^2 + \frac{u^2 \wedge 1}{2}} \nonumber \\
= & ~ \esssup_X \beta^*_{\rm IDJ}(X^2) \nonumber \\
= & ~ \beta^*_{\rm IDJ}(\linf{X}^2), \label{eq:betadilate0}
\end{align}
where \prettyref{eq:betadilate0} follows from the facts that $\beta^*_{\rm IDJ}(\cdot)$ is increasing and that $\linf{X} = \esssup |X|$. 
\end{proof}

\begin{remark}
\prettyref{cor:dilate} tightens the bounds given at the end of \cite[Section 6.1]{CJJ11} based on the interval containing the signal support.
From \prettyref{eq:betadilate} we see that the detection boundary coincides with the classical case with $\sqrt{r}$ replaced by $L_\infty$-norm of $X$. Therefore, as far as the detection boundary is concerned, only the support of $X$ matters and the detection problem is driven by the maximal signal strength. In particular, for $\linf{X} \geq 1$ or non-compactly supported $X$, we obtain the degenerate case $\beta^* = 1$ (see also \prettyref{rmk:extreme} about the strong-signal regime). However, it is possible that the density of $X$ plays a role in finer asymptotics of the testing problem, \eg, the convergence rate of the error probability and the limiting distribution of the log-likelihood ratio at the detection boundary. 
\end{remark}

One of the consequences of \prettyref{cor:dilate} is the following: as long as $\mu_n = \sqrt{2 \log n}$, non-compactly supported $X$ results in the degenerate case of $\beta^* =1$, since the signal is too strong to go undetected. However, this conclusion need not be true if $\mu_n$ behaves differently. We conclude this subsection by constructing a family of distributions of $X$ with unbounded support and an appropriately chosen sequence $\{\mu_n\}$, such that the detection boundary is non-degenerate: Let $X$ be distributed according to the following \emph{generalized Gaussian} (Subbotin) distribution $P_{\tau}$ \cite{Taguchi78} with shape parameter $\tau > 0$, whose density is
\begin{equation}
	p_{\tau}(x) = \frac{\tau}{2 \Gamma(\tau)} \exp(-|x|^{\tau}).
	\label{eq:ggaussian}
\end{equation}
Put $\mu_n = \sqrt{2r} (\log n)^{\frac{1}{2} - \frac{1}{\tau}}$.
Then the density of $X_n$ is given by $v_n(x) = \frac{1}{\mu_n} p(\frac{x}{\mu_n})$. Hence
\[
v_n(t \sqrt{2 \log n}) = \frac{\tau}{2 \Gamma(\tau) \mu_n} n^{- |t|^{\tau} r^{-\frac{\tau}{2}}} , 
\]
which satisfies the condition \prettyref{eq:ft} with
$
f(t) = |t|^{\tau} r^{-\frac{\tau}{2}}. 
$
Applying \prettyref{cor:conv}, we obtain the detection boundary $\beta^*$ (a two-dimensional \emph{surface} parametrized by $(r,\tau)$ shown in \prettyref{fig:ggaussian}) as follows
\begin{equation}
\beta^*
= \sup_{t \in \reals} \{\beta^*_{\rm IDJ}(t^2) - |t|^{\tau} r^{-\frac{\tau}{2}} \} 
= \sup_{z \geq 0} \{\beta^*_{\rm IDJ}(r z^2) - z^{\tau} \} \label{eq:betaggaussian}
\end{equation}
where \prettyref{eq:betaIDJ} is the  Ingster-Donoho-Jin detection boundary.

\medskip

\begin{figure}[htp]
	\centering
	\begin{overpic}[
	width=.6\columnwidth]{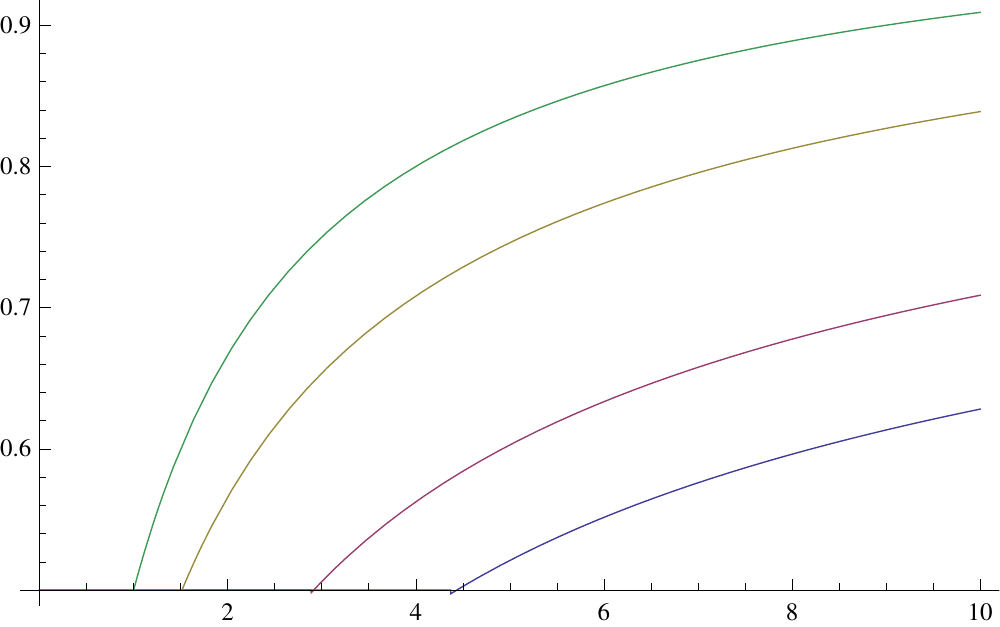}
\put(8,60){$\beta^*$}
\put(102,3){$r$}
\put(60,57){$\tau = 2$}
\put(60,46){$\tau = 1.5$}
\put(60,26){$\tau = 1$}
\put(60,15){$\tau = 0.8$}
\end{overpic}
\caption{Detection boundary $\beta^*$ given by \prettyref{eq:betaggaussian} as a function of $r$ for various values of $\tau$.}
	\label{fig:ggaussian}
\end{figure}

Equation \prettyref{eq:betaggaussian} can be further simplified for the following special cases.
\begin{itemize}
	\item $\tau = 1$ (Laplace):  
	Plugging \prettyref{eq:betaIDJ} into \prettyref{eq:betaggaussian}, straightforward computation yields
	\[
	\beta^*  = \frac{1}{2} \vee \pth{1-\frac{1}{2\sqrt{r}}}_+^2  = 
	\begin{cases}
	\pth{1-\frac{1}{2\sqrt{r}}}^2 & r > \frac{3}{2}+\sqrt{2}	\\
\frac{1}{2} & r \leq \frac{3}{2}+\sqrt{2}
\end{cases}.
	\]
	\item $\tau = 2$ (Gaussian): In this case we have $X \sim \calN(0,\frac{1}{2})$ and $X_n \sim \calN(0,r)$. This is a special case of the heteroscedastic case in \cite{CJJ11}, which will be discussed in detail in \prettyref{sec:ex.hetero}. Simplifying \prettyref{eq:betaggaussian} we obtain
	\[
	\beta^*  = \frac{1}{2} \vee \frac{r}{1+r},
	\]
	which coincides with \prettyref{eq:betacjjr0}.
\end{itemize}

\subsection{Heteroscedastic normal mixture}
\label{sec:ex.hetero}
The heteroscedastic normal mixtures considered in \prettyref{eq:HT.CJJ} corresponds to
\[
G_n = \calN(\mu_n, \sigma^2)	
\]
 with $\mu_n$ given in \prettyref{eq:r} and $\sigma^2 \geq 0$. In particular, if $\sigma^2 \geq 1$, $G_n$ is given by the convolution $G_n = \Phi * P_n$, where the Gaussian component $P_n = \calN(\mu_n, \sigma^2-1)$ models the variation in the signal amplitude. 

For any $u \in \reals$,
\begin{equation}
	\ell_n(u \sqrt{2 \log n}) = \log \frac{\varphi\pth{\frac{u \sqrt{2 \log n}-\mu_n}{\sigma}}}{\varphi(u \sqrt{2 \log n})} 
	= \alpha(u) \log n,
\end{equation}
where
\[
\alpha(u) = u^2 - \frac{(u - \sqrt{r})^2}{\sigma^2}.
\]
Similar to the calculation in \prettyref{sec:ex.IDJ}, we have\footnote{In the first case of \prettyref{eq:sups1} it is understood that $\frac{0}{0}=0$.}
\begin{equation}
\sup_{0 \leq s \leq 1}\sth{\alpha(s)- \frac{s}{2}} = 
\begin{cases}
	\frac{r}{2-\sigma^2} & 2 \sqrt{r} + \sigma^2 \leq 2 \\
	\frac{1}{2}-\frac{(1-\sqrt{r})_+^2}{\sigma^2} & 2 \sqrt{r} + \sigma^2 > 2
\end{cases}	
	\label{eq:sups1}
\end{equation}
and
\begin{equation}
\sup_{s \geq 1}\sth{\alpha(s)-s} = -\frac{(1-\sqrt{r})_+^2}{\sigma^2}.
	\label{eq:sups2}
\end{equation}
Note that $\frac{r}{2-\sigma^2} - (\frac{1}{2}-\frac{(1-\sqrt{r})_+^2}{\sigma^2}) \geq \frac{(\sigma^2+2\sqrt{2}-2)^2}{2\sigma^2(2-\sigma^2)} \geq 0$ if $2 \sqrt{r} + \sigma^2 \leq 2$. Assembling \prettyref{eq:sups1} -- \prettyref{eq:sups2} and applying \prettyref{thm:main}, we have
\begin{align}
\beta^*(r,\sigma^2)
= & ~ \frac{1}{2}+ \pth{\frac{r}{2-\sigma^2}} \vee \pth{\frac{1}{2}-\frac{(1-\sqrt{r})_+^2}{\sigma^2}} \\
= & ~ \begin{cases}
	\frac{1}{2}+\frac{r}{2-\sigma^2} & 2 \sqrt{r} + \sigma^2 \leq 2 \\
	1-\frac{(1-\sqrt{r})_+^2}{\sigma^2} & 2 \sqrt{r} + \sigma^2 > 2.
\end{cases}	
	\label{eq:bscjj}
\end{align}
Solving the equation $\beta^*(r,\sigma^2)=\beta$ in $r$ yields the equivalent detection boundary \prettyref{eq:cjjr} in terms of $r$.
In the special case of $r=0$, where the signal is distributed according to $P_n = \calN(0,\tau^2)$, we have 
\begin{equation}
\beta^*(0,1+\tau^2) = \frac{\tau^2 \vee 1}{1+\tau^2 \vee 1}. 
	\label{eq:betacjjr0}
\end{equation}
Therefore, as long as the signal variance exceeds that of the noise, reliable detection is possible in the very sparse regime $\beta > \frac{1}{2}$, even if the average signal strength does not tend to infinity.

\subsection{Non-Gaussian mixtures}
	\label{sec:ex.nong}


We consider the detection boundary of the following generalized Gaussian location mixture which was studied in \cite[Section 5.2]{DJ.HC}:
\begin{equation}
H_{0}^{(n)}:  Y_i \, \iiddistr \, P_\tau(\cdot) \quad \text{versus}\quad
H_{1}^{(n)}:  Y_i \, \iiddistr \, (1-\epsilon_n) (1-\epsilon_n) P_{\tau}(\cdot) + \epsilon_n P_{\tau}(\cdot-\mu_n)	
	\label{eq:HT.poi}
\end{equation}
where $P_{\tau}$ is defined in \prettyref{eq:ggaussian}, and $\mu_n = (r \log n)^{\frac{1}{\tau}}$. Since $z(1-n^{-s}) = z(n^{-s}) = (s\log n)^{\frac{1}{\tau}}(1+o(1))$ uniformly in $s$, \prettyref{eq:gammas} is fulfilled with $\gamma(s) = s - |s^{\frac{1}{\tau}} - r^{\frac{1}{\tau}}|$. Applying \prettyref{thm:nG}, we have
\begin{align}
\beta^*(r)
= & ~ \frac{1}{2} + 0 \vee \sup_{s \geq 0} \pth{- |s^{\frac{1}{\tau}} - r^{\frac{1}{\tau}}| + \frac{s\wedge 1}{2}}	
= \frac{1}{2} + 0 \vee \sup_{u \geq 0} \pth{- |u - r^{\frac{1}{\tau}}| + \frac{u^\tau \wedge 1}{2}}	\nonumber \\	
= & ~ 
\begin{cases}
1 & r > 1 \\
\frac{1+r}{2} & \tau \leq 1, r \leq 1,  \\
\frac{1}{2} + \frac{\frac{1}{2}-  2^{\frac{\tau}{1-\tau}} }{(1 - 2^{\frac{1}{1-\tau}})^{\tau}} r & \tau \geq 1, r < (1 - 2^{\frac{1}{1-\tau}})^{\tau} \\ 
1 - (1-r^{\frac{1}{\tau}})^{\tau} & \tau \geq 1, r \geq (1 - 2^{\frac{1}{1-\tau}})^{\tau}
\end{cases}.
\label{eq:beta.ggn}
\end{align}
It is easy to verify that \prettyref{eq:beta.ggn} agrees with the results in \cite[Theorem 5.1]{DJ.HC}. Similarly, the detection boundary for exponential-$\chi^2_2$ mixture in \cite[Theorem 1.7]{DJ.HC} can also be derived from \prettyref{thm:nG}.

%

\section{Discussions}
\label{sec:discuss}

We conclude the paper with a few discussions and open problems.

\subsection{Moderately sparse regime $0 \leq \beta \leq \frac{1}{2}$}
	\label{sec:dense}	
	
	Our main results in \prettyref{sec:main} only concern the \emph{very sparse} regime $\frac{1}{2} < \beta < 1$. This is because under the assumption in \prettyref{thm:main} that $\alpha > 0$ on a set of positive Lebesgue measure, we always have $\beta^* \geq \frac{1}{2}$. 
	One of the major distinctions between the very sparse and moderately sparse regimes is the effect of symmetrization. To illustrate this point, consider the sparse normal mixture model \prettyref{eq:HT}. Given any $G_n$, replacing it by its symmetrized version $\tG_n(\diff x) \triangleq \frac{G_n(\diff x)+G_n(-\diff x)}{2}$ always increases the difficulty of testing. This follows from the inequality $H^2(\tG_n, \Phi) \leq H^2(G_n,\Phi)$, a consequence of the convexity of the squared Hellinger distance and the symmetry of $\Phi$. 
	A natural question is: Does symmetrization always have an impact on the detection boundary? In the very sparse regime, it turns out that under the regularity conditions imposed in \prettyref{thm:main}, symmetrization does not affect the fundamental limit $\beta^*$, because both $G_n$ and $\tilde{G}_n$ give rise to the same function $\alpha$. It is unclear whether $\betau$ and $\betal$ remain unchanged if an arbitrary sequence $\{G_n\}$ is symmetrized. However, in the moderately sparse regime, an asymmetric non-null effect can be much more detectable than its symmetrized version.
	For instance, direct calculation (see for example \cite[Section 2.2]{CJJ11}) shows that $\beta^*(r) = \frac{1}{2} - r$ for $G_n = \delta_{n^{-r}}$, but $\beta^*(r) = \frac{1}{2} - 2 r$ for ${G}_n = \frac{1}{2}(\delta_{n^{-r}}+\delta_{-n^{-r}})$.



Moreover, unlike in the very sparse regime, moment-based tests can be powerful in the moderately sparse regime, which guarantee that $\betau \geq \frac{1}{2}$. For instance, in the above examples $G_n = \delta_{n^{-r}}$ or ${G}_n = \frac{1}{2}(\delta_{n^{-r}}+\delta_{-n^{-r}})$, the detection boundary can be obtained by thresholding the sample mean or sample variance respectively. More sophisticated moment-based tests such as the excess kurtosis tests have been studied in the context of sparse mixtures \cite{JSDAF05}. It is unclear whether they are always optimal when $\beta < \frac{1}{2}$.

\subsection{Adaptive optimality of higher criticism tests}
	\label{sec:hc.discuss}
	While \prettyref{thm:HC} establishes the adaptive optimality of the higher criticism test in the very sparse regime $\beta > \frac{1}{2}$,  the optimality of the higher criticism test in the moderately sparse case $\beta < \frac{1}{2}$ remains an open question.
	 Note that in the classical setup \prettyref{eq:HT.IDJ}, it has been shown \cite{CJJ11} that the higher criticism test achieves adaptive optimality for $\beta \in [0,\frac{1}{2}]$ and $\mu_n = n^{-r}$. In this case since $\mu_n=o(1)$, we have $\alpha \equiv 0$ and \prettyref{thm:main} thus does not apply. It is possible to obtain a counterpart of \prettyref{thm:main} and an analogous expression for $\beta^*$ for the moderately sparse regime if one assumes a similar uniform approximation property of the log-likelihood ratio, for example, $\ell_n(u \sqrt{\log n}) = n^{-\alpha(u)+o(1)}$ for some function $\alpha$. Another interesting problem is to investigate the optimality of procedures introduced in \cite{JW07} based on \renyi divergence under the same setup of \prettyref{thm:HC}.

%
%


\section{Proofs}
	\label{sec:pf}
\subsection{Auxiliary results}
	\label{sec:pfaux}
	Laplace's method (see, \eg, \cite[Section 2.4]{Erdelyi56}) is a technique for analyzing the asymptotics of integrals of the form $\int \exp(M f) \diff \nu$ when $M$ is large. 
The proof of \prettyref{thm:main} uses the following first-order version of the Laplace's method. 
	Since we are only interested in the exponent (\ie, the leading term), we do not use saddle-point approximation in the usual Laplace's method and impose \emph{no} regularity conditions on the function $f$ except for the finiteness of the integral. Moreover, the exponent only depends on the \emph{essential supremum} of $f$ with respect to $\nu$, which is invariant if $f$ is modified on a $\nu$-negligible set.

	\begin{lemma}
Let $(X,\calF, \nu)$ be a measure space. 
	Let $F: X \times \reals_+ \to \reals_+$ be measurable.
	Assume that
	\begin{equation}
	\lim_{M \diverge} \frac{\log F(x,M)}{M} = f(x)
	\label{eq:mainambda}
\end{equation}
holds uniformly in $x \in X$ for some measurable $f: X \to \reals$. 
	If $\int_{X} \exp(M_0 f) \diff \nu < \infty$ for some $M_0 > 0$, then 
	\begin{equation}
	\lim_{M \diverge} \frac{1}{M} \log \int_{X} F(x,M) \diff \nu = \esssup_{x\in X} f(x).
	\label{eq:laplace}
\end{equation}
	\label{lmm:laplace}
\end{lemma}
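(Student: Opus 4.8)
The plan is to prove \prettyref{eq:laplace} as a pair of matching one-sided bounds, writing $S \triangleq \esssup_{x \in X} f(x)$ and $I_M \triangleq \int_X F(x,M)\diff\nu$. The common first step is to translate the uniform convergence in \prettyref{eq:mainambda} into a two-sided exponential sandwich: for every $\delta > 0$ there is $M_\delta$ so that
\[
e^{M(f(x)-\delta)} \;\leq\; F(x,M) \;\leq\; e^{M(f(x)+\delta)} \qquad \text{for all } x \in X, \ M \geq M_\delta
\]
(note this already forces $F(x,M) > 0$ for $M$ large, uniformly, so all the logarithms below are finite). After disposing of the degenerate cases --- $\nu \equiv 0$, and $S = +\infty$ in the upper bound, where the claim is automatic --- I assume $\nu \not\equiv 0$, so that $S$ is a real number or $+\infty$.

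For the lower bound I would fix $\delta > 0$ and a real $a < S$ and set $A \triangleq \{x : f(x) > a\}$. By definition of the essential supremum, $\nu(A) > 0$; moreover $\nu(A) < \infty$, because the integrability hypothesis gives $e^{M_0 a}\nu(A) \leq \int_A e^{M_0 f}\diff\nu \leq \int_X e^{M_0 f}\diff\nu < \infty$. Restricting the integral to $A$ and using the left half of the sandwich,
\[
I_M \;\geq\; \int_A e^{M(f(x) - \delta)}\diff\nu \;\geq\; e^{M(a - \delta)}\,\nu(A), \qquad M \geq M_\delta,
\]
so $\liminf_{M \diverge} \frac1M \log I_M \geq a - \delta$; letting $\delta \downarrow 0$ and $a \uparrow S$ yields $\liminf_{M\diverge}\frac1M\log I_M \geq S$.

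For the upper bound (now $S < \infty$), the one real subtlety is that $\nu(X)$ may be infinite, so one cannot merely bound $F(x,M) \leq e^{M(S+\delta)}$ and integrate over all of $X$ --- this is exactly where the hypothesis $\int_X e^{M_0 f}\diff\nu < \infty$ does its work. I would use the multiplicative split $e^{Mf(x)} = e^{(M - M_0)f(x)}e^{M_0 f(x)} \leq e^{(M - M_0)S}e^{M_0 f(x)}$, valid $\nu$-a.e.\ for $M \geq M_0$ since $f \leq S$ a.e.; combined with the right half of the sandwich this gives, for $M \geq \max(M_0, M_\delta)$,
\[
I_M \;\leq\; e^{M\delta}\int_X e^{M f(x)}\diff\nu \;\leq\; e^{M\delta}\, e^{(M - M_0)S} \int_X e^{M_0 f(x)}\diff\nu,
\]
hence $\frac1M \log I_M \leq \delta + \frac{M - M_0}{M}S + \frac1M\log\!\pth{\int_X e^{M_0 f}\diff\nu}$, whose $\limsup$ is $\delta + S$; sending $\delta \downarrow 0$ gives $\limsup_{M\diverge}\frac1M\log I_M \leq S$. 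Combining the two bounds proves \prettyref{eq:laplace}. The main (mild) obstacle is this possible infinite-mass issue in the upper bound, resolved by the exponent split above; everything else is routine bookkeeping with the uniform error $\delta$.
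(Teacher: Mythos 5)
Your proof is correct and follows essentially the same route as the paper's: a two-sided exponential sandwich from the uniform convergence, a lower bound by restricting to a superlevel set $\{f>a\}$ of positive measure, and an upper bound that factors out the integrable $e^{M_0 f}$ piece (the paper normalizes $\esssup f = 0$ so that $\int e^{Mf}\,\diff\nu \le \int e^{M_0 f}\,\diff\nu$, while you keep $S$ general and use $e^{Mf}\le e^{(M-M_0)S}e^{M_0 f}$ --- the same idea). The only cosmetic difference is case handling: the paper separates $\esssup f=\infty$ as its own case, whereas you observe the lower bound argument with $a\uparrow S$ covers it automatically and the upper bound is then vacuous.
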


\vspace{-20pt}
\begin{proof}
First we deal with the case of $\esssup f = \infty$, which implies that $\nu(\{f>a\}) > 0$ for all $a > 0$. Moreover, by Chernoff bound, $\nu(\{f>a\}) < \exp(-M_0 a) \int \exp(M_0 f) \diff \nu < \infty$. By \prettyref{eq:mainambda}, for any $\epsilon > 0$, there exists $K>M_0$ such that
\begin{equation}
\exp(M(f(x) - \epsilon)) \leq F(x,M) \leq \exp(M(f(x) + \epsilon))	
	\label{eq:Funif}
\end{equation}
for all $x \in X$ and $M \geq K$. Therefore, $\int F(x,M) \diff \nu \geq \exp(-M \epsilon) \int \exp(M f) \diff \nu \geq \exp(M (a-\epsilon)) \nu(\{f>a\})$ for any $M > 0$ and $a > 0$. Then $\liminf_{M \diverge} \frac{1}{M} \log \int \exp(M f) \diff \nu \geq a-\epsilon$. By the arbitrariness of $a$ and $\epsilon$, we have
 $\lim_{M \diverge} \frac{1}{M} \log \int \exp(M f) \diff \nu =\infty$. 

Next we assume that $\esssup f < \infty$. By replacing $f$ with $f - \esssup f$, we can assume that $\esssup f = 0$ without loss of any generality. Then $f \leq 0$ $\nu$-a.e.  Hence, by \prettyref{eq:Funif},
\[
\int F(x,M) \diff \nu \leq \int \exp(M (f+\epsilon)) \diff \nu \leq \exp(M \epsilon) \int \exp(M_0 f) \diff \nu < \infty
\]
holds for all $M \geq K$. By the arbitrariness of $\epsilon$, we have
\[
\limsup_{M \diverge} \frac{1}{M} \log \int \exp(M f) \diff \nu \leq 0.
\] 
For the lower bound, note that, by the definition of $\esssup f = 0$, $\nu(\{f>-\delta\}) > 0$ for all $\delta > 0$. Therefore, by \prettyref{eq:Funif}, we have
\[
\int F(x,M) \diff \nu \geq \exp(-M \epsilon) \int  \exp(M f) \diff \nu \geq \exp(-M (\delta+\epsilon)) \nu(\{f>-\delta\})
\]
for any $M > 0$ and $\delta > 0$. First sending $M \diverge$ then $\delta \downarrow 0$ and $\epsilon \downarrow 0$, we have 
\[
\liminf_{M \diverge} \frac{1}{M} \log \int \exp(M f) \diff \nu \geq 0,
\]
completing the proof of \prettyref{eq:laplace}.
\end{proof}

The following lemma is useful for analyzing the asymptotics of Hellinger distance:
\begin{lemma}
\begin{enumerate}
	\item \label{sqrt1} For any $b > 0$, the function $s \mapsto (\sqrt{1 + b (s - 1)} - 1)^2$ is strictly convex on $\reals_+$ and strictly decreasing and increasing on $[0,1]$ and $[1,\infty)$, respectively. 
	\item \label{sqrt2} For any $t \geq 0$,
\begin{equation}
(\sqrt{2}-1)^2 t\wedge t^2	\leq (\sqrt{1+t}-1)^2 \leq t\wedge t^2.
	\label{eq:sqrt}
\end{equation}	
\end{enumerate}
	\label{lmm:sqrt}
\end{lemma}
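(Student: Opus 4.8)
The plan is to reduce both parts to elementary one-variable calculus for the function $h(u) \triangleq (\sqrt{u}-1)^2$ on $\reals_+$, observing that the map in part~\ref{sqrt1} is $s \mapsto h(1+b(s-1))$ and the map in part~\ref{sqrt2} is $t \mapsto h(1+t)$.

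First I would record the basic facts about $h$: on $(0,\infty)$ one has $h'(u) = 1 - u^{-1/2}$ and $h''(u) = \tfrac12 u^{-3/2} > 0$, so $h$ is strictly convex, with $h(1)=0$, strictly decreasing on $[0,1]$ and strictly increasing on $[1,\infty)$. For part~\ref{sqrt1}, the map $\phi(s) = 1 + b(s-1)$ is affine and (since $b>0$) strictly increasing, with $\phi(1)=1$; hence $s\mapsto h(\phi(s))$ is the composition of a strictly convex function with a non-constant affine map, which is strictly convex, and the chain rule $\frac{d}{ds}h(\phi(s)) = b\,h'(\phi(s))$ shows its derivative has the sign of $s-1$, giving the claimed monotonicity. (When $b\le 1$, which is the only regime used in the Hellinger computations since there $b=\epsilon_n\in(0,1)$, the radicand $1+b(s-1)=(1-b)+bs$ is positive throughout $\reals_+$; for $b>1$ the same argument applies on the interval $\{s\ge 0: 1+b(s-1)\ge 0\}$, which still contains a neighborhood of $s=1$.)

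For part~\ref{sqrt2}, the key algebraic device is the identity $\sqrt{1+t}-1 = \frac{t}{\sqrt{1+t}+1}$, which gives $(\sqrt{1+t}-1)^2 = t^2/w^2$ where $w \triangleq \sqrt{1+t}+1$ increases from $2$ over $t\in\reals_+$. I would then split at $t=1$, where $t\wedge t^2$ switches monomials: on $[0,1]$ one has $t\wedge t^2 = t^2$ and $(\sqrt{1+t}-1)^2/t^2 = 1/w^2$, which decreases from $\tfrac14$ at $t=0$ to $(\sqrt 2 + 1)^{-2} = (\sqrt 2 - 1)^2$ at $t=1$ (using $(\sqrt 2 + 1)^{-1}=\sqrt 2 - 1$); on $[1,\infty)$ one has $t\wedge t^2 = t$ and, since $t = w^2-2w$, $(\sqrt{1+t}-1)^2/t = 1 - 2/w$, which increases from $(\sqrt 2 - 1)^2$ at $w=\sqrt 2 + 1$ (i.e. $t=1$) up towards $1$. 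In both regimes the ratio $(\sqrt{1+t}-1)^2/(t\wedge t^2)$ stays within $[(\sqrt 2 - 1)^2, 1]$, which is exactly \prettyref{eq:sqrt} (the upper constant $1$ being loose, as the supremum $\tfrac14$ on $[0,1]$ and the limit $1$ on $[1,\infty)$ show).

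I do not expect a genuine obstacle: both parts are routine once phrased through $h$ and the substitution $w=\sqrt{1+t}+1$. The only points needing a little care are the domain in part~\ref{sqrt1} when $b>1$, handled as noted above, and making sure the convexity-under-affine-precomposition step and the chain-rule sign analysis are stated in their strict forms so that the \emph{strict} convexity and \emph{strict} monotonicity assertions are actually obtained.
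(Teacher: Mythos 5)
Your proof is correct; the overall plan (split part~\ref{sqrt2} at $t=1$ and analyze the ratio) mirrors the paper's, but your devices are different in a small but worthwhile way. For part~\ref{sqrt1}, the paper expands $(\sqrt{1+b(s-1)}-1)^2 = 2+b(s-1)-2\sqrt{1+b(s-1)}$ and quotes concavity of $\sqrt{\cdot}$, whereas you factor through $h(u)=(\sqrt{u}-1)^2$ and use the chain rule; the content is the same, but your version also flags the domain caveat for $b>1$ (where $1+b(s-1)<0$ near $s=0$) and notes that only $b=\epsilon_n\le1$ is ever used --- a genuine, if minor, improvement over the paper, which silently asserts strict convexity ``on $\reals_+$'' for all $b>0$. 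For part~\ref{sqrt2}, the paper argues via the fact that for a convex (resp.\ concave) function $f$ with $f(0)=0$, the secant slope $f(t)/t$ is increasing (resp.\ decreasing), applied to $(\sqrt{1+t}-1)^2$ on $[1,\infty)$ and to $\sqrt{1+t}-1$ on $[0,1]$; you instead substitute $w=\sqrt{1+t}+1$ and write the ratio explicitly as $1/w^2$ on $[0,1]$ and $1-2/w$ on $[1,\infty)$. Your substitution makes the monotonicity and the endpoint constants $(\sqrt2-1)^2$, $\tfrac14$, $1$ transparent by inspection, at the cost of a short algebraic identity; the paper's secant-slope argument is a reusable convexity trick that avoids the computation but leaves the extreme values slightly less visible. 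Both are fully rigorous and essentially interchangeable here.
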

\begin{proof}
\begin{enumerate}
	\item Since $t \mapsto \sqrt{1+t}$ is strictly concave, $s \mapsto (\sqrt{1 + b (s - 1)} - 1)^2 = 2 + b(s-1) - 2 \sqrt{b(s-1)}$ is strictly convex. Solving for the stationary point yields the minimum at $s=1$.
	
	\item First we consider $t \geq 1$. Since	$t \mapsto (\sqrt{1+t}-1)^2 = t - 2 \sqrt{1+t}$ is convex, $t \mapsto \frac{(\sqrt{1+t}-1)^2}{t}$ is increasing.
Consequently, we have $(\sqrt{2}-1)^2 \leq \frac{(\sqrt{1+t}-1)^2}{t} \leq 1$ for all $t \in [1,\infty)$.

Next we consider $0 \leq t \leq 1$. By the concavity of $t \mapsto \sqrt{1+t}$, $t \mapsto \frac{\sqrt{1+t}-1}{t}$ is  decreasing. Hence $\sqrt{2}-1 \leq \frac{\sqrt{1+t}-1}{t} \leq \frac{1}{2}$ for all $t \in [0,1]$. Assembling the above two cases yields \prettyref{eq:sqrt}. \qedhere
\end{enumerate}
\end{proof}
	
	The following lemmas are useful in proving \prettyref{thm:HC}:
	\begin{lemma}
Let $f: \reals \to \reals$ be measurable and $\mu$ be any measure on $\reals$. The function $g$ defined by
\[
g(s) = \esssup_{q \geq s} f(q)
\]
is decreasing and lower-semicontinuous, where the essential supremum is with respect to $\mu$.
	\label{lmm:gs}
\end{lemma}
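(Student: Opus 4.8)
The plan is to handle the two assertions separately, with the monotonicity essentially immediate and the lower-semicontinuity reduced to a right-continuity statement. For monotonicity I would simply note that $s \le s'$ implies $\{q \ge s'\} \subseteq \{q \ge s\}$, and that the essential supremum is monotone under set inclusion — any constant that bounds $f$ $\mu$-almost everywhere on the larger set also bounds it on the subset — so $g(s') \le g(s)$. Being monotone, $g$ has one-sided limits everywhere, and at any point $s_0$ one has $\liminf_{s \to s_0} g(s) = \lim_{s \downarrow s_0} g(s) =: g(s_0+)$. Hence lower-semicontinuity of $g$ at $s_0$, i.e.\ $g(s_0) \le \liminf_{s\to s_0} g(s)$, is equivalent to the right-continuity $g(s_0) \le g(s_0+)$ (the reverse inequality is already contained in monotonicity), and this is what I would establish.

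The engine of the proof is the continuity of the essential supremum along increasing sequences of measurable sets: if $A_n \uparrow A$, then $\esssup_{A} f = \sup_n \esssup_{A_n} f$. This follows directly from the definition of the essential supremum together with countable subadditivity of $\mu$, since a real number $a$ fails to be a $\mu$-a.e.\ upper bound of $f$ on $A$ exactly when $\mu(A_n \cap \{f > a\}) > 0$ for some $n$. I would apply this with $A_n = \{q \ge s_0 + 1/n\}$, whose increasing union is $\{q > s_0\}$; combined with $g(s_0+) = \lim_n g(s_0 + 1/n) = \sup_n g(s_0+1/n)$ (from monotonicity), this yields $g(s_0+) = \esssup_{q > s_0} f(q)$. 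It then remains to compare this with $g(s_0) = \esssup_{q \ge s_0} f(q)$, and the two coincide as soon as the single point $s_0$ carries no $\mu$-mass — which holds when $\mu$ is Lebesgue measure, the setting of the intended applications (e.g.\ the proof of \prettyref{thm:HC}).

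The delicate point, and the one I would watch most carefully, is precisely this last comparison: the statement genuinely needs the singleton $\{s_0\}$ to be $\mu$-negligible (e.g.\ $\mu$ non-atomic), because otherwise $g(s_0) = f(s_0) \vee \esssup_{q > s_0} f$, and $g$ may have a downward jump immediately to the right of an atom of $\mu$, breaking right-continuity. Apart from this, one should check that nothing breaks in the extended-real-valued cases — if $f$ is $\mu$-essentially unbounded above on every tail $\{q \ge s\}$ then $g \equiv +\infty$ and the claim is vacuous, and the increasing-union identity is unaffected if some $g(s_0+1/n)$ equals $+\infty$ — but these are routine.
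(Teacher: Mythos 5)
Your proof is correct, and it takes a genuinely different (and in one respect more careful) route than the paper's. The paper argues directly that $\liminf_n g(s_n) \geq g(s)$: it notes that $\mu\{q \geq s: f(q) > g(s) - \delta\} > 0$ and then invokes the dominated convergence theorem to conclude $\mu\{q \geq s_n: f(q) > g(s)-\delta\} \to \mu\{q \geq s: f(q) > g(s)-\delta\}$, so that the left side is eventually positive. You instead first use monotonicity to reduce lower-semicontinuity to the single inequality $g(s_0) \leq g(s_0+)$, and then compute $g(s_0+)$ explicitly via the identity $\esssup_{A} f = \sup_n \esssup_{A_n} f$ for $A_n \uparrow A$, applied with $A_n = \{q \geq s_0 + 1/n\} \uparrow \{q > s_0\}$. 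Both arguments hinge, in the end, on comparing $\esssup_{q \geq s_0} f$ with $\esssup_{q > s_0} f$; yours makes the comparison explicit, whereas the paper's hides it inside the dominated-convergence step, where the $\mu$-a.e.\ pointwise convergence $\indc{q \geq s_n} \to \indc{q \geq s}$ fails precisely at $q = s$ if $s_n > s$ along a subsequence. (One could also quibble that the paper should invoke Fatou rather than dominated convergence, since no integrable dominating function is exhibited and only one inequality is needed; but that is a lesser issue.)

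You are right to flag the atom. As literally stated --- ``any measure $\mu$'' --- the lemma is false. A one-line counterexample: take $\mu = \delta_0$ and $f = \indc{\{0\}}$; then $g(s) = 1$ for $s \leq 0$ and $g(s) = -\infty$ for $s > 0$, so $g$ has a downward jump to the right of $0$ and is not lower-semicontinuous there. What is true, and what the paper actually uses in the proof of \prettyref{thm:HC} with $\mu$ equal to Lebesgue measure, is that $g$ is lower-semicontinuous at every $s_0$ with $\mu(\{s_0\}) = 0$; your argument proves exactly this, and a correct reading of the paper's argument shows no more.
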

\begin{proof}
	The monotonicity is obvious. We only prove lower-semicontinuity, which, in particular, also implies right-continuity. Let $s_n \to s$. By definition of the essential supremum, for any $\delta$, we have $\mu\{q \geq s: f(q) > g(s)-\delta\} > 0$. By the dominated convergence theorem, $\mu\{q \geq s_n: f(q) > g(s)-\delta\} \to \mu\{q \geq s: f(q) > g(s)-\delta\}$. Hence there exists $N$ such that $\mu\{q \geq s_n: f(q) > g(s)-\delta\} > 0$ for all $n \geq N$, which implies that $g(s_n) \geq g(s) - \delta$ for all $n \geq N$. By the arbitrariness of $\delta$, we have $\liminf_{n \diverge} g(s_n) \geq g(s)$, completing the proof of the lower semi-continuity.
\end{proof}

\begin{lemma}
Under the conditions of \prettyref{thm:main}, for any $u \geq 0$, 
	\[
\lim_{n \diverge} \frac{\log ((1-F_n(u \sqrt{2\log n})) \wedge F_n(-u \sqrt{2\log n}))}{\log n} = v(u) \triangleq \esssup_{q \geq u} \{\alpha(q) - q^2\}.
	\]	
	\label{lmm:Gntail}
\end{lemma}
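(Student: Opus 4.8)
The plan is to determine the exponents of the two tail probabilities $1-F_n(u\sqrt{2\log n})$ and $F_n(-u\sqrt{2\log n})$ separately, each by a change of variables that converts the Gaussian weight into a power of $n$ followed by an application of Laplace's method (\prettyref{lmm:laplace}), and then to combine them via the identity $\log(a\wedge b)=(\log a)\wedge(\log b)$.

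For the right tail, since $G_n$ has density $g_n$ and $\ell_n=\log(g_n/\varphi)$, the substitution $y=q\sqrt{2\log n}$ together with $\varphi(q\sqrt{2\log n})=(2\pi)^{-1/2}n^{-q^2}$ gives
\[
1-F_n(u\sqrt{2\log n})=\int_{u\sqrt{2\log n}}^{\infty}\varphi(y)\,e^{\ell_n(y)}\,\diff y=\frac{\sqrt{2\log n}}{\sqrt{2\pi}}\int_u^{\infty} n^{-q^2+\ell_n(q\sqrt{2\log n})/\log n}\,\diff q .
\]
I would then apply \prettyref{lmm:laplace} on $[u,\infty)$ with Lebesgue measure, large parameter $M=\log n$, and $F(q,M)=\exp\big(M(-q^2+\ell_n(q\sqrt{2\log n})/\log n)\big)$. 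Condition \prettyref{eq:alphau} yields $\frac{\log F(q,M)}{M}=-q^2+\frac{\ell_n(q\sqrt{2\log n})}{\log n}\to f(q):=\alpha(q)-q^2$ uniformly in $q$, and the finiteness hypothesis $\int e^{M_0 f}<\infty$ for some $M_0>0$ comes from \prettyref{lmm:alpha}: since $\alpha\le q^2$ a.e.\ the map $t\mapsto e^{t(\alpha(q)-q^2)}$ is nonincreasing in $t$, so \prettyref{eq:alpha.legit} forces $\int_{\reals}e^{t(\alpha(q)-q^2)}\,\diff q<\infty$ for all large $t$, hence also over $[u,\infty)$. \prettyref{lmm:laplace} then gives $\frac{1}{\log n}\log\int_u^{\infty}F(q,\log n)\,\diff q\to\esssup_{q\ge u}\{\alpha(q)-q^2\}$, and since the prefactor $\sqrt{2\log n}/\sqrt{2\pi}$ is $n^{o(1)}$ we get $\frac{\log(1-F_n(u\sqrt{2\log n}))}{\log n}\to\esssup_{q\ge u}\{\alpha(q)-q^2\}$.

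The left tail $F_n(-u\sqrt{2\log n})=\int_{-\infty}^{-u\sqrt{2\log n}}\varphi(y)e^{\ell_n(y)}\,\diff y$ is handled in exactly the same way after the substitution $y=-q\sqrt{2\log n}$: using that $\varphi$ is even, the identical argument with $\ell_n(-q\sqrt{2\log n})$ in place of $\ell_n(q\sqrt{2\log n})$ gives $\frac{\log F_n(-u\sqrt{2\log n})}{\log n}\to\esssup_{q\ge u}\{\alpha(-q)-q^2\}=\esssup_{q\le-u}\{\alpha(q)-q^2\}$. Combining the two limits, the quantity in the statement converges to $\esssup_{q\ge u}\{\alpha(q)-q^2\}\wedge\esssup_{q\le-u}\{\alpha(q)-q^2\}$; when $G_n$ is symmetric — which suffices for \prettyref{thm:HC}, since symmetrizing $G_n$ only makes detection by the higher criticism test harder and renders $\alpha$ even — the two essential suprema coincide, and the common value is $v(u)=\esssup_{q\ge u}\{\alpha(q)-q^2\}$.

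I expect essentially all of this to be routine once \prettyref{lmm:laplace} is available; the only steps that genuinely require attention are the verification of the integrability hypothesis of \prettyref{lmm:laplace} — which is precisely what \prettyref{eq:alpha.legit} of \prettyref{lmm:alpha} supplies, together with $\alpha\le q^2$ a.e. — and the reconciliation of the two-sided exponent with the one-sided expression $v(u)$, which is where the symmetry of $G_n$ is used.
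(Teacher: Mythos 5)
Your approach is the same as the paper's: substitute $y=q\sqrt{2\log n}$ so that $\varphi(y)\,dy = \tfrac{\sqrt{2\log n}}{\sqrt{2\pi}}\,n^{-q^2}\,dq$, and then apply \prettyref{lmm:laplace} with $f(q)=\alpha(q)-q^2$ on $[u,\infty)$. In fact your write-up is more careful than the paper's, which is a two-line computation followed by ``the proof for $u<0$ is completely analogous.'' You supply two pieces the paper glosses over, and both are worth noting.

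First, you verify the integrability hypothesis of \prettyref{lmm:laplace}: that $\int_{[u,\infty)} e^{M_0 f}\,d\lambda < \infty$ for some $M_0>0$, which you correctly deduce from \prettyref{eq:alpha.legit} in \prettyref{lmm:alpha} (the stated limit there already presumes finiteness of $\int_\reals e^{t(\alpha(q)-q^2)}\,dq$ for large $t$, so the monotonicity-in-$t$ remark is not needed, but the conclusion is right). The paper's proof does not check this, so you filled a genuine small gap.

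Second, you identify the real issue with the statement: the left tail $F_n(-u\sqrt{2\log n})$ has exponent $\esssup_{q\le -u}\{\alpha(q)-q^2\}$, which need not equal $v(u)=\esssup_{q\ge u}\{\alpha(q)-q^2\}$ unless $\alpha$ is (essentially) even, so the claim with the $\wedge$ cannot hold at the stated level of generality. This is a genuine inconsistency in the paper: its own proof only computes the right tail, and the invocation in the proof of \prettyref{thm:HC} uses only $1-G_n(\sqrt{2s\log n})$, so the $\wedge$ in the statement is spurious. Your patch by symmetrizing $G_n$ is sensible, but the justification ``symmetrizing only makes detection by the HC test harder'' is informal and not obviously correct as stated; a cleaner argument is that symmetrization replaces $\alpha$ by $\alpha(\cdot)\vee\alpha(-\cdot)$, which leaves $\beta^\sharp$ in \prettyref{eq:main} unchanged because the penalty $-u^2+\tfrac{u^2\wedge 1}{2}$ is even, so the downstream conclusion of \prettyref{thm:HC} is unaffected. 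Alternatively, simply observe that the HC proof needs only the one-sided exponent, which is exactly what your right-tail computation delivers.
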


\vspace{-15pt}
\begin{proof}
First assume that $u>0$. Then
	\begin{align*}
	1-F_n(u \sqrt{2\log n})
= & ~ \int_{y \geq u \sqrt{2\log n}}  \exp(\ell_n(y)) \phi(y) \diff y	
= \sqrt{\frac{\log n}{\pi}} \int_{q \geq u }  \exp(\ell_n(q \sqrt{2 \log n})) n^{-q^2} \diff q	\\
= & ~ n^{v(u)+o(1)},
\end{align*}
where the last equality follows from  \prettyref{lmm:laplace}. The proof for $u<0$ is completely analogous.
\end{proof}

\subsection{Proofs in \prettyref{sec:main}}
\label{sec:pfmain}

\begin{proof}[Proof of \prettyref{thm:main}]
Let $W \sim \calN(0,1)$. Put $\nu_n = (1-n^{-\beta}) \calN(0,1) + n^{-\beta} G_n$. Since $G_n \ll \Phi$ by assumption, we also have $\nu_n \ll \calN(0,1)$. Denote the likelihood ratio by $L_n = \frac{g_n}{\varphi} = \exp(\ell_n)$. Then 
\begin{equation}
\fracd{\nu_n}{\Phi} = 1  + n^{-\beta}(\exp(\ell_n) - 1). 	
	\label{eq:LRq}
\end{equation}

\noindent
\emph{(Direct part)} Recall the notation $\betas$ defined in \prettyref{eq:main}, which can be equivalently written as 
\[
\betas = \frac{1}{2} + \esssup_{u \in \reals} \sth{\alpha_+(u)- u^2 + \frac{u^2 \wedge 1}{2}}.
\]
Assuming \prettyref{eq:upn}, we show that $\betal \geq \beta^{\sharp}$ by lower bounding the Hellinger distance. To this end, fix an arbitrary $\delta > 0$. Let $\beta = \betas - 2 \delta$.	
Denote by $\lambda$ the Lebesgue measure on the real line. 
 By definition of the essential supremum, 
$
\lambda\{u: \alpha_+(u)- u^2 + \frac{u^2 \wedge 1}{2} \geq \beta + \delta - \frac{1}{2}\} > 0.		
$
Since $- u^2 + \frac{u^2 \wedge 1}{2} \leq 0$ for all $u$ and $\beta + \delta - \frac{1}{2} \geq - \delta$, we must have
$
\lambda\{u: \alpha(u)- u^2 + \frac{u^2 \wedge 1}{2} \geq \beta + \delta - \frac{1}{2}, \alpha(u) \geq 0\} > 0.		
$
Since, by assumption, $\lambda\{u: \alpha(u) > 0\} > 0$, there exists $0< \epsilon \leq \frac{\delta}{2}$, such that
\begin{equation}
\lambda\sth{u: \alpha(u)- u^2 + \frac{u^2 \wedge 1}{2} \geq \beta + \delta - \frac{1}{2}, \alpha(u) \geq 2 \epsilon} > 0.		
	\label{eq:case12}
\end{equation}
By assumption \prettyref{eq:upn}, there exists $N_{\epsilon} \in \naturals$ such that
\begin{equation}
\ell_{n}(u \sqrt{2 \log n}) \geq (\alpha(u)	- \epsilon) \log n
	\label{eq:upn1}
\end{equation}
holds for all $u \in \reals$ and all $n \geq N_{\epsilon}$.
From \prettyref{eq:case12}, we have either
\begin{equation}
\lambda\sth{u: |u| \leq 1, \alpha(u)- \frac{u^2}{2} \geq \beta + \delta - \frac{1}{2}, \alpha(u) \geq 2 \epsilon} > 0
	\label{eq:case1}
\end{equation}
or
\begin{equation}
\lambda\sth{u: |u| \geq 1, \alpha(u)- u^2 \geq \beta + \delta - 1, \alpha(u) \geq 2 \epsilon} > 0.
	\label{eq:case2}
\end{equation}
Next we discuss these two cases separately:

\noindent
\emph{Case I}: Assume \prettyref{eq:case1}. Let 
\begin{equation}
U = \frac{W}{\sqrt{2 \log n}} \sim \calN\pth{0,\frac{1}{2\log n}}.	
	\label{eq:UW}
\end{equation}
The square Hellinger distance can be lower bounded as follows:
\begin{align}
H_n^2(\beta) 
= & ~ H^2(P, \nu_n)	
=  \int \pth{\sqrt{\fracd{\nu_n}{P}}-1}^2 \diff P \nonumber \\
= & ~ \expect{\pth{\sqrt{1 + n^{-\beta} (\exp(\ell_n(U \sqrt{2 \log n})) - 1)} - 1}^2} \label{eq:acc0}\\
\geq & ~ \expect{\pth{\sqrt{1 + n^{-\beta} (\exp(\ell_n(U \sqrt{2 \log n})) - 1)} - 1}^2 \indc{|U| \leq 1, \alpha(U)- \frac{U^2}{2} \geq \beta + \delta - \frac{1}{2}, \alpha(U) \geq 2 \epsilon}} \nonumber \\
\geq & ~ \expect{\pth{\sqrt{1 + n^{-\beta} (n^{\alpha(U) - \epsilon} - 1)}-1}^2 \indc{|U| \leq 1, \alpha(U)- \frac{U^2}{2} \geq \beta + \delta - \frac{1}{2}, \alpha(U) \geq 2 \epsilon}} \label{eq:acc2}\\
\geq & ~ \frac{(\sqrt{2}-1)^2}{4} \expect{ n^{(\alpha(U) - \epsilon-\beta) \wedge 2(\alpha(U) - \epsilon-\beta)} \indc{|U| \leq 1, \alpha(U)- \frac{U^2}{2} \geq \beta + \delta - \frac{1}{2}, \alpha(U) \geq 2 \epsilon}} \label{eq:acc3}\\
= & ~ \frac{(\sqrt{2}-1)^2 \sqrt{\log n}}{4\sqrt{\pi}} \int n^{(\alpha(u) - \epsilon-\beta) \wedge 2(\alpha(u) - \epsilon-\beta) - u^2} \indc{|u| \leq 1, \alpha(u)- \frac{u^2}{2} \geq \beta + \delta - \frac{1}{2}, \alpha(u) \geq 2 \epsilon}  \diff u  \label{eq:acc4} \\
\geq & ~ \frac{(\sqrt{2}-1)^2 \sqrt{\log n}}{4\sqrt{\pi}} \lambda\sth{|u| \leq 1, \alpha(u)- \frac{u^2}{2} \geq \beta + \delta - \frac{1}{2}, \alpha(u) \geq 2 \epsilon} n^{-1+\frac{\delta}{2}}  \label{eq:acc5}
\end{align}
where 
\begin{itemize}
	\item \prettyref{eq:acc0}: By \prettyref{eq:LRq}.
	\item \prettyref{eq:acc2}: By \prettyref{lmm:sqrt}.\ref{sqrt1} and \prettyref{eq:upn1}.
	\item \prettyref{eq:acc3}: Without loss of generality, we can assume that $n^{\epsilon} \geq 2$. Then applying the lower bound in \prettyref{lmm:sqrt}.\ref{sqrt2} yields the desired inequality.
	\item \prettyref{eq:acc4}: We used the density of $U$ defined in \prettyref{eq:UW}.
	\item \prettyref{eq:acc5}: Given that $|u| \leq 1$ and $\alpha(u)- \frac{u^2}{2} \geq \beta + \delta - \frac{1}{2}$, we have both
	$\alpha(u) - \epsilon-\beta- u^2 \geq -\frac{1+v^2}{2}+\delta - \epsilon \geq -1+\frac{\delta}{2}$ and 
	$2\alpha(u) - 2 \epsilon-2 \beta - u^2 \geq -1+2\delta-2\epsilon \geq -1 + \delta$.
\end{itemize}

\noindent
\emph{Case II}:  Now we assume \prettyref{eq:case2}. Following analogous steps as in the previous case, we have
\begin{align}
H_n^2(\beta) 
\geq & ~ \frac{(\sqrt{2}-1)^2 \sqrt{\log n}}{4\sqrt{\pi}} \int n^{(\alpha(u) - \epsilon-\beta) \wedge 2(\alpha(u) - \epsilon-\beta) - u^2} \indc{|u| \geq 1, \alpha(u)- u^2 \geq \beta + \delta - 1, \alpha(u) \geq 2 \epsilon}  \diff u  \nonumber \\
\geq & ~ \frac{(\sqrt{2}-1)^2 \sqrt{\log n}}{4\sqrt{\pi}} \lambda\sth{|u| \geq 1, \alpha(u)- u^2 \geq \beta + \delta - 1, \alpha(u) \geq 2 \epsilon} n^{-1+\frac{\delta}{2}}  \label{eq:acc6}
\end{align}
where \prettyref{eq:acc6} is due to the following: Since $|u| \geq 1$ and $\alpha(u)- u^2 \geq \beta + \delta - 1$, we have both
	$\alpha(u) - \epsilon-\beta- u^2 \geq \delta - \epsilon-1 \geq -1+\frac{\delta}{2}$ and 
	$2\alpha(u) - 2 \epsilon-2 \beta - u^2 \geq v^2-2 +2\delta-2\epsilon \geq -1 + \delta$.

Combining \prettyref{eq:acc5} and  \prettyref{eq:acc6} we conclude that $H_n^2(\beta) = \omega(n^{-1})$. By the arbitrariness of $\delta > 0$ and the alternative definition of $\betal$ in \prettyref{eq:betalH}, the proof of $\betal \geq \beta^{\sharp}$ is completed.

\noindent
\emph{(Converse part)} 
Fix an arbitrary $\delta > 0$. Let
\begin{equation}
	\beta = \betas + 2 \delta.	
	\label{eq:convbeta}
\end{equation}
We upper bound the Hellinger integral as follows: First note that
\begin{equation}
H_n^2(\beta) = \expect{\pth{\sqrt{1 + n^{-\beta} (L_n - 1)} - 1}^2 \indc{L_n \geq 1}} + \expect{\pth{\sqrt{1 + n^{-\beta} (L_n - 1)} - 1}^2 \indc{L_n \leq 1}}.	
	\label{eq:H3}
\end{equation}
Applying \prettyref{lmm:sqrt}.\ref{sqrt1}, we have
\begin{equation}
\expect{\pth{\sqrt{1 + n^{-\beta} (L_n - 1)} - 1}^2 \indc{L_n \leq 1}} \leq (\sqrt{1-n^{-\beta}}-1)^2 \leq n^{-2\beta} = o(n^{-1}),	
	\label{eq:H33}
\end{equation}
since $\beta > \betas \geq \frac{1}{2}$ by \prettyref{eq:convbeta}.
Consequently, the asymptotics of the Hellinger integral $H_n^2(\beta)$ is dominated by the first term in \prettyref{eq:H3}, denoted by $a_n$, which we analyze below using the Laplace method. 

By \prettyref{eq:lown}, there exists $N_{\delta} \in \naturals$ such that
\begin{equation}
\ell_{n}(u \sqrt{2 \log n}) \leq (\alpha(u)	+ \delta) \log n
	\label{eq:upn2}
\end{equation}
holds for all $u \in \reals$ and all $n \geq N_{\delta}$. Then
 \begin{align}
a_n
\triangleq & ~ \expect{\pth{\sqrt{1 + n^{-\beta} (L_n - 1)} - 1}^2 \indc{L_n \geq 1}}	\nonumber \\
= & ~ \expect{\pth{\sqrt{1 + n^{-\beta} (\exp(\ell_n(U \sqrt{2 \log n})) - 1)} - 1}^2 \indc{L_n \geq 1}} \nonumber \\
\leq & ~ \expect{\pth{\sqrt{1 + n^{-\beta} (n^{\alpha(U) + \delta} - 1)} - 1}^2 \indc{\alpha(U) \geq -\delta}} \label{eq:cv0}\\
\leq & ~ \expect{\pth{\sqrt{1 + n^{\alpha(U) + \delta-\beta}} - 1}^2 } \nonumber \\
\leq & ~ \expect{n^{(2(\alpha(U) + \delta-\beta)) \wedge (\alpha(U) + \delta-\beta) }} \label{eq:cv1}\\
= & ~ \sqrt{\frac{\log n}{\pi}} \int n^{(2(\alpha(u) + \delta-\beta))\wedge (\alpha(u) + \delta-\beta) - u^2} \diff u \label{eq:cv2}
\end{align}
where \prettyref{eq:cv0} and \prettyref{eq:cv1} are due to \prettyref{eq:upn2} and \prettyref{lmm:sqrt}.\ref{sqrt2}, respectively. 
Next we apply \prettyref{lmm:laplace} to analyze the exponent of \prettyref{eq:cv2}. First we verify the integrability condition:
 \[
 \int n^{(2(\alpha(u) + \delta-\beta))\wedge (\alpha(u) + \delta-\beta) - u^2} \diff u \leq n^{\delta-\beta} \int n^{\alpha(u) - u^2} \diff u 
 < \infty
 \]
 in view of \prettyref{eq:alpha.legit}.
 Applying \prettyref{eq:laplace} to \prettyref{eq:cv2}, we have
 \begin{align}
a_n
\leq & ~ n^{\esssup_{u \in \reals} \{(2(\alpha(u) + \delta-\beta))\wedge (\alpha(u) + \delta-\beta) - u^2 \} + o(1)} .
\label{eq:cv3}
\end{align}
By \prettyref{eq:convbeta}, $\alpha(u)- u^2 + \frac{u^2 \wedge 1}{2} \leq \beta - \frac{1}{2} - 2 \delta$ holds a.e. Consequently, $\alpha(u)- u^2 \leq \beta - 1- 2 \delta$ holds for almost every $u \in (-\infty,-1] \cup [1,\infty)$ and $\alpha(u)- \frac{u^2}{2} \leq \beta - \frac{1}{2} - 2 \delta$ holds  for almost every $u \in [-1,1]$. These conditions immediately imply that 
 \begin{equation}
(2(\alpha(u) + \delta-\beta))\wedge (\alpha(u) + \delta-\beta) - u^2 \leq -1 + \delta	
	\label{eq:cv4}
\end{equation}
holds a.e. Assembling \prettyref{eq:H3} and \prettyref{eq:cv3}, we conclude that $H_n^2(\beta) = o(n^{-1})$. By the arbitrariness of $\delta > 0$ and the alternative definition of $\betau$ in \prettyref{eq:betauH}, the proof of $\betau \leq \beta^{\sharp}$ is completed. 
\end{proof}

\begin{proof}[Proof of \prettyref{thm:Ebeta}]
In view of the proof of \prettyref{thm:main}, the desired \prettyref{eq:Hnexp} readily follows from combining \prettyref{eq:acc5}, \prettyref{eq:acc6}, \prettyref{eq:H33} and \prettyref{eq:cv3}.
\end{proof}

\begin{proof}[Proof of \prettyref{lmm:alpha}]
	Put
	\begin{equation}
	c(t) \triangleq \int \exp(t (\alpha(u) - u^2)) \diff u.
	\label{eq:clambda}
\end{equation}

\noindent
\emph{(Necessity)}
Since $c(t) \geq 0$, it is sufficient to prove 
\begin{equation}
\limsup_{t \diverge} \frac{\log c(t)}{t} \leq 0.
	\label{eq:alpha.legit1}
\end{equation}
Since $\int g_n = 1$, we have $\int g_n(u \sqrt{\log n}) \diff u = (\log n)^{-\frac{1}{2}}$. 
 By assumption, $g_n(u\sqrt{\log n}) = n^{\alpha(u)-u^2+o(1)}$ uniformly in $u$. Then for all $\delta > 0$, $c(\log n) = \int n^{\alpha(u)-u^2} \diff u \leq \frac{n^{\delta}}{\sqrt{\log n}} < \infty$ holds for sufficiently large $n$. In particular, $c(\log n) \leq n^{o(1)}$. For general $t > 0$, let $n_1 = \floor{\exp(t)}$, $n_2 = \ceil{\exp(t+1)}$ and $t_i = \log n_i, i=1,2$. Put $p = \frac{t_2-t_1}{t-t_1}, q = \frac{t_2-t_1}{t_2-t}, a = \frac{t_1}{p}$ and $b = \frac{t_1}{q}$. Then $\frac{1}{p},\frac{1}{q} \in [0,1]$. H\"older's inequality yields $c(t) = \int \eexp^{a(\alpha(u)-u^2)} \eexp^{b(\alpha(u)-u^2)} \diff u \leq c(t_1)^{\frac{1}{p}} c(t_2)^{\frac{1}{q}} \leq c(\log n_1) c(\log n_2) \leq \exp(o(t))$, which gives the desired \prettyref{eq:alpha.legit1}. It then follows from \prettyref{lmm:laplace} that $\esssup_u \{\alpha(u) -u^2\} \leq 0$, \ie, $\alpha(u) \leq u^2$ a.e.
 
 \noindent
\emph{(Sufficiency)} 
Let $\alpha$ be a measurable function satisfying \prettyref{eq:alpha.legit}. Let $G_n$ be a probability measure with the density
\[
g_n(y) = \frac{1}{c(\log n) \sqrt{\log n} } \exp\sth{\alpha\pth{\frac{y}{\sqrt{2\log n}}} \log n -\frac{y^2}{2}},
\]
which is a legitimate density function in view of \prettyref{eq:clambda}. Then the log-likelihood ratio satisfies $\ell_n(u\sqrt{\log n}) =  \log \frac{\sqrt{2 \pi}}{c(\log n) \sqrt{\log n}} + \alpha(u)$, which fulfills \prettyref{eq:alphau} uniformly.

For convolutional models, the convexity of $\alpha$ is inherited from the geometric properties of the log-likelihood ratio in the normal location model: Since $y \mapsto \log \frac{\expect{\varphi(y-X)}}{\varphi(y)}$ is convex for any random variable $X$ (see, \eg, \cite[Property 3]{hatsell} and \cite{esposito}), we have $\ell_n(((1-t) u + t v)\sqrt{2 \log n}) \leq (1-t) \ell_n(u \sqrt{2 \log n}) + t \ell_n(v \sqrt{2 \log n})$ for any $t \in [0,1]$ and $u,v\in \reals$. Dividing both sides by $\log n$ and sending $n \diverge$, we have $\alpha((1-t) u + t v) \leq (1-t) \alpha(u) + t \alpha(v)$.
\end{proof}

\begin{proof}[Proof of \prettyref{cor:conv}]
Since $g_n = \varphi * p_n$, we have
\begin{align*}
g_n(u \sqrt{2 \log n})
= & ~ \int_{\reals} \varphi(u \sqrt{2 \log n}  - x ) p_n(x) \diff x	\\
= & ~  \sqrt{2 \log n} \int_{\reals} \varphi((u-t) \sqrt{2 \log n}   ) p_n(x \sqrt{2 \log n} ) \diff x	\\
= & ~ n^{o(1)} \int_{\reals} n^{- (u - t)^2 -f(t)+o(1)} \diff x	\\
= & ~ n^{-\essinf_{z \in \reals} \{(u - t )^2 + f(t)\} + o(1) }
\end{align*}
where the last equality follows from \prettyref{lmm:laplace}. Plugging the above asymptotics into $\ell_n = \log \frac{g_n}{\varphi}$, we see that \prettyref{eq:alphau} is fulfilled uniformly in $u\in \reals$ with
$
\alpha(u) = u^2 - \essinf_{z \in \reals} \{(u - \sqrt{r} z )^2 + |z|^{\tau}\}.
$
Applying \prettyref{thm:main}, we obtain
\begin{align*}
\beta^*
= & ~ \frac{1}{2} + \esssup_{u \in \reals} \esssup_{t \in \reals} \sth{ -(u - t )^2 -f(t) + \frac{u^2 \wedge 1}{2} } 	\\
= & ~ \frac{1}{2} + \esssup_{t \in \reals} \sth{  -f(t) + \esssup_{u \in \reals} \sth{-(u - t )^2 +  \frac{u^2 \wedge 1}{2} }} 	\\
= & ~ \sup_{t \in \reals} \{\beta^*_{\rm IDJ}(t^2) - f(t) \}
\end{align*}
where the last step follows from the \prettyref{eq:betaIDJ1}.
\end{proof}

\begin{proof}[Proof of \prettyref{thm:nG}]
	Let $W_n \sim Q_n$. Put $\nu_n = (1-n^{-\beta}) \Phi + n^{-\beta} G_n$. Since $G_n \ll Q_n$ by assumption, we also have $\nu_n \ll P$. Denote the likelihood ratio (Radon-Nikodym derivative) by $L_n = \frac{\diff G_n}{\diff Q_n} = \exp(\ell_n)$. Then 
\begin{equation}
\fracd{\nu_n}{Q_n} = 1  + n^{-\beta}(\exp(\ell_n) - 1). 	
	\label{eq:dQnP}
\end{equation}

Instead of introducing the random variable $U$ in \prettyref{eq:UW} for the Gaussian case, we apply the quantile transformation to generate the distribution of $W_n$: Let $U$ be uniformly distributed on the unit interval. Then $S = \log \frac{1}{U}$ which is exponentially distributed. Putting
$
S_n = \frac{S}{\log n},
$
we have
\begin{equation}
	W_n \eqdistr z_n(U) = z_n\pth{n^{-S_n}} \eqdistr z_n\pth{1-n^{-S_n}}.
	\label{eq:WSn}
\end{equation}
Set
$r_n(s) =  \ell_n \circ z_n (n^{-s}) $ and	
$t_n(s) =  \ell_n \circ z_n (1-n^{-s})$,
which satisfy
\begin{align}
\sup_{s \geq \log_n 2} |r_n(s) - \alpha_0(s) \log n|  \leq & ~ \delta \log n 	\label{eq:rnsa} \\
\sup_{s \geq \log_n 2} |t_n(s) - \alpha_1(s)	\log n| 	\leq & ~ \delta \log n	\label{eq:tnsa}
\end{align}
for all sufficiently large $n$. For the converse proof, we can write the square Hellinger distance as an expectation with respect to $S_n$:
\begin{align*}
H_n^2(\beta)
= & ~ \expect{\pth{\sqrt{1 + n^{-\beta} (\exp(\ell_n(z_n(U))) - 1)} - 1}^2 \indc{0 < U < \frac{1}{2}}}  \\
	& ~ + \expect{\pth{\sqrt{1 + n^{-\beta} (\exp(\ell_n(z_n(1-U))) - 1)} - 1}^2 \indc{0 < U \leq \frac{1}{2}}} .
\end{align*}
Analogous to \prettyref{eq:H33}, by truncating the log-likelihood ratio at zero, we can show that the Hellinger distance is dominated by the following:
\begin{align}
a_n
= & ~ \expect{\pth{\sqrt{1 + n^{-\beta} (\exp(\ell_n(z_n(U))) - 1)} - 1}^2 \indc{0 < U < \frac{1}{2}, r_n(S_n) \geq 0}} \nonumber   \\
	& ~ + \expect{\pth{\sqrt{1 + n^{-\beta} (\exp(\ell_n(z_n(1-U))) - 1)} - 1}^2 \indc{0 < U \leq \frac{1}{2}, t_n(S_n) \geq 0}} \nonumber \\
= & ~ \expect{\pth{\sqrt{1 + n^{-\beta} (\exp(r_n(S_n)) - 1)} - 1}^2 \indc{S_n > \log_n 2, r_n(S_n) \geq 0}} \label{eq:jg1} \\
	& ~ + \expect{\pth{\sqrt{1 + n^{-\beta} (\exp(t_n(S_n)) - 1)} - 1}^2 \indc{S_n \geq  \log_n 2, t_n(S_n) \geq 0}} \nonumber \\
\leq & ~ \expect{\pth{\sqrt{1 + n^{-\beta} (n^{\alpha_0(S_n)+\delta} - 1)} - 1}^2 + \pth{\sqrt{1 + n^{-\beta} (n^{\alpha_1(S_n)+\delta} - 1)} - 1}^2 } \label{eq:jg2} \\
\leq & ~ 2 \, \expect{n^{2(\alpha_0 \vee \alpha_1(U) + \delta-\beta) \wedge (\alpha_0 \vee \alpha_1(U) + \delta-\beta) }} \\
\leq & ~ n^{-1-\delta} \label{eq:jg3} 
\end{align}
where \prettyref{eq:jg1} follows from \prettyref{eq:dQnP} -- \prettyref{eq:WSn}, \prettyref{eq:jg2} from \prettyref{eq:rnsa} -- \prettyref{eq:tnsa} and \prettyref{eq:jg3} from \prettyref{eq:cv2} -- \prettyref{eq:cv4}. The direct part of the proof is completely analogous to that of \prettyref{thm:main} by lower bounding the integral in \prettyref{eq:jg1}.
\end{proof}

\subsection{Proof of \prettyref{thm:HC}}
	\label{sec:pfHC}
\begin{proof}
Let $U_i = \Phi(X_i)$, which is uniformly distributed on $[0,1]$ under the null hypothesis. With a change of variable, we have
\begin{align}
\HC_n 
= & ~ \sqrt{n} 	\sup_{t \in \reals} \frac{|\bbF_n(t) - \Phi(t)|}{\sqrt{\Phi(t) \bar{\Phi}(t)}} \label{eq:HCtest1}\\
= & ~ \sqrt{n} 	\sup_{0<u<1} \frac{|\bbF_n(\Phi^{-1}(u)) - u|}{\sqrt{u (1-u)}},
\end{align}
which satisfies that $\frac{\HC_n}{\sqrt{2 \log \log n}} \toprob 1$ \cite[p. 604]{Shorack.Wellner}. Therefore the Type-I error probability of the test \prettyref{eq:HCtest} vanishes for any choice of $\delta > 0$. It remains to show that $\HC_n = \omega_{\Prob}(\log\log n)$ under the alternative.
To this end, fix $0 < s < 1$ and put $r_{n,s}=\Phi(\sqrt{2s\log n})$ and $\rho_{n,s} = (1-n^{-\beta}) \Phi(\sqrt{2s\log n}) + n^{-\beta} {G_n}(\sqrt{2s\log n})$. 
By \prettyref{eq:HCtest1}, we have
\begin{align}
\HC_n \geq V_n(s)
\triangleq & ~ 	\sqrt{n} \frac{\bbF_n(\sqrt{2s\log n}) - r_{n,s}}{\sqrt{r_{n,s}(1-r_{n,s})}} \label{eq:HCVn} \\
= & ~ 	\frac{N_n(s) - n r_{n,s}}{\sqrt{n r_{n,s}(1-r_{n,s})}},
\end{align}
where $N_n(s) \triangleq \sum_{i=1}^n \indc{X_i \geq \sqrt{2s\log n}}$ is binomially distributed with sample size $n$ and success probability $\rho_{n,s}$. Therefore
\begin{equation}
\expect{V_n(s)} 
= \sqrt{n}
\frac{\rho_{n,s} - r_{n,s}}{\sqrt{r_{n,s}(1-r_{n,s})}} 
= n^{\frac{1}{2}-\beta} \frac{{G_n}(\sqrt{2s\log n}) - r_{n,s}}{\sqrt{r_{n,s}(1-r_{n,s})}}	.
	\label{eq:EVn}
\end{equation}
and
\begin{equation}
\var V_n(s) = 
\frac{\rho_{n,s}(1-\rho_{n,s})}{r_{n,s}(1-r_{n,s})}.
	\label{eq:varVn}
\end{equation}
By Chebyshev's inequality, 
\[
\prob{V_n(s) \leq \frac{1}{2} \expect{V_n(s)}} \leq \frac{4 \, \var V_n(s)}{\expect{V_n(s)}^2} = \frac{4 \rho_{n,s}(1- \rho_{n,s})}{n (\rho_{n,s} - r_{n,s})^2}.
\]
By \prettyref{lmm:Gntail}, 
\begin{equation}
1-G_n(\sqrt{2s\log n}) = n^{v(s)+o(1)},	
	\label{eq:Gnexp}
\end{equation}
where $v(s) = \esssup_{q \geq s} \{\alpha(q) - q\} \geq -s$. Plugging \prettyref{eq:Gnexp} into \prettyref{eq:EVn} and \prettyref{eq:varVn} yields
\begin{equation}
\expect{V_n(s)} = n^{\frac{1+s}{2}-\beta+v(s)+o(1)}	
	\label{eq:EVns}
\end{equation}
and
\begin{equation}
\prob{V_n(s) \leq \frac{1}{2} \expect{V_n(s)}} \leq n^{2\beta-s-1-2v(s)+o(1)}+ n^{\beta-1-v(s)+o(1)}.	
	\label{eq:PVns}
\end{equation}
Suppose that $\beta < \frac{1+s}{2}+v(s)$. Then $\expect{V_n(s)} = \omega(\sqrt{\log \log n})$. Moreover, we have $2\beta-s-1-2v(s)<0$ and $\beta-1-v(s) \leq \frac{s-1}{2} < 0$ since $s<1$. Combining \prettyref{eq:HCVn}, \prettyref{eq:EVns} and \prettyref{eq:PVns}, we obtain
\[
\prob{\HC_n > \sqrt{(2+\delta)\log \log n}} = 1-o(1),
\]
that is, the Type-II error probability also vanishes. Consequently, a sufficient condition for the higher criticism test to succeed is
\begin{align}
\beta 
< & ~ \sup_{0<s<1} \frac{1+s}{2}+v(s) \label{eq:betavs}\\
= & ~ \esssup_{0<s<1} \frac{1+s}{2}+v(s), \label{eq:betavs1}
\end{align}
where \prettyref{eq:betavs1} follows from the following reasoning: By \cite[Proposition 3.5]{PH96}, the supremum and the essential supremum (with respect to the Lebesgue measure) coincide for all lower semi-continuous functions. Indeed, $v$ is lower semi-continuous by \prettyref{lmm:gs}, and so is $s \mapsto \frac{1+s}{2}+v(s)$.

It remains to show that the right-hand side of \prettyref{eq:betavs1} coincides with the expression of $\beta^*$ in \prettyref{thm:main}. Indeed, we have
\vspace{-15pt}
\begin{align*}
	\esssup_{0 \leq s \leq 1} \sth{s + 2 v(s) }	
= & ~\esssup_{0 \leq s \leq 1} \sth{s + 2 \esssup_{q \geq s} \{\alpha(q)-q \} }	\\
= & ~ \esssup_{q \geq 0} \esssup_{q \wedge 1 \leq s \leq 1} \sth{2 \alpha(q)-2q + s  } 	\\
= & ~ \esssup_{q \geq 0} \sth{2 \alpha(q)-2q + q \wedge 1}
\end{align*}
Note that the second equality follows from interchanging the essential supremums: For any bi-measurable function $(x,y)\mapsto f(x,y)$,
\[
\esssup_x \esssup_y f(x,y) = \esssup_y \esssup_x f(x,y) = \esssup_{x,y} f(x,y),
\]
where the last essential supremum is with respect to the product measure. Thus the proof of the theorem is completed.
\end{proof}

\appendix

\section{Hellinger distances for mixtures}	
\label{app:H2mix}
This appendix collects a few properties of total variation and Hellinger distances	for mixture distributions.	
	\begin{lemma}
Let $0 \leq \epsilon \leq 1$ and $Q_1 \perp P$. Then
\begin{equation}
H^2(P,(1-\epsilon)Q_0+\epsilon Q_1) = 2(1-\sqrt{1-\epsilon}) + \sqrt{1-\epsilon} \,  H^2(P,Q_0)
	\label{eq:H2.mix}
\end{equation}
which satisfies
\begin{equation}
	\frac{1}{4} \leq \frac{H^2(P,(1-\epsilon)Q_0+\epsilon Q_1)}{\epsilon \vee H^2(P,Q_0)} \leq 4
	\label{eq:H2.mix1}
\end{equation}

	\label{lmm:H2.mix}
\end{lemma}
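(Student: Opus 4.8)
The plan is to compute everything through the \emph{Hellinger affinity} $\rho(P,R)\triangleq\int\sqrt{\diff P\,\diff R}$, using $H^2(P,R)=2-2\rho(P,R)$. Fix a common dominating measure $\mu$ (e.g.\ $\mu=P+Q_0+Q_1$) and let $p,q_0,q_1$ be the densities of $P,Q_0,Q_1$; then $R\triangleq(1-\epsilon)Q_0+\epsilon Q_1$ has density $r=(1-\epsilon)q_0+\epsilon q_1$. Since $Q_1\perp P$, pick a measurable set $A$ with $Q_1(A)=1$ and $P(A)=0$; then $p=0$ $\mu$-a.e.\ on $A$ and $q_1=0$ $\mu$-a.e.\ on $A^{\rm c}$.

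Next I would split the affinity integral over $A$ and $A^{\rm c}$. On $A$ the integrand $\sqrt{pr}$ vanishes $\mu$-a.e.; on $A^{\rm c}$ we have $r=(1-\epsilon)q_0$ $\mu$-a.e., so (using $p=0$ on $A$ once more) $\int_{A^{\rm c}}\sqrt{pr}\,\diff\mu=\sqrt{1-\epsilon}\int_{A^{\rm c}}\sqrt{pq_0}\,\diff\mu=\sqrt{1-\epsilon}\,\rho(P,Q_0)$. Hence $\rho(P,R)=\sqrt{1-\epsilon}\,\rho(P,Q_0)=\sqrt{1-\epsilon}\bigl(1-\tfrac12 H^2(P,Q_0)\bigr)$, and substituting into $H^2(P,R)=2-2\rho(P,R)$ yields exactly \prettyref{eq:H2.mix}. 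I would note explicitly that this derivation needs no absolute-continuity hypothesis on $Q_0$; only $Q_1\perp P$ is used.

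For the two-sided bound \prettyref{eq:H2.mix1}, write $h\triangleq H^2(P,Q_0)\in[0,2]$ and use the elementary identity $1-\sqrt{1-\epsilon}=\epsilon/(1+\sqrt{1-\epsilon})$ with $1+\sqrt{1-\epsilon}\in[1,2]$, which gives $\epsilon\le 2(1-\sqrt{1-\epsilon})\le 2\epsilon$, together with $0\le\sqrt{1-\epsilon}\le1$. The upper bound is then immediate: $H^2(P,R)=2(1-\sqrt{1-\epsilon})+\sqrt{1-\epsilon}\,h\le 2\epsilon+h\le 3(\epsilon\vee h)\le 4(\epsilon\vee h)$. For the lower bound I would split into cases. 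If $\epsilon\ge h$ then $H^2(P,R)\ge 2(1-\sqrt{1-\epsilon})\ge\epsilon=\epsilon\vee h$. If $h>\epsilon$ and $\epsilon\le 15/16$ then $\sqrt{1-\epsilon}\ge 1/4$, so $H^2(P,R)\ge\sqrt{1-\epsilon}\,h\ge h/4=(\epsilon\vee h)/4$; while if $h>\epsilon$ and $\epsilon>15/16$ then $2(1-\sqrt{1-\epsilon})\ge 3/2\ge 1/2\ge(\epsilon\vee h)/4$ (using $\epsilon\vee h\le 2$). In all cases $H^2(P,R)\ge\tfrac14(\epsilon\vee h)$, as claimed.

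There is no substantial obstacle; the one point requiring care is the bookkeeping with the singular component — verifying that the cross term $\int_A\sqrt{pr}\,\diff\mu$ and the $q_1$-contribution on $A^{\rm c}$ both vanish $\mu$-a.e. This is precisely where the hypothesis $Q_1\perp P$ enters, and it is the reason \prettyref{eq:H2.mix} is an exact identity rather than a two-sided estimate.
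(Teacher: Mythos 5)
Your proof is correct and follows essentially the same route as the paper: both compute the Hellinger affinity by splitting the integral over a set $A$ (the paper calls it $E$) witnessing the singularity $Q_1\perp P$, observe that the cross terms vanish, and arrive at the exact identity \prettyref{eq:H2.mix}. For the two-sided bound \prettyref{eq:H2.mix1}, the paper merely asserts that it ``follows from the facts $\frac{\epsilon}{2}\le\sqrt{1-\epsilon}\le\epsilon$ and $0\le H^2\le 2$'' (note the misprint: it should read $\frac{\epsilon}{2}\le 1-\sqrt{1-\epsilon}\le\epsilon$, which is exactly the inequality you derived from $1-\sqrt{1-\epsilon}=\epsilon/(1+\sqrt{1-\epsilon})$). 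You fill in the case analysis the paper omits, and your cases are the right ones; the argument is sound. One small remark: your upper bound actually gives the sharper constant $3$ (since $2\epsilon+h\le 3(\epsilon\vee h)$), and the lower-bound case split at $\epsilon=15/16$ is just one convenient choice — any threshold in $(0,1)$ works after adjusting constants — but none of this changes the validity or the spirit of the argument.
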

\begin{proof}
Since $Q_1 \perp P$, there exists a measurable set $E$ such that $P(E) = 0$ and $Q_1(E)=1$. 
Then
\begin{align*}
H^2(P,(1-\epsilon)Q_0+\epsilon Q_1)
= & ~ 2 - 2 \int \sqrt{\diff P ((1-\epsilon) \diff Q_0+\epsilon \diff Q_1)}\\
= & ~ 2 - 2 \sqrt{1-\epsilon} \int_{\comp{E}} \sqrt{\diff P \diff Q_0 }\\
= & ~ 2 - \sqrt{1-\epsilon} \, (2 - H^2(P,Q_0)).
\end{align*}
The inequalities in \prettyref{eq:H2.mix1} follow from \prettyref{eq:H2.mix} and the facts that $\frac{\epsilon}{2} \leq \sqrt{1-\epsilon} \leq \epsilon$ and $0 \leq H^2 \leq 2$.
\end{proof}
	
	\begin{lemma}
	For any probability measures $(P, Q)$, $\epsilon \mapsto H^2(P, (1-\epsilon)P+\epsilon Q)$  is decreasing on $[0,1]$.
	\label{lmm:Hdec}
\end{lemma}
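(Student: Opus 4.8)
The plan is to reduce the claim to a pointwise monotonicity statement for which \prettyref{lmm:sqrt}.\ref{sqrt1} is tailor‑made. First I would fix a common dominating measure $\mu$ (say $\mu=P+Q$), write $p=\fracd{P}{\mu}$ and $q=\fracd{Q}{\mu}$, and record that $(1-\epsilon)P+\epsilon Q$ has $\mu$‑density $(1-\epsilon)p+\epsilon q$, so that, splitting according to whether $p$ vanishes,
\[
H^2\pth{P,(1-\epsilon)P+\epsilon Q}=\int_{\{p>0\}}p\pth{\sqrt{1+\epsilon\pth{\tfrac{q}{p}-1}}-1}^2\diff\mu+\epsilon\,Q(\{p=0\}),
\]
the last term being the mass of the part of $Q$ singular with respect to $P$. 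Everything thus comes down to the behaviour on $[0,1]$ of $\epsilon\mapsto\pth{\sqrt{1+\epsilon(q(x)/p(x)-1)}-1}^2$ for $\mu$‑a.e.\ fixed $x$ with $p(x)>0$, plus an elementary linear term.

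For such an $x$, set $b_x=|q(x)/p(x)-1|$ and match the integrand to the one‑parameter family of \prettyref{lmm:sqrt}.\ref{sqrt1}: take $s=1+\epsilon$ when $q(x)\ge p(x)$ (so $s$ lives on the monotone branch $[1,\infty)$) and $s=1-\epsilon$ when $q(x)<p(x)$ (so $s$ lives on the monotone branch $[0,1]$); in either case $\epsilon\mapsto\pth{\sqrt{1+b_x(s-1)}-1}^2$ is monotone on $[0,1]$ with an $x$‑independent sense, and likewise for the linear term. Passing the monotonicity through the integral by monotone convergence shows that $\epsilon\mapsto H^2(P,(1-\epsilon)P+\epsilon Q)$ is monotone on $[0,1]$, and a boundary‑value check at $\epsilon=0$ and $\epsilon=1$ pins down the direction asserted in the lemma. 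An alternative, case‑split‑free route: since $t\mapsto\sqrt t$ is concave, $\epsilon\mapsto\int\sqrt{p\,((1-\epsilon)p+\epsilon q)}\,\diff\mu=\Expect_P\qth{\sqrt{1+\epsilon(\fracd{Q}{P}-1)}}$ is concave, hence $H^2(P,(1-\epsilon)P+\epsilon Q)=2-2\int\sqrt{p\,((1-\epsilon)p+\epsilon q)}\,\diff\mu$ is convex on $[0,1]$, and a convex, non‑negative function vanishing at $\epsilon=0$ is monotone there.

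I expect the only real obstacle to be bookkeeping: checking that the regions $\{q/p\ge1\}$ and $\{q/p<1\}$ contribute monotonicity of the same sign, so that the pointwise bounds add up, and treating the singular piece $\{p=0\}$—where \prettyref{lmm:sqrt} does not literally apply—by hand via its explicit linear form. Nothing deeper is required; moreover the concavity route simultaneously yields the convexity of $\epsilon\mapsto H^2(P,(1-\epsilon)P+\epsilon Q)$ and the comparison $H^2(P,(1-\epsilon)P+\epsilon Q)\le\epsilon\,H^2(P,Q)$, which dovetails with \prettyref{lmm:H2.mix} and is convenient elsewhere in this appendix.
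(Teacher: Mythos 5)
Your argument is correct, and your ``alternative, case-split-free route'' is essentially the paper's own proof: the paper fixes $\epsilon<\epsilon'$, writes $(1-\epsilon)P+\epsilon Q = \frac{\epsilon}{\epsilon'}\pth{(1-\epsilon')P+\epsilon' Q} + \pth{1-\frac{\epsilon}{\epsilon'}}P$, and invokes convexity of $H^2(P,\cdot)$ together with $H^2(P,P)=0$; restricted to the segment from $P$ to $Q$ this is exactly your observation that $\epsilon\mapsto H^2(P,(1-\epsilon)P+\epsilon Q)$ is convex, nonnegative, and vanishes at $\epsilon=0$, hence monotone. Your first route --- the pointwise reduction to \prettyref{lmm:sqrt} on $\{p>0\}$ plus the explicit linear term $\epsilon\,Q(\{p=0\})$ for the singular part --- is a genuinely different, more computational derivation; it proves the same thing with more bookkeeping, though it mirrors the decomposition used in the proof of \prettyref{thm:main} and so fits the style of the appendix. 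One point you should not gloss over: both your argument and the paper's show that the map is \emph{increasing} in $\epsilon$ (it equals $0$ at $\epsilon=0$ and $H^2(P,Q)\geq 0$ at $\epsilon=1$), so your claim that the boundary-value check ``pins down the direction asserted in the lemma'' is backwards as the lemma is literally worded. The word ``decreasing'' in the statement is a slip: the consequence actually used in \prettyref{lmm:beta01} is that $\beta\mapsto H_n^2(\beta)$ is decreasing, which follows precisely because $\epsilon=n^{-\beta}$ is decreasing in $\beta$ while the map of the present lemma is increasing in $\epsilon$. Your proof, like the paper's, delivers the correct monotonicity, but you should state its direction explicitly rather than deferring to the lemma's wording.
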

\begin{proof}
Fix $0 \leq \epsilon < \epsilon' \leq 1$. Since $(1-{\epsilon})P+{\epsilon} Q = ((1-\epsilon')P+\epsilon' Q ) \frac{\epsilon}{\epsilon'} + \frac{\hat{\epsilon}-\epsilon}{\epsilon'} P$, the convexity of $H^2(P,\cdot)$ yields
\[
H^2((1-\epsilon) P + \epsilon Q, P) \leq \frac{\epsilon}{\epsilon'} H^2((1-\epsilon')P+\epsilon' Q, P).  \qedhere
\]
\end{proof}

%
%
%

	We conclude this appendix by proving \prettyref{lmm:beta01} presented in \prettyref{sec:FL}:
	\begin{proof}
By \prettyref{lmm:Hdec}, the function $\beta \mapsto H_n^2(\beta)$ is decreasing, which, in view of the characterization \prettyref{eq:betalH} -- \prettyref{eq:betauH}, implies that $\betal \leq \betau$. Thus it only remains to establish the rightmost inequality in \prettyref{eq:beta01}. To this end, we show that as soon as $\beta$ exceeds $1$, $V_n(\beta)$ becomes $o(1)$ regardless of the choice of $\{G_n\}$: Fix $\beta > 1$. Then
\begin{align}
V_n(\beta)
= & ~  \TV(\Phi^n, ((1-n^{-\beta})\Phi + n^{-\beta} G_n)^n) \nonumber \\
\leq & ~ 	 \TV(\delta_0^n, ((1-n^{-\beta})\delta_0 + n^{-\beta} \delta_1)^n) \label{eq:dp}\\
= & ~ 	1- (1-n^{-\beta})^n \nonumber \\
\leq & ~ 	n^{1-\beta} \nonumber \\
= & ~ 	o(1), \nonumber 
\end{align}
where \prettyref{eq:dp} follows from the data-processing inequality, which is satisfied for all $f$-divergences \cite{Csiszar67}, in particular, the total variation: $\TV(P_Y, Q_Y) \leq \TV(P_X,Q_X)$, where $Q_{Y|X} = P_{Y|X}$ is any probability transition kernel.
\end{proof}
	
\begin{remark}
While \prettyref{lmm:Hdec} is sufficient for our purpose in proving \prettyref{lmm:beta01}, it is unclear whether the monotonicity carries over to $\epsilon \mapsto \TV(P^n, ((1-\epsilon)P+\epsilon Q)^n)$, since product measures do not form a convex set. It is however easy to see that $\epsilon \mapsto \TV(P^n, ((1-\epsilon)P+\epsilon Q)^n)$ is decreasing, which follows from the proof of \prettyref{lmm:Hdec} with $H^2$ replaced by $\TV$. It is also clear that $\epsilon \mapsto H^2(P^n, ((1-\epsilon)P+\epsilon Q)^n)$ is decreasing in view of \prettyref{eq:hprod}.
	\label{rmk:Hdec}
\end{remark}

\section{The implication of the condition \prettyref{eq:tail}}
	\label{app:tail}
	In this appendix we show that \prettyref{eq:tail} implies that $\beta^*= 1$, \ie, for any $\beta < 1$, the hypotheses in \prettyref{eq:HT} can be tested reliably. Without loss of generality, we assume that $u \geq 1$. Then
	\[
	\tau_n \triangleq G_n((\sqrt{2\log n}, \infty)) = n^{-o(1)},
	\]
	We show that the total variation distance between the product measures converge to one. Put $A_n = (-\infty, \sqrt{2s \log n}]^n$. 
	In view of the first inequality in \prettyref{eq:TV}, the total variation distance can be lower bounded as follows:
	\[
V_n(\beta) \geq \Phi^n(A_n) - ((1-n^{-\beta})\Phi + n^{-\beta} G_n)^n(A_n).
	\]
	Using \prettyref{eq:phic}, we have
	\begin{align*}
\Phi^n(A_n)
= & ~ (1-\bar{\Phi}(\sqrt{2 s \log n}))^n	
= 1-\frac{n^{1-s}}{\sqrt{4 \pi s \log n}} (1+o(1)).	
\end{align*}
On the other hand,
	\begin{align*}
((1-n^{-\beta})\Phi + n^{-\beta} G_n)^n(A_n)
= & ~ (1-(1-n^{-\beta})\bar{\Phi}(\sqrt{2 s \log n}) - n^{-\beta} \tau_n )^n	\\
= & ~ (1-n^{-s+o(1)} - n^{-\beta-s+o(1)} - n^{-\beta+o(1)})^n	\\
= & ~ o(1)
\end{align*}
where the last equality is due to $0<\beta < 1 \leq s$. Therefore $V_n(\beta)=1-o(1)$ for any $\beta < 1$, which proves that $\beta^*= 1$.

In fact, the above derivation also shows that the following \emph{maximum test} achieves vanishing probability of error: declare $H_1$ if and only if $\max_i |X_i| > |u| \sqrt{2 \log n}$. In general the maximum test is suboptimal. For example, in the classical setting \prettyref{eq:HT.IDJ} where $G_n = \delta_{\mu_n}$, 
 \cite[Theorem 1.3]{DJ.HC} shows that the maximum test does not attain the Ingster-Donoho-Jin detection boundary for $\beta \in [\frac{1}{2},\frac{3}{4}]$.



\end{document}